\def\BibTeX{{\rm B\kern-.05em{\sc i\kern-.025em b}\kern-.08em
    T\kern-.1667em\lower.7ex\hbox{E}\kern-.125emX}}
\newcommand{\cmark}{\ding{51}}\newcommand{\xmark}{\ding{55}}
\newcolumntype{R}{>{$}r<{$}} \newcolumntype{C}{>{$}c<{$}} \newcolumntype{L}{>{$}l<{$}} 
\newtheorem{theorem}{Theorem}
\newtheorem{definition}{Definition}
\newif\ifdisablenotes
\newif\ifdisablearxiv
\newif\ifenablepagenumbers
\newcommand{\jhnote}[1]{}
\newcommand{\borja}[1]{}
\newcommand{\gnote}[1]{}
\newcommand{\todo}[1]{}
\newcommand{\jhnote}[1]{\textcolor{blue}{Jamie: #1}}
\newcommand{\borja}[1]{\textcolor{green}{Borja: #1}}
\newcommand{\gnote}[1]{\textcolor{red}{Gio: #1}}
\newcommand{\todo}[1]{\textcolor{orange}{TODO: #1}}
\newcommand{\mypagestyle}{\thispagestyle{plain}\pagestyle{plain}}
\newcommand{\mypagestyle}{}
\newcommand{\mymaketitle}{\author{
\IEEEauthorblockN{Borja Balle\textsuperscript{*}}
\IEEEauthorblockA{\textit{DeepMind}}
\and
\IEEEauthorblockN{Giovanni Cherubin\textsuperscript{*}\textsuperscript{\dag}}
\IEEEauthorblockA{\textit{Microsoft Research}}
\and
\IEEEauthorblockN{Jamie Hayes\textsuperscript{*}}
\IEEEauthorblockA{\textit{DeepMind}}
}
\maketitle
\begingroup\renewcommand\thefootnote{*}
\footnotetext{Equal contribution}
\endgroup
\begingroup\renewcommand\thefootnote{\dag}
\footnotetext{Work done while at the Alan Turing Institute}
\endgroup
}
\newcommand{\main}[1]{\textcolor{blue}{S&P submission}: #1\\ \textcolor{blue}{END}}
\newcommand{\arxiv}[1]{\textcolor{blue}{Arxiv version}: #1\\ \textcolor{blue}{END}}
\renewcommand{\main}[1]{#1}
\renewcommand{\arxiv}[1]{}
\renewcommand{\main}[1]{}
\renewcommand{\arxiv}[1]{#1}
\newcommand{\ip}[2]{\langle #1, #2 \rangle}
\DeclareMathOperator{\argmax}{argmax}
\DeclareMathOperator{\argmin}{argmin}
\newcommand{\norm}[1]{\| #1 \|}
\newcommand{\cN}{\mathcal{N}}
\newcommand{\cU}{\mathcal{U}}
\newcommand{\Vol}{\mathsf{Vol}}
\newcommand{\model}{\ensuremath{\theta}\xspace}
\newcommand{\train}{\ensuremath{T}\xspace}
\newcommand{\Xset}{\ensuremath{\mathcal{X}}\xspace}
\newcommand{\Yset}{\ensuremath{\mathcal{Y}}\xspace}
\newcommand{\trainobjective}{\ensuremath{C}\xspace}
\newcommand{\trainobjectivepoint}{\ensuremath{c}\xspace}
\newcommand{\dimpoint}{\ensuremath{d}\xspace}
\newcommand{\lattackloss}{\ell}
\newcommand{\attackloss}[1]{\ensuremath{\lattackloss(#1)}}
\newcommand{\Daux}{\ensuremath{\bar{D}}\xspace}
\newcommand{\zaux}{\ensuremath{\bar{z}}\xspace}
\newcommand{\modelaux}{\ensuremath{\bar{\model}}\xspace}
\newcommand{\attackmodel}{\ensuremath{\phi}\xspace}
\newcommand{\mia}{\ensuremath{M}\xspace}
\newcommand{\attackin}{shadow model}
\newcommand{\aux}{\mathsf{aux}}
\mathchardef\mhyphen="2D
\newcommand{\Zset}{\ensuremath{\mathcal{Z}}\xspace}
\newcommand{\Dtrain}{\ensuremath{D}\xspace}
\renewcommand{\train}{\ensuremath{A}\xspace}
\newcommand{\Dfixed}{\ensuremath{D_{\mhyphen}}\xspace}
\newcommand{\Xfixed}{\ensuremath{\bar{X}}\xspace}
\newcommand{\Yfixed}{\ensuremath{\bar{Y}}\xspace}
\newcommand{\reconstruct}{\ensuremath{R}\xspace}
\newcommand{\prior}{\ensuremath{\pi}\xspace}
\newcommand{\zguess}{\ensuremath{\hat{z}}\xspace}
\newcommand{\R}{\mathbb{R}}
\renewcommand{\Pr}{\mathbb{P}}
\newcommand{\E}{\mathbb{E}}
\newcommand{\onesv}{\mathbf{1}}
\begin{document}

\title{Reconstructing Training Data\\ with Informed Adversaries}

\mymaketitle

\begin{abstract}
Given access to a machine learning model, can an adversary reconstruct the model's training data? This work studies this question from the lens of a powerful informed adversary who knows all the training data points except one. By instantiating concrete attacks, we show it is feasible to reconstruct the remaining data point in this stringent threat model. For convex models (e.g.\ logistic regression), reconstruction attacks are simple and can be derived in closed-form. For more general models (e.g.\ neural networks), we propose an attack strategy based on training a reconstructor network that receives as input the weights of the model under attack and produces as output the target data point. We demonstrate the effectiveness of our attack on image classifiers trained on MNIST and CIFAR-10, and systematically investigate which factors of standard machine learning pipelines affect reconstruction success. Finally, we theoretically investigate what amount of differential privacy suffices to mitigate reconstruction attacks by informed adversaries.
Our work provides an effective reconstruction attack that model developers can use to assess memorization of individual points in general settings beyond those considered in previous works (e.g.\ generative language models or access to training gradients); it shows that standard models have the capacity to store enough information to enable high-fidelity reconstruction of training data points; and it demonstrates that differential privacy can successfully mitigate such attacks in a parameter regime where utility degradation is minimal.
\end{abstract}

\begin{IEEEkeywords}
machine learning, neural networks, reconstruction attacks, differential privacy
\end{IEEEkeywords}

\section{Introduction}

Machine learning (ML) models have the capacity to memorize their training data \cite{DBLP:conf/iclr/ZhangBHRV17}, and such memorization is sometimes unavoidable while training highly accurate models \cite{DBLP:conf/stoc/Feldman20,DBLP:conf/nips/FeldmanZ20,DBLP:conf/stoc/BrownBFST21}.
When the training data is sensitive, sharing models that 
exhibit memorization
can lead to privacy breaches.
To design mitigations enabling privacy-preserving deployment of ML models we must understand how these breaches arise and how much information they leak about individual data points.

Membership leakage is considered the gold standard for privacy in ML, both from the point of view of empirical privacy evaluation (e.g., via membership inference attacks (MIA) \cite{DBLP:conf/sp/ShokriSSS17}) as well as mitigation (e.g., differential privacy (DP) \cite{DBLP:conf/tcc/DworkMNS06}).
Membership information represents a minimal level of leakage: it allows an adversary to infer a single bit determining if a given data record was present in the training dataset.
Models trained on health data represent a prototypical application where membership can be considered sensitive: the presence of an individual's record in a dataset might itself be indicative of whether they were tested or treated for a medical condition.

Reconstruction of training data from ML models sits at the other extreme of the individual privacy leakage spectrum: a successful attack enables an adversary to reconstruct all the information about an individual record that a model might have seen during training.
The possibility of extracting training data from models can pose a serious privacy risk even in applications where membership information is not directly sensitive.
For example, reconstruction of individual images from a model trained on pictures that were privately shared in a social network can be undesirable even if that individual's membership in the social network is public information.

Existing evidence of the feasibility of reconstruction attacks is sparse and focuses on specialized use cases. For example,
recent work on generative language models highlights their capacity to memorize and regurgitate some of their training data~\cite{DBLP:conf/uss/Carlini0EKS19,DBLP:conf/uss/CarliniTWJHLRBS21}, while works on gradient inversion
show that adversaries with access to model gradients (e.g.\ in federated learning (FL) \cite{DBLP:conf/aistats/McMahanMRHA17}) can use this information to reconstruct training examples \cite{zhu2019deep}.
Similarly, attribute inference attacks reconstruct a restricted subset of attributes of a training data point given the rest of its attributes~\cite{fredrikson2014privacy}, while property inference attacks infer global information about the training distribution rather than individual points \cite{ganju2018property, suri2021formalizing}.

\begin{figure}[t]
\captionsetup{width=1\columnwidth}
  \centering
    \includegraphics[width=1\linewidth]{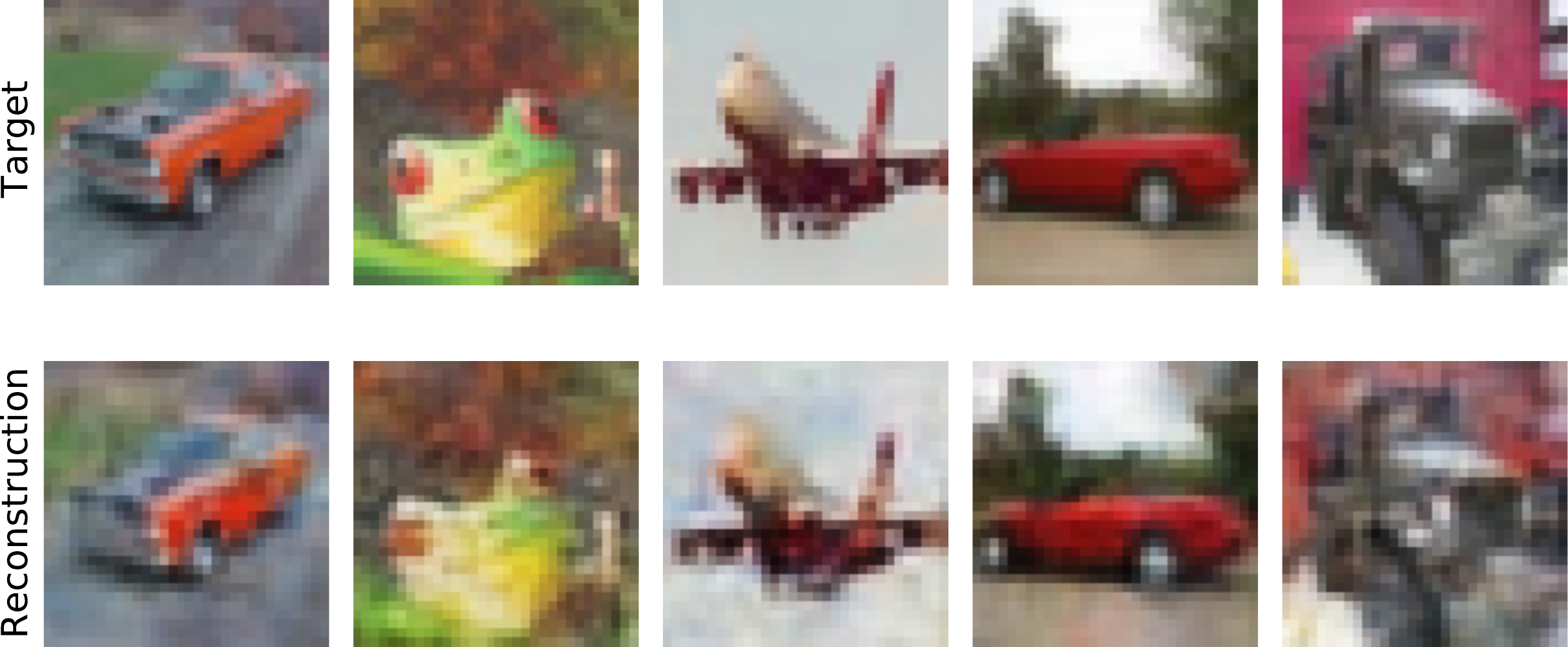}
\caption{Examples of training data points reconstructed from a 55K parameter CNN classifier trained on CIFAR-10.
}
\label{fig:fig_1_rec}
\end{figure}

Our work proposes a general approach to study the feasibility of reconstruction attacks against ML models without assumptions on the type of model or access to intermediate gradients, and initiates a study of mitigation strategies capable of preventing this kind of attacks.
The starting point is the instantiation of an \emph{informed adversary} that, knowing all the records in a training data set except one, attempts to reconstruct the unknown record after obtaining white-box access to a released model.
This choice of adversary is inspired by the (implicit) threat model in DP \cite{DBLP:conf/sp/NasrSTPC21}.

Working with such a powerful, albeit unrealistic, adversary enables us to demonstrate the feasibility of reconstruction, both in theory against convex models as well as experimentally against standard neural network architectures for image classification.
Furthermore, the use of an informed adversary
makes our work relevant for provable mitigations:
effective defenses against optimal informed adversaries will also protect against attacks run by less powerful and more realistic adversaries.

\subsection{Overview of Contributions and Paper Outline}

We start by introducing and motivating the informed adversary threat model (\Cref{sec:threat-model}).
Our first contribution is a theoretical analysis of reconstruction attacks against simple ML models like linear, logistic, and ridge regression (\Cref{sec:reconstruction-convex}).
We show that for a broad class of generalized convex linear models, access to the maximum likelihood solution enables an informed adversary to recover the target point exactly.

In the convex setting, the attack relies on solving a simple system of equations.
Extending reconstruction attacks to neural networks requires a different approach due to the inherent non-convexity of the learning problem.
In \Cref{sec:reconstruction-general}, we propose a generic approach to reconstruction attacks based on \emph{reconstructor networks} (RecoNN): networks that are trained by the adversary to output a reconstruction of the target point when given as input the parameters of a released model.

Our second contribution is to show that it is feasible to attack standard neural network classifiers using reconstructor networks;
we present effective RecoNN architectures and training procedures, and show they can extract high-fidelity training images from classifiers trained on MNIST\footnote{A minimal implementation of our reconstruction attack on MNIST is available at {\color{blue} \url{https://github.com/deepmind/informed_adversary_mnist_reconstruction}}.} and CIFAR-10.
\Cref{fig:fig_1_rec} provides an illustration of reconstructions produced by a RecoNN-based attack against a convolutional neural network (CNN) classifier trained on CIFAR-10.
These experiments
provide compelling evidence
that image classification models can store in their weights enough information to reconstruct individual training data points.

\todo{Feedback from Adria Gascon: emphasize early that we evaluate on images but the attack can be applied to any kind of data.}
\gnote{Agreed, but some reviewers may ask for experiments...
Fingers crossed.}

\Cref{sec:experiments} describes our third contribution: an in-depth analysis around what factors affect the success of our RecoNN-based attack.
These include hyper-parameter settings in the model training pipeline,
degree of access to model parameters,
and quality and quantity of side knowledge available to the adversary.
We also explore how different levels of knowledge
about the internal randomness of stochastic gradient descent (SGD) affect reconstruction;
we observe that knowing the model's initialization significantly improves the quality of reconstructions, while knowing the randomness used for mini-batch sampling is not necessary for good reconstruction.

As part of our experiments, we also investigate the use of DP-SGD \cite{DBLP:conf/ccs/AbadiCGMMT016} as a mitigation to protect against reconstruction attacks.
We find that large values of $\epsilon$ suffice
to defend
against our best RecoNN-based attacks
-- in fact, values that are much larger than what is
necessary to protect against membership inference attacks by informed adversaries \cite{DBLP:conf/sp/NasrSTPC21}.
\Cref{sec:defenses} supports this observation by introducing a definition of \emph{reconstruction robustness}, analyzing its relation to the (R{\'e}nyi) DP parameters of the training algorithm, and showing that, under mild conditions on the adversary's side knowledge, $\epsilon = o(d)$ suffices to prevent reconstruction of $d$-dimensional data records.

 \section{Reconstruction with Informed Adversaries}
\label{sec:threat-model}

We start by instantiating and justifying the \emph{informed adversary} threat model for reconstruction attacks against ML models, and by comparing it to related attacks in the literature.
Notation for the most important concepts introduced in this section is summarized in \Cref{tab:notation}.
At its core, our threat model assumes a powerful adversary with white-box access to a model released by a \emph{model developer}.
The developer owns a dataset $\Dtrain \in \Zset^n$ of $n$ training records from some domain $\Zset$, and a (possibly randomized) training algorithm $\train : \Zset^n \to \Theta$.
They train (the parameters of) a model $\theta = \train(\Dtrain)$, and then release it as part of a system or service.
For example, records in $D$ may be feature-label pairs in standard supervised learning settings, and $\train$ may implement an optimization algorithm (e.g.\ SGD or Adam) for a loss function associated with $\Dtrain$ and $\Theta$.

\begin{table}[t]
    \centering
    \caption{Summary of notation}
    \resizebox{\linewidth}{!}{
    \begin{tabular}{clcl}
        \hline
        \multicolumn{2}{c}{\textbf{Model Developer}} & \multicolumn{2}{c}{\textbf{Reconstruction Adversary}} \\
        \hline
        $\Zset$ & Data domain & $\Dfixed$ & Training dataset minus target point \\
        $\Theta$ & Model domain & $z$ & Target point \\
        $\Dtrain$ & Training dataset & $\reconstruct$ & Reconstruction algorithm\\
        $n$ & size of training set (includes target point) & $\aux$ & Side knowledge about $z$\\
        $\train$ & Training algorithm & $\zguess$ & Candidate reconstruction \\
        $\model$ & Released model & $\lattackloss$ & Reconstruction error \\
        \hline
    \end{tabular}
    }
    \label{tab:notation}
\end{table}

\subsection{Threat Model}
A reconstruction adversary with access to the
released model aims to
infer enough information about its training data to reconstruct one of the examples in $\Dtrain$.
In this paper, we consider a powerful adversary who already has full knowledge about all but one of the training points.
Formally, they have access to the following information to carry out the attack.

\begin{definition}[Informed reconstruction adversary]\label{def:informed-adversary}
Let $\model$ be a model trained on dataset $\Dtrain$ of size $n$ using algorithm $\train$.
Let $z \in \Dtrain$ be an arbitrary training data point and $\Dfixed = D \setminus \{z\}$ denote the remaining $n-1$ points;
we refer to $z$ as the \emph{target point}.
An \emph{informed reconstruction adversary} has access to:
    \begin{enumerate}[label=\alph*)]
    \item The \emph{fixed dataset} $\Dfixed$;
    \item The \emph{released model}'s parameters $\model$;
    \item The model's training algorithm $\train$;
\item (Optional) Side knowledge $\aux$ about the target point.
\end{enumerate}
\end{definition}

We first discuss each piece of knowledge we give to our attacker,
and then analyze in depth how our adversary relates to other threat models arising in
other privacy attacks.

\paragraph{Fixed dataset}
Arguably, the assumption that gives our attacker the greatest advantage
is knowing all the training data except for the target point.
There are two main reasons
to consider such a stringent threat model.
First, since our ultimate goal for studying ML vulnerabilities is to design effective mitigations, by evaluating the resilience of ML models in this strong
threat model we ensure their resilience against weaker
(and more realistic) attackers.
Second, our setup captures the
implicit
threat model
used in the DP definition;
indeed, DP bounds
the ability of a mechanism at
preventing the disclosure of membership information about one data record from an
adversary who knows all the other records in the database.

\paragraph{White-box model access}
White-box access to the model is motivated by several real-world scenarios.
First, the practice of publishing models online (e.g.\ to facilitate their
use or favor public scrutiny) is increasingly widespread.
Second, proprietary models shipped as part of hardware or software components can be vulnerable to reverse-engineering;
it would be naive to assume that sufficiently motivated adversaries will never obtain white-box access to such models.
Finally, FL settings may give real-world attackers
access to similar information to the one we capture in our threat model.

\paragraph{Training algorithm}
Privacy (and security) through obscurity is generally regarded as a bad practice.
Thus, we assume the adversary has access to the model developer's training algorithm $A$, including any associated hyper-parameters (e.g.\ learning rate, regularization, batch size, number of iterations, etc).
Access to $A$ can be in the form of a concrete (e.g.\ open source) implementation.
Nevertheless, black-box access
(e.g.\ through a SaaS API)
suffices for
the general reconstruction attack presented in \Cref{sec:reconstruction-general}.
In cases where $\train$ is randomized, we will evaluate
attacks with and without knowledge of the different sources of randomness used when training the released model.
In stochastic optimization algorithms these typically include model initialization and mini-batch sampling.
Knowledge of $\train$'s internal randomness could come from the model developer using a hard-coded random seed in a public implementation.
Alternatively, knowledge about the model's initialization will also be available
whenever the released model is obtained by fine-tuning a publicly available model (e.g.\ in transfer learning scenarios),
or in FL settings where the adversary has successfully compromised an intermediate model by taking part in the training protocol.

\paragraph{Side knowledge about target point}
Privacy attacks do not happen in a vacuum, so adversaries will often have some prior information about the target point before observing the released model.
For starters, knowledge of $\Dfixed$ and $\train$ provides the adversary with syntactic and semantic context for a learning task in which the model developer
deemed it useful to include the target point.
In our investigations, we often consider adversaries with additional side knowledge abstractly represented by $\aux$.
From a practical perspective, the attack presented in \Cref{sec:reconstruction-general} takes $\aux$ to be a dataset $\Daux$ of points disjoint from $\Dfixed$.
For example, these could come from a public academic dataset or from scraping relevant websites.
Our experiments in \Cref{ssec:reconctruction-factors} show that
these additional points do not necessarily need to come
from the same distribution as the training
data.
In our theoretical investigation (\Cref{sec:defenses}), we model the adversary's side knowledge as a probabilistic prior $\pi$ from which the target is assumed to be sampled.

\subsection{Reconstruction Attack Protocol and Error Metric}\label{ssec:attack-protocol}

\Cref{algo:reconstruction} formalizes the interaction between model developer and reconstruction adversary in our threat model.
After the model $\theta$ is trained on $D = \Dfixed \cup \{z\}$, the adversary runs their attack algorithm $\reconstruct$ using all the information discussed in the previous section, and produces a \emph{candidate} reconstruction $\zguess$ for the target point $z$.
The protocol returns a measure of the attack's success based on a \emph{reconstruction error} function $\ell$; smaller error means the reconstruction is more faithful.

\begin{algorithm}
\caption{Reconstruction attack with an informed adversary. (Auxiliary side knowledge $\aux$ is optional).
} 
\begin{algorithmic}

\Procedure{Reconstruction}{$\train, \reconstruct, \Dfixed, z; \aux$}
    \State $\model \gets \train(\Dfixed \cup \{z\})$
    \State $\zguess \gets \reconstruct(\model, \Dfixed, \train; \aux)$
    \State \Return \attackloss{z, \zguess}
\EndProcedure
\end{algorithmic}
\label{algo:reconstruction}
\end{algorithm}

Privacy expectations are
contextual, and depend on the information content and modality of the sensitive data.
Perfect reconstruction may not be necessary for the user to claim their privacy has been violated;
e.g., a privacy breach may occur if the image of a car's license plate is revealed via an attack, even if the reconstructed background is inaccurate.
In particular, the error function $\lattackloss$ can encode not only proximity between the feature representations of the target and candidate points, but also the correctness with which an attack can recover a (private) property of interest about the target. 
Our experiments on image classifiers use the MSE between pixels as a measure of reconstruction, as well as the similarity between outputs of machine learning models on $z$ and $\zguess$ (through the LPIPS and KL metrics cf.\ \Cref{ssec: metrics}).
In general, an appropriate choice of $\lattackloss$ and a threshold for declaring successful reconstruction is a policy question that will depend on the particular application: it should capture the minimum level of leakage that would cause a significant harm to the involved individual.

\subsection{Relation to Attribute Inference}\label{sec:aia}

Reconstruction can be seen as a generalization of attribute
inference attacks (AIA)~\cite{fredrikson2014privacy,fredrikson2015model,DBLP:conf/csfw/YeomGFJ18,DBLP:conf/cvpr/ZhangJP0LS20}, also sometimes referred to as model inversion attacks.
In AIA, an attacker that knows part of a data record $z$
aims to reconstruct the entire record by exploiting
(white-box or black-box) access to a model $\model$ whose training dataset contained $z$. It is also common for the attack goal of a model inversion attack to try and reveal training data information in aggregate, possibly isolated to a specific target label. 
Although no individual training records are reconstructed through this attack, privacy can be leaked if aggregated training information with respect to a target label is sensitive (e.g. facial recognition where each label is associated with an identity).
The standard threat model in AIA does not include an informed adversary,
but we can get a more direct comparison with our model by considering
an \textit{informed} AIA adversary.
Such an adversary is identical to \Cref{def:informed-adversary}
but also receives as input partial information about the target point $z$, which we denote by $\eta(z)$.
This can be incorporated in \Cref{def:informed-adversary} via the side knowledge $\aux$,
showing that informed AIA corresponds to reconstruction in our model with a particular type of side knowledge.
We conclude that any investigation into mitigating general reconstruction attacks in our threat model will also be useful in protecting against informed AIA, and, by extension, standard AIA.

\subsection{Relation to Membership Inference}\label{sec:mia}

In membership inference attacks (MIA) \cite{DBLP:conf/sp/ShokriSSS17,DBLP:conf/csfw/YeomGFJ18,salem2018ml,DBLP:conf/sp/NasrSH19}, an attacker with access to a released
model $\model$ and a \textit{challenge example} $z \in \Zset$ guesses
if $z$ was part of the model's training data.
Like in AIA, standard MIA does not assume an informed adversary.
Introducing an \textit{informed} MIA adversary yields a model matching the adversary in the threat model behind DP \cite{DBLP:conf/sp/NasrSTPC21}.
This adversary is identical \Cref{def:informed-adversary}, with the exception that it also receives two  candidates $z_0, z_1 \in \Zset$ for the additional data point that was used for training the model, and the developer decides which one to use uniformly at random.
The corresponding interaction protocol between model developer and adversary is summarized in \Cref{algo:informed-mia}, where the adversary uses a MIA algorithm $M$ and the result provides a bit representing whether it guessed correctly.

\borja{We don't compare along white-box/black-box axis}

\begin{algorithm}
\caption{Informed Membership Inference Attack
}
\begin{algorithmic}[1]

\Procedure{Informed-MIA}{$\train, M, \Dfixed, z_0, z_1$}
    \State $b \gets \mathrm{Unif}(\{0,1\})$
    \State $\model \gets \train(\Dfixed \cup \{z_b\})$
    \State $\hat{b} \gets \mia(\model, \Dfixed, \train, z_0, z_1)$
    \State \Return $b = \hat{b}$
\EndProcedure
\end{algorithmic}
\label{algo:informed-mia}
\end{algorithm}

We remark that this attacker is much more
powerful than the one in standard MIA.
In particular, if the model's
training algorithm $\train$ is deterministic,
then there is a trivial strategy:
the attacker trains models
on $\Dfixed \cup \{z_0\}$ and $\Dfixed \cup \{z_1\}$ and checks which of the two
matches the released model $\model$.
This is coherent with the observation that randomized algorithms are necessary to (non-trivially) provide DP.
Note also that accurate reconstruction provides an informed MIA.
Indeed, assume, for example, that $\ell$ satisfies the triangle inequality and reconstruction succeeds at achieving error less than $\ell(z_0, z_1) / 2$.
Then the reconstruction adversary uses $\theta$ to obtain
a candidate $\zguess$, and then guess $z_0$ if $\ell(\zguess, z_0) < \ell(\zguess, z_1)$ and $z_1$ otherwise.

The contrapositive implication of the above is that if this powerful notion
of MIA is not possible, then accurate reconstruction is also not possible.
Furthermore, the existence of a standard MIA attacker implies the existence of
an informed one.
This argument indicates that protecting against informed MIA will protect against both standard MIA and accurate reconstruction, thus motivating the use of DP -- a mitigation against informed MIA -- as a strong privacy protection.
The experiments in \Cref{sec:experiments} and the theoretical investigation developed in~\Cref{sec:defenses} will, however, illustrate that values of the DP parameter $\epsilon$ that are too large to protect against informed MIA can still protect against accurate reconstruction.

\subsection{Further Related Work}

Attacks for reconstructing training data have been studied in the context of generative language models (LM).
Carlini et al.~\cite{DBLP:conf/uss/Carlini0EKS19} proposed a \emph{targeted} black-box reconstruction attack where the adversary knows part of a training example (i.e.\ a text prompt) and infers the rest (e.g.\ a credit card number).
Their attack assumes partial knowledge of the target record
(as with AIA) and a threat model where the adversary has significant computational power but no additional knowledge of the training data.
An \emph{untargeted} version of this attack was later performed against GPT-2 \cite{radford2019language} by repeatedly sampling from the model and comparing the samples with the training data \cite{DBLP:conf/uss/CarliniTWJHLRBS21}.
Both works crucially exploit the generative
aspect of LMs to carry out reconstruction;
our attacks are more general and require no such assumptions, making them suitable to attack standard image classification models.

Many works have investigated what an attacker can infer from inspecting the intermediate gradients in FL settings or
multiple model snapshots during training \cite{wang2019beyond,geiping2020inverting,wainakh2021user, BeguelinWTRPOKB20, Salem20}.
These attacks focus on inferring training points,
their labels, or related properties.
The task our reconstruction adversary has to solve
is harder: whilst a gradient leakage adversary has access to information involving only a mini-batch of training points, our attacks
needs to invert the entire training procedure.

Finally, \emph{property inference attacks} (PIA) are a generalization of AIA where the adversary infers
properties about the training set \cite{ganju2018property, suri2021formalizing}.
These attacks are effective at recovering overall statistics (e.g.\ the percentage of training records
coming from a minority group, the average value of a feature across the data) but in general do not compromise the privacy of individuals.

 \section{Reconstruction in Convex Settings}
\label{sec:reconstruction-convex}

In this section, we focus on attacking convex supervised learning models.
We discuss a general reconstruction attack strategy against a broad family of convex models when the empirical risk minimization (ERM) problem has a unique minimum and is solved to optimality.
Specifically, we show there exists a closed form solution to perform reconstruction attacks against Generalized Linear Models (GLMs) without any additional side knowledge
about the target point.
This attack applies to popular models such as linear regression, ridge regression, and logistic regression.

\subsection{Reconstruction Strategy for Convex Models}

Consider an ML model $\model$ trained by exactly solving the ERM problem.
Formally, let
$\trainobjective(\hat{\model}) = \sum_{z \in \Dtrain} \trainobjectivepoint(z, \hat{\model})$
be a risk function for some loss
$\trainobjectivepoint$,
and let
$\model \in \argmin_{\hat{\theta} \in \Theta} \trainobjective(\hat{\model})$.
If the loss is strictly convex, this optimization admits a unique global minimum.
Further, if the loss is differentiable and there is no constraint on the parameters (i.e.\ $\Theta = \R^{d'}$), then the optimum is characterized by the system of equations $\nabla C(\theta) = 0$.

This simplified scenario enables a direct strategy to perform a reconstruction attack.
Recall the adversary has white-box access to
the released model $\model$ and knowledge of the fixed dataset $\Dfixed$.
This allows them to write the following system of equations which will be satisfied by the target point $z$:
\begin{align}\label{eq:convex-attack-strategy}
    \nabla_{\theta} c(z, \theta) = -\textstyle{\sum_{z' \in \Dfixed}} \nabla_{\theta} c(z', \theta) \enspace.
\end{align}
Since in supervised training every point $z = (x, y)$ is represented by a feature vector $x \in \R^\dimpoint$ and a label $y \in \R$, this provides $d'$ equations from which the adversary wants to recover $d + 1$ unknowns ($d$ features plus the label).
Note that this strategy is independent of
the algorithm that was used for training the model as long as the model was trained to optimality.
Next we show a closed-form solution
for this attack exists in the case of GLMs
fitted with an intercept term.

\subsection{Closed-Form Reconstruction Against GLMs}
Consider fitting a GLM derived from a canonical exponential family with canonical link function $g$
(see, e.g.\ \cite{mccullagh2019generalized}).
The GLM parameters are trained via (regularized) ERM by minimizing the maximum likelihood objective
$\trainobjective(\hat{\model}) = -\sum_{(x,y) \in \Dtrain} \left(b(\ip{x}{\hat{\model}}) - \ip{x}{\hat{\model}} y\right)  + \lambda \norm{\hat{\model}}^2$,
where $b$ is a function satisfying $b' = g^{-1}$,
and $\lambda \geq 0$ is a regularization parameter. 
For example, $g^{-1}$ is the identify function for linear regression and the sigmoid function for logistic regression.
This optimization admits a unique minimum when either $\lambda > 0$, $b$ is strictly concave (as in the examples above) or the data is in general position \cite{wedderburn1976existence}.
In any of these cases \eqref{eq:convex-attack-strategy} connects the unknown $z = (x,y)$ with $\theta$ and $\Dfixed$.
Assuming the model is trained with an intercept parameter\arxiv{\footnote{In appendix,
we show an attack against linear regression without intercept parameter (\Cref{thm:lr-reconstruction-no-intercept}), which although assumes the adversary knows $y$.}}
(i.e.\ the first coordinate of each feature vector is equal to $1$) this results in a system of $\dimpoint$ equations with $\dimpoint$ unknowns. The following solution for this system gives an effective reconstruction attack.

\begin{restatable}[Reconstruction attack against GLMs]{theorem}{attackglm}
\label{thm:convex-attack}
Let $\model$ be the unique optimum of $\trainobjective(\hat{\model})$
and $\Dfixed$ the training data set except for one point $z=(x,y)$.
Suppose $\Xfixed \in \R^{(n-1) \times d}$ contains as rows the features of all points in $\Dfixed$ where its first column satisfies $\Xfixed_1 = \vec{1}$, and similarly for the labels $\Yfixed \in \R^{n-1}$.
Then taking $B = g^{-1}(\Xfixed \model) - \Yfixed$ we get:
\begin{align*}
    x = \frac{\Xfixed^\top B + \lambda\model}{\Xfixed_1^\top B + \lambda\model_1} \enspace,
    \qquad
    y = g^{-1}(\ip{x}{\model}) + \lambda \Xfixed_1^\top B \model_1 \enspace.
\end{align*}

\end{restatable}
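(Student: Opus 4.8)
The plan is to exploit the first-order optimality condition. Since $\model$ is the unique, unconstrained minimizer of the differentiable strictly convex objective $\trainobjective$, it must satisfy $\nabla \trainobjective(\model) = 0$, which is exactly the GLM instance of \eqref{eq:convex-attack-strategy}. First I would differentiate the maximum-likelihood objective term by term. Using $b' = g^{-1}$, the contribution of a point $(x',y')$ to $\nabla\trainobjective$ is $(g^{-1}(\ip{x'}{\model}) - y')\,x'$, while the regularizer contributes a term proportional to $\lambda\model$. Hence the stationarity condition can be written as
\begin{equation*}
\sum_{(x',y') \in \Dtrain} \bigl(g^{-1}(\ip{x'}{\model}) - y'\bigr)\, x' = \lambda\model \enspace,
\end{equation*}
where the exact constant multiplying $\lambda\model$ depends only on the normalization of the regularizer and is immaterial to the argument.

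Next I would separate the target's contribution from the rest. Splitting the sum over $\Dtrain = \Dfixed \cup \{z\}$ and stacking the residuals of the known points into the vector $B = g^{-1}(\Xfixed\model) - \Yfixed$, the sum over $\Dfixed$ is precisely $\Xfixed^\top B$, a quantity the adversary can compute from $\model$ and $\Dfixed$ alone. The optimality condition then reads
\begin{equation*}
\Xfixed^\top B + \bigl(g^{-1}(\ip{x}{\model}) - y\bigr)\, x = \lambda\model \enspace.
\end{equation*}

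The crux of the argument — and the step I expect to be the main obstacle — is recognizing that, although the target's residual $\alpha := g^{-1}(\ip{x}{\model}) - y$ depends on both unknowns in a nonlinear way, it enters the $d$ equations only as a single shared scalar multiplying the vector $x$. Thus the whole system collapses to $\alpha\, x = v$ for the known vector $v := \lambda\model - \Xfixed^\top B$, i.e.\ one scalar unknown times $x$ equals a known vector. Here the intercept assumption is essential: since the first coordinate of every feature vector equals $1$, reading off the first coordinate of $\alpha x = v$ gives $\alpha = v_1 = \lambda\model_1 - \Xfixed_1^\top B$, which pins down the scalar. Dividing the full vector equation by this scalar yields $x = v / v_1$, which is exactly the claimed closed form (up to the sign and constant convention of the regularizer). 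Finally, I would recover the label by inverting the definition of $\alpha$, namely $y = g^{-1}(\ip{x}{\model}) - \alpha$ with $\alpha$ the scalar determined above.

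The remaining work is routine bookkeeping: carrying the correct sign and constant from the chosen normalization of $\lambda\norm{\model}^2$, and checking that $v_1 \neq 0$ so the division is well defined (which holds generically and is consistent with uniqueness of the optimum). I would also verify a posteriori that the recovered pair $(x,y)$ satisfies all $d$ equations, which is automatic once $\alpha$ is fixed by the first coordinate and $x$ is then forced. I would not grind through the elementary differentiation of $b$ or the constant-matching.
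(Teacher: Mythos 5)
Your proposal is correct and takes essentially the same route as the paper's proof: write the first-order stationarity condition, split off the target's contribution so the system collapses to $\alpha x = v$ with the single scalar $\alpha = g^{-1}(\langle x, \theta\rangle) - y$ and a vector $v$ computable from $\theta$ and $\Dfixed$, read $\alpha$ off the intercept coordinate $x_1 = 1$, divide to recover $x$, and back-substitute to recover $y$. Your caveat about signs and regularizer constants is apt, since the paper's own statement and proof are not fully consistent on these conventions, and the ratio form is invariant to them.
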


We defer all proofs to the appendix.
Two important takeaways from this result are: 1) an informed adversary needs no additional side knowledge about $z$ to effectively attack a GLM trained with intercept; and, 2) whether the model overfits the data or generalizes well plays no role in the attack's success.

 \section{A General Reconstruction Attack}
\label{sec:reconstruction-general}

We describe a reconstruction attack against general ML models.
Intuitively, our attack stems from the observation that the 
influence of the target point $z$ on the released model $\theta$
is similar to the influence an alternative point $\bar{z}$ would have on the model $\bar{\theta} = \train(\Dfixed \cup \{\bar{z}\})$.
By repeatedly training models on different points, our attack collects enough information about the mapping from training points to model parameters to invert it at the model of interest $\theta$.
We give a high-level introduction to our attack strategy using \emph{reconstructor networks} (RecoNN).

\subsection{General Attack Strategy}

Let us use the shorthand notation $\train_{\Dfixed} : \Zset \to \Theta$ with
$\train_{\Dfixed}(z) = \train(\Dfixed \cup \{ z\})$
to emphasize that, from the point of view of an informed adversary,
when $\Dfixed$ is fixed $\train$ effectively becomes a mapping from target points to model parameters.
An ideal reconstruction attack would invert the training
procedure
and output
$\zguess = \train_{\Dfixed}^{-1}(\model)$;
whenever $\train$ is easy to invert, this will produce a perfect reconstruction as in the setting analyzed in \Cref{sec:reconstruction-convex}.
In general, however, the training process is not (easily) invertible, due to the non-convexity of the optimization problem solved by $A$, or to the presence of randomness in the training process.
In such settings, our general reconstruction attack relies on \emph{approximately} solving this inverse problem by producing a function
$\attackmodel: \Theta \rightarrow \Zset$
that associates model weights to a guess for the target point in a similar way to the (ideal) inverse mapping $\train_{\Dfixed}^{-1}$.
Note that the adversary in this threat model is extremely powerful; for example, they could enumerate (a fine discretization of) $\Zset$ and pick the candidate $\hat{z}$ that produces the model $\hat{\theta} = \train_{\Dfixed}(\hat{z})$ closest to $\theta$. However, for high-dimensional data this enumerative approach is infeasible, so we focus on attacks that can be executed in practice.

In this paper, we instantiate the search for $\phi$ as a learning problem, effectively using ``neural networks to attack neural networks''.
To solve this learning problem, we first design a RecoNN architecture for neural networks whose inputs lie in the parameter space $\Theta$ of the released model and outputs lie in the domain $\Zset$ of the training data;
typically we can encode both using numerical vectors.
The adversary then uses its knowledge of $\Dfixed$ and $A$, together with side knowledge in the form of \emph{shadow target} points $\Daux$ disjoint from $\Dfixed$, to generate a collection of \emph{shadow models}.
These shadow model and target pairs comprise the training data for the RecoNN,
which is then applied to the released model to obtain a candidate reconstruction $\zguess$ for the (previously unseen) target point $z$.

\subsection{Training Reconstructor Networks}
\label{ssec: attack_desc}

Consider an informed adversary in our threat model (\Cref{def:informed-adversary}).
As side knowledge about $z$, we assume the attacker has $k$ additional shadow targets
$\Daux = \{\zaux_1, \ldots, \zaux_k\}$ from $\Zset$.
Ideally, if we think that the attack's success will depend on the RecoNN's ability to exhibit statistical generalization,
these points would be sampled from the same distribution as the target point $z$.
Nonetheless, we will see in our experimental evaluation that this requirement is not strictly necessary to achieve good reconstructions (\Cref{ssec:reconctruction-factors}).
The general reconstruction attack
proceeds
as follows
(see also \Cref{fig:general-attack}):
\begin{enumerate}
\item For $i = 1,\ldots,k$, train model $\modelaux_i = \train_{\Dfixed}(\zaux_i)$ on the fixed dataset plus the $i$th shadow target from the adversary's side knowledge pool $\Daux$.
Together, we refer to the collection of shadow model-target pairs $S = \{(\modelaux_i, \zaux_i)\}_{i=1}^k$ 
as the \emph{attack training data}.
    \item
Train a RecoNN $\attackmodel$ using $S$ as examples of successful reconstrutions.
    Abusing our notation, we use $R$ to denote the training algorithm used by the adversary: $\attackmodel = R(S)$.
\item Obtain a reconstruction candidate by applying the RecoNN to the target model: $\zguess = \attackmodel(\model)$.
\end{enumerate}

\begin{figure}[t]
\captionsetup{width=1\columnwidth}
  \centering
    \includegraphics[width=0.75\columnwidth]{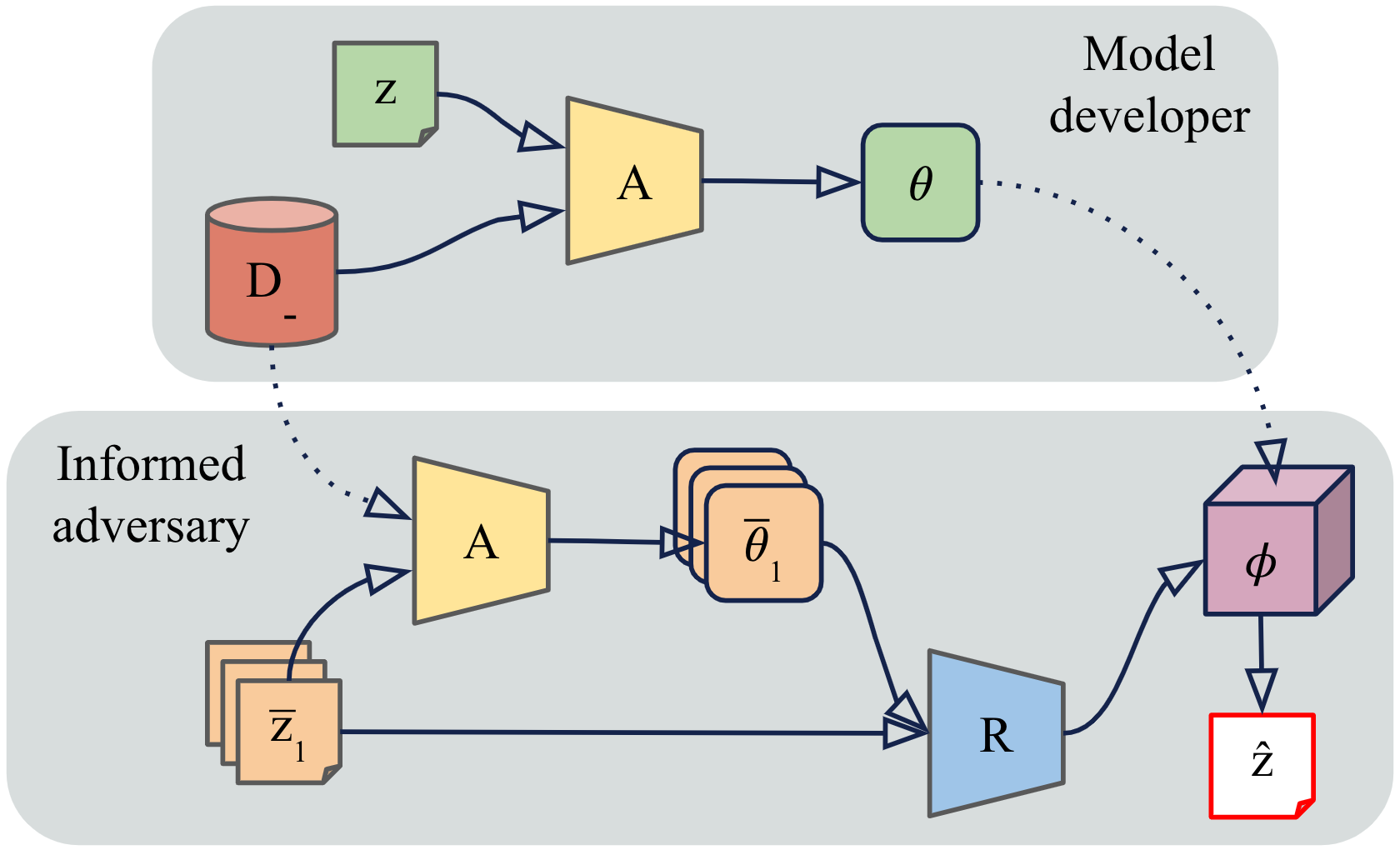}
\caption{
Overview of RecoNN-based attack.
}
\label{fig:general-attack}
\end{figure}

In all our experiments, we consider classification tasks where $z = (x,y) \in \Xset \times \Yset$ with $\Xset \subset \R^d$ and $\Yset$ is a finite set of labels.
We also make the simplifying assumption that $y$ can be inferred from $x$, and focus only on reconstructing $x$.

\paragraph*{Related work}
The idea of using ``neural networks to attack neural networks''
has been used in the literature to implement a number of attacks, including (black-box and white-box) membership inference \cite{DBLP:conf/sp/ShokriSSS17,salem2018ml,DBLP:conf/sp/NasrSH19}, model inversion \cite{DBLP:conf/cvpr/ZhangJP0LS20}, and property inference \cite{ganju2018property, suri2021formalizing}.
Our use of RecoNNs is related to \cite{ganju2018property}, where an invariant representation of a released neural network parameters is fed into another neural network to perform a PIA, although the output of our attack is often a high-dimensional object (e.g.\ an image) instead of single scalar.
In preliminary experiments we did not see an improvement from using this invariant representation as a pre-processing step; standard normalization was sufficient for a successful attack.
Similarly, the use of shadow models
trained by the adversary to imitate the behavior of the released model is a common approach in MIA and AIA, although most works do not consider an informed adversary with knowledge of $\Dfixed$.
Despite the attack being an instantiation of the shadow model technique, it is not a foregone conclusion that this approach will work for reconstruction attacks.
Reconstruction is a more difficult task than membership inference, and it entails a considerable amount of engineering, data curation, and ML training insight to carry out, as we will discuss.

\section{Experimental Setup}
\label{ssec: exp_setup}

We discuss the default experimental settings, and how we will evaluate reconstruction attacks.

\subsection{Default Settings}
We evaluate our reconstruction attacks on the MNIST and CIFAR-10 datasets using fully connected (i.e.\ multi-layer perceptron) and convolutional neural networks (CNN) as the released (and shadow) models.
Our experiments investigate
the influence that training hyperparameters for $\train$ have on the
effectiveness of reconstruction.
Default model architectures and hyperparameters for both released and reconstructor models are summarized in \Cref{tab:experimental-setup}.
Most of these choices are standard and were selected based on preliminary experiments.
In the following we highlight the most important details.

\paragraph{Dataset splits}
We split each dataset into three disjoint parts: fixed dataset ($\Dfixed)$, shadow dataset ($\Daux$), and test targets dataset;
the latter contains $1K$ points, both for MNIST and CIFAR-10.
We train one released model per test target and report average performance of our attack across test targets.

\paragraph{Released model training}
The training algorithm for released and shadow models is standard gradient descent with momentum. By default, we use full batches (i.e.\ no mini-batch sampling) to keep the algorithm deterministic.
Additionally, by default we assume the adversary
knows the
model initialization step, so both released and shadow models are trained from the same starting point.
We explore the effect of mini-batching and random initialization separately in \Cref{ssec:reconctruction-factors}.

The architecture is an MLP for MNIST and a CNN for
CIFAR-10.
On average, the released models achieve over 94\% accuracy on MNIST and 40\% on CIFAR-10 without significant overfitting (generalization gap is less 1\% on MNIST and 5\% on CIFAR-10). 
The reason for the subpar performance on CIFAR-10 is partially\footnote{Training without random mini-batches, no regularization and a small CNN architecture also contribute to this effect.} because the models are trained with only 10\% of the data used in standard evaluations -- this constraint comes from the need to reserve a large disjoint set of shadow points to train RecoNN.
We experiment with a larger CIFAR-10 fixed set size ($50K$) in \Cref{ssec:reconctruction-factors}; in this setting the released models achieve $\sim50\%$ test accuracy.

We expect reconstructing CIFAR-10 targets will be a more challenging task than MNIST.
CIFAR-10 images have a richer, more complex structure, and so capturing and reconstructing the intricacies of such an image may be difficult. 
Additionally, the underlying released model is larger;
hence: 1) a larger reconstructor network is required, which comes with higher computational costs for the adversary; 2) the shadow dataset may need to be larger, to facilitate learning on high dimensional data (i.e.\ on the shadow models' weights).

\paragraph{Reconstructor network training}
When training the reconstructor, shadow model parameters across layers are flattened and concatenated together.
We also re-scale each coordinate in this representation to zero mean and unit variance; we found this pre-processing step to be important, as some of the parameters can be extremely small.
For MNIST, we use a mean absolute error (MAE) + mean squared error (MSE) loss between shadow targets and reconstructor outputs as the training objective.
For CIFAR-10 we modify the reconstructor training objective by adding an LPIPS loss \cite{zhang2018unreasonable} and a GAN-like Discriminator loss to improve visual quality of reconstructed images.
We use a patch-based Discriminator \cite{isola2017image} with the architecture given in \Cref{tab: cifar10_attack_discrim_model}, and train it using mean squared error loss \cite{mao2017least} and a learning rate of $10^{-5}$.
The patch-based discriminator aims to distinguish shadow targets from reconstructor generated candidates. At a high-level, we can view the reconstructor network as a generative model with a latent space defined over a distribution of shadow models;
this enables us to apply ideas from Generative Adversarial Networks (GANs) training.
Our discriminator training set-up is as in \cite{isola2017image} -- we alternate between one gradient descent step on the discriminator, and one step on the reconstructor network. 
From visual inspection, we found using a discriminator improves sharpness of CIFAR-10 reconstructed images, even if it does not strictly improve the MSE metric.

\subsection{Criteria for Attack Success}
\label{ssec: metrics}

In our experiments, we use several evaluation
metrics $\ell$
to capture various aspects of information leakage from reconstruction attacks.
When reporting an average metric we measure performance of a single reconstructor network on $1K$ released model and target point pairs.

\paragraph{Mean squared error (MSE)}
We report the MSE between a target and its reconstruction.
In the context of images,
while discovery of private information does not necessarily perfectly coincide with a decreasing MSE
between the original and reconstructed training point, in general the two are correlated (\Cref{ssec: flagship_exp}).

\paragraph{LPIPS}
We report the LPIPS metric~\cite{zhang2018unreasonable} as it has been shown to be closer to the human’s visual systems determination of image similarity in comparison to the MSE distance.
LPIPS is measured by comparing deep feature representations
from visual models trained with similarity judgements made
by human annotators.

\paragraph{KL}
After running the attack, a real-world
adversary may need to post-process the reconstructed
image;
e.g.\
if they wanted to extract a license
plate from the reconstructed image, they may need to
run a downstream image classifier.
We therefore include a similarity metric between the outputs of a highly accurate classifier on the target and reconstructed image based on the Kullback–Leibler (KL) divergence between predicted class probabilities.
For MNIST, we use a LeNet classifier \cite{726791} achieving 99.4\% test accuracy, and for CIFAR-10 use a Wide ResNet \cite{ZagoruykoK16} achieving 94.7\% test accuracy.

\begin{figure}[t]
\captionsetup{width=0.5\textwidth}
  \centering
\begin{subfigure}[t]{.25\textwidth}
\centering
    \includegraphics[width=0.99\linewidth]{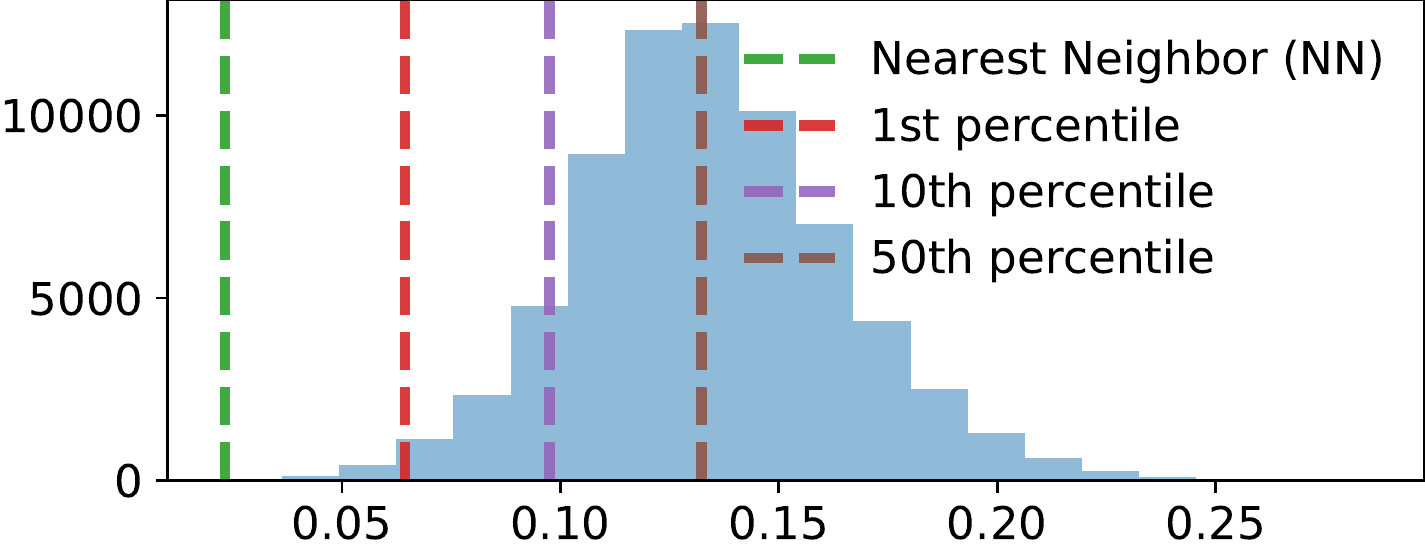}
    \caption{MNIST. $|\Daux \cup \Dfixed|=69K$.}
    \label{fig:mnist_oracle_hist}
\end{subfigure}\begin{subfigure}[t]{.25\textwidth}
\centering
    \includegraphics[width=0.99\linewidth]{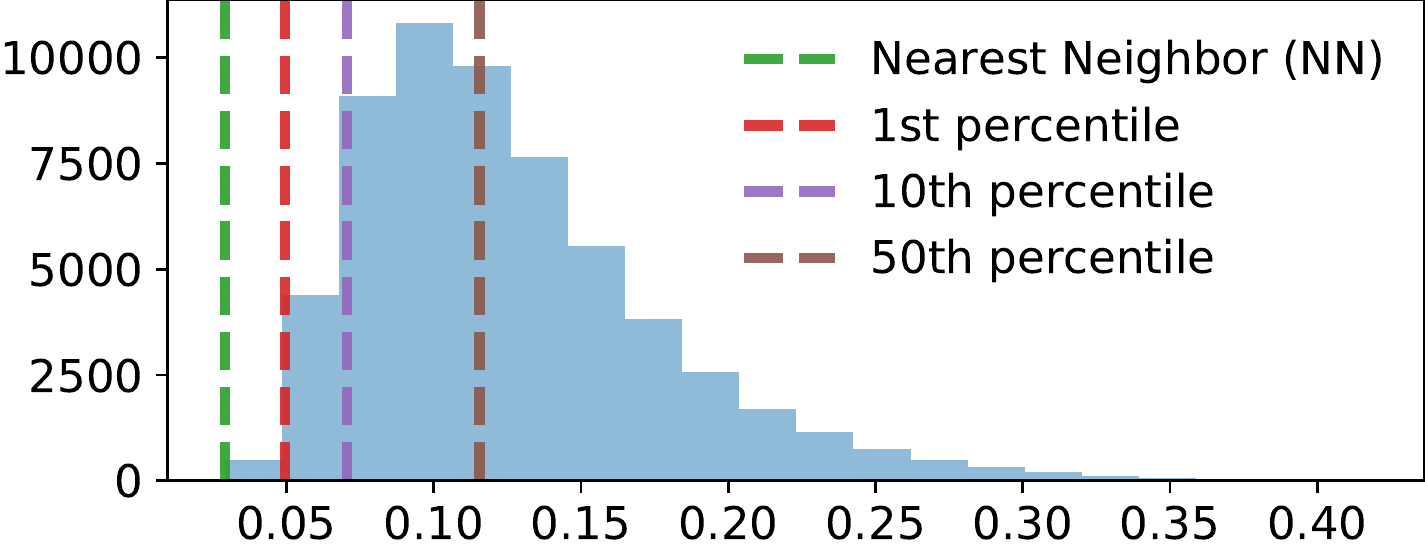}
    \caption{CIFAR-10. $|\Daux \cup \Dfixed|=59K$.}
    \label{fig:cifar10_oracle_hist}
\end{subfigure}\caption{For each test target compute the MSE to all points in adversary's available pool of data ($\Daux \cup \Dfixed$) in the default setting. Plot the histogram of MSEs (averaged over all test targets) along with some highlighted order statistics.
}
\label{fig:oracle_hist}
\end{figure}

\paragraph{Nearest Neighbor Oracle}\label{sec:baseline}

To
contextualize MSE reconstruction metrics
we consider an oracle that exploits all the data available to the adversary in the default setting
and guesses the point $\zguess \in \Dfixed \cup \Daux$ that has the smallest
MSE distance to $z$.
The MSE distance between $z$ and its nearest neighbor $\zguess$ serves as a conservative threshold for successful reconstruction: although faithful reconstructions with larger MSE are certainly possible, falling below the threshold means the reconstruction is closer to the target than to any other point previously available to the adversary, so the attack must have extracted unique information about the target point from the released model.
\Cref{fig:oracle_hist} provides average histograms (over 1K test targets) of MSEs between a target point and all points in $\Dfixed \cup \Daux$.
The green line corresponds to the average MSE to the nearest neighbor across all test targets ({0.0232} on MNIST and {0.0291} on CIFAR-10); if reconstructions have a smaller MSE than this distance we will judge the target to have been successfully reconstructed.
For reference, we also highlight the 1st, 10th and 50th percentile MSEs, which will be helpful to contextualize experiments throughout \Cref{sec:experiments}.

 \begin{figure}[t]
\centering
\begin{subfigure}[t]{.24\textwidth}
\centering
    \includegraphics[width=0.99\linewidth]{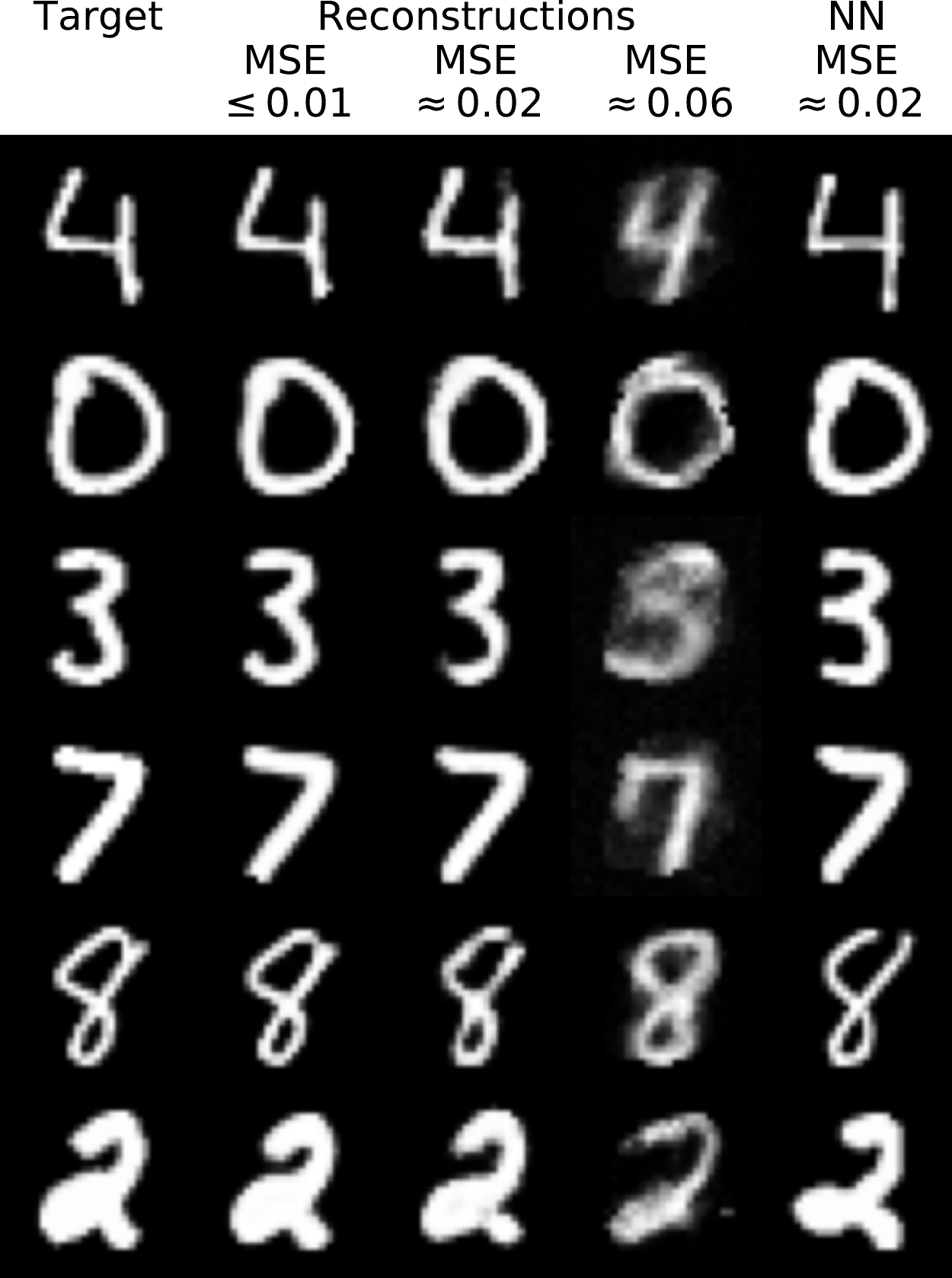}
    \caption{MNIST}
\end{subfigure}\begin{subfigure}[t]{.24\textwidth}
\centering
    \includegraphics[width=0.99\linewidth]{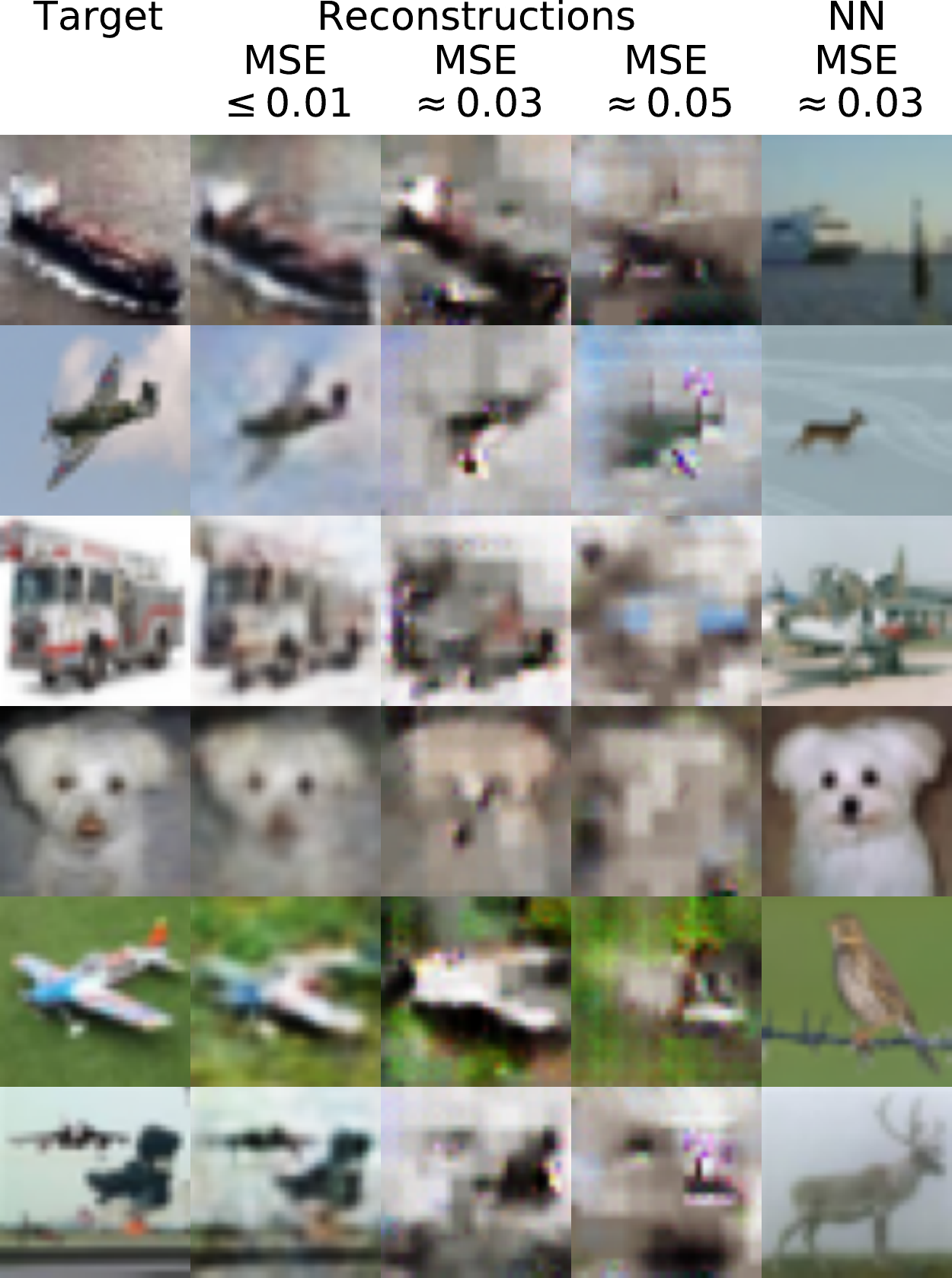}
    \caption{CIFAR-10}
\end{subfigure} \caption{
Visualization of reconstructions for six random targets selected from the test set.
The first column shows the targets, the second shows the default reconstruction attack, the third shows reconstructions around the same MSE provided by the NN oracle. The fourth column corresponds to reconstructions with distance approximately equal to the 1st percentile (\Cref{fig:oracle_hist}). The last column shows the NN oracle.}
\label{fig:flagship-experiment}
\end{figure}

\begin{table*}[t]
\centering
\caption{Effect of different factors on the success of reconstruction attacks.
}
\label{tab:summary}
\resizebox{\textwidth}{!}{
\begin{tabular}{llcccc}
\toprule
\multirow{2}{*}{Factor} & \multirow{2}{*}{Description} & \multicolumn{2}{c}{MNIST} & \multicolumn{2}{c}{CIFAR-10} \\
\cmidrule{3-4} \cmidrule{5-6}
& & MSE & Success & MSE & Success \\
\cmidrule{1-6}
---
& Nearest neighbor (NN) oracle & $0.0232$ & -- & $0.0291$ & --\\
---
& Default hyper-parameters and architectures (cf.\ \Cref{ssec: exp_setup})& $0.0089$ & \checkmark & $0.0049$ & \checkmark\\
Fixed set size & Change size of fixed set to: $1K$ (MNIST) $50K$ (CIFAR-10 + shadows from CIFAR-100) & $0.0094$ & \checkmark &  $0.0039$ & \checkmark \\
Size \& architecture & Larger MLP (MNIST) and CNN (CIFAR-10) & $0.0079$ & \checkmark &
$0.0047$ & \checkmark\\
Released layers & Restrict attack to use subset of released model layers & $0.0124$ & \checkmark & $0.0257$ & \checkmark \\
Epochs & Increase number of released model training epochs: $250$ (MNIST) $200$ (CIFAR-10) & $0.0121$ & \checkmark & $0.0094$ & \checkmark \\
Activation & Change released model activations to ReLU & $0.0182$ & \checkmark &  $0.0324$ & \xmark \\
Learning rate & Decrease released model learning rate: $0.01$ (MNIST) $0.001$ (CIFAR-10) & $0.0049$ & \checkmark & $0.0055$ & \checkmark \\
Random initialization & Adversary does not know initial released model parameters & $0.0695$ & \xmark & $0.0931$ & \xmark \\
Model access & Only allow logit-based black-box access to released model & $0.0110$ & \checkmark & $0.0198$ & \checkmark \\
\bottomrule
\end{tabular}
}
\end{table*}

\section{Empirical Studies in Reconstruction}
\label{sec:experiments}

We now conduct extensive experiments investigating
how the released model architecture and its training hyperparameters impact reconstruction quality.
We first demonstrate the feasibility of the
reconstruction attack against models trained on our
default experimental setup.
Then we discuss an in-depth study on which factors, such
as training set size or released model's hyperparameters,
affect the success of reconstruction.
Finally, we investigate DP as a mitigation against reconstruction attacks.
Our findings are summarized in \Cref{tab:summary}.

\subsection{Feasibility of Reconstruction Attacks}
\label{ssec: flagship_exp}

We first carry out the general reconstruction attack
under the default experimental settings
(cf.\ \Cref{ssec: exp_setup}).

\Cref{fig:flagship-experiment} shows
examples of targets and respective reconstructions;
we use the nearest neighbor (NN) oracle
as a baseline.
We observe a good overall reconstruction quality on both datasets.
Running the attack against 1K test targets, we observe an average reconstruction MSE of $0.0089$ (MNIST) and $0.0049$ (CIFAR-10).
These numbers, compared to the NN oracle baselines,
demonstrate our attack is effective.
To account for the variance across experimental runs (e.g. different random selections of fixed sets across experiments), we repeated this experimental procedure ten times with differing fixed sets, initial released model parameters, and evaluation sets.
We saw minimal variance in results; importantly, reconstructions were consistently better than the NN oracle.

To help the reader calibrate MSE values
to reconstruction quality, \Cref{fig:flagship-experiment}
shows poor reconstructions with MSE close to the oracle NN's MSE (third column)
and to its 1st percentile (fourth column);
these reconstructions were obtained in preliminary experiments
with weaker RecoNN instances.
\arxiv{More examples are in \Cref{fig:cifar10_more_examples}.}

\paragraph*{Relation between reconstruction metrics}
With the same experimental setup as above,
we also evaluate results across
our other metrics (\Cref{ssec: metrics}) on MNIST.
We observe that MSE
and LPIPS are strongly correlated
(\Cref{fig:mnist_attack_train_size_l2_lpips}).
\Cref{fig:mnist_attack_train_size_l2_kl} also shows that a small MSE implies a small KL but the converse is not true;
in other words, it is possible for two images that are not identical to have similar predictions.
Since these metrics exhibit significant correlations, we only report a subset of them in subsequent experiments, focusing mostly on comparing the MSE metric with the NN oracle.
We observe similar trends on CIFAR-10, although MSE vs LPIPS correlation is weaker; this partially motivated including the LPIPS loss when training RecoNNs on CIFAR-10\arxiv{ (c.f. \Cref{app: cifar10_correlation})}.

\subsection{What Factors Affect Reconstruction}
\label{ssec:reconctruction-factors}
We study which factors may improve or impact reconstruction success;
these are summarized in \Cref{tab:summary}.

\paragraph{Attack training set size}

\begin{figure*}[t]
\captionsetup{width=0.99\textwidth, justification=centering}
  \centering
\begin{subfigure}[t]{.2\textwidth}
\centering
    \includegraphics[width=0.99\linewidth]{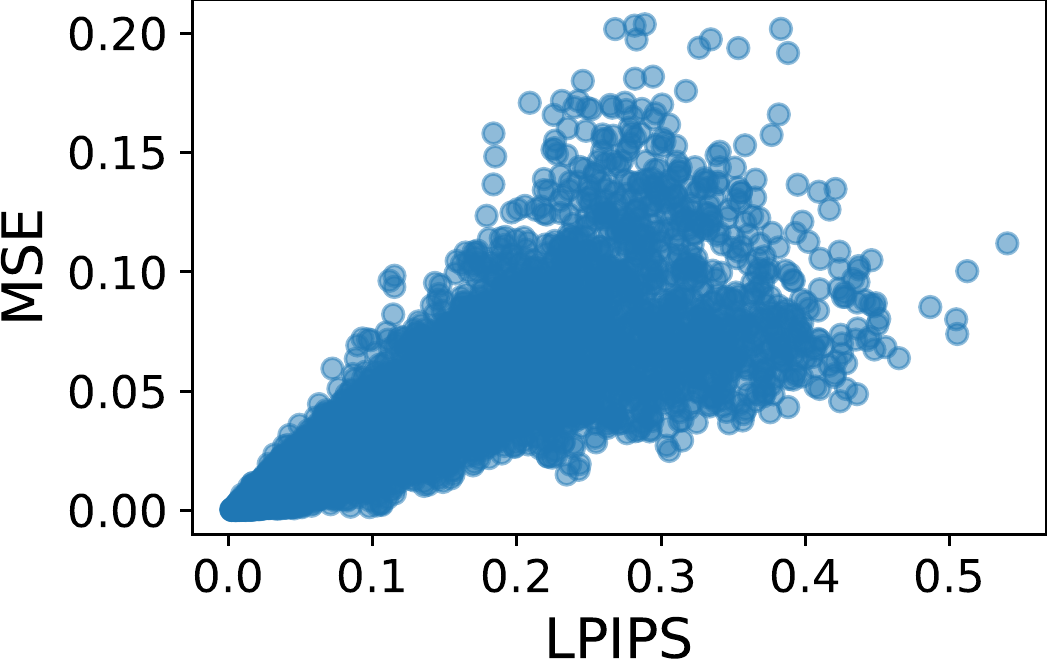}
    \caption{Correlation between MSE and LPIPS.}
    \label{fig:mnist_attack_train_size_l2_lpips}
\end{subfigure}\begin{subfigure}[t]{.2\textwidth}
\centering
    \includegraphics[width=0.99\linewidth]{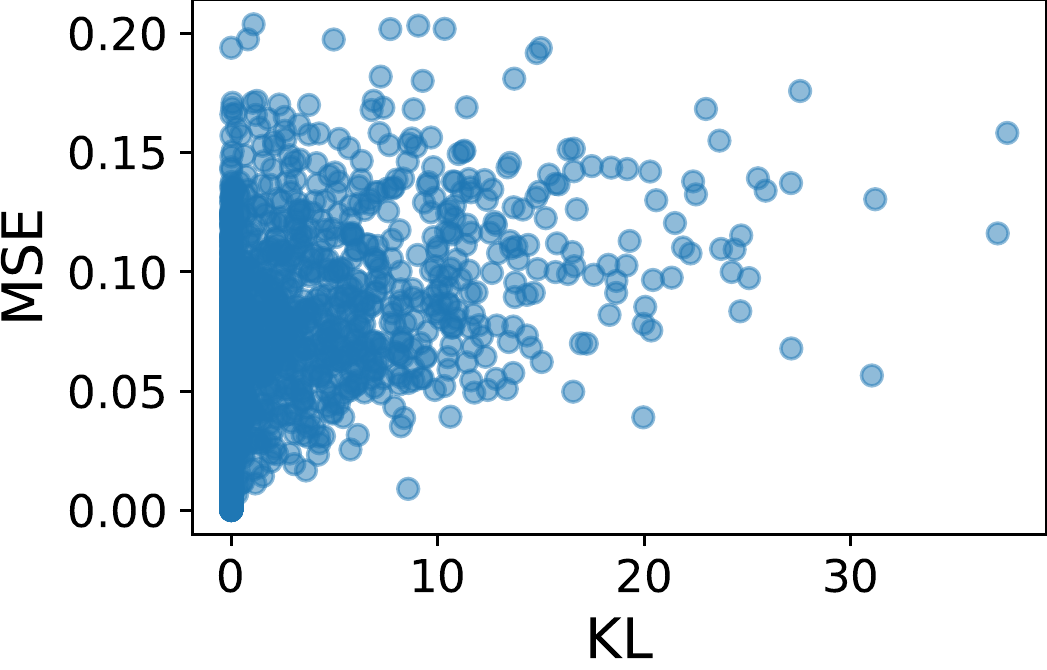}
    \caption{Correlation between MSE and KL metric.}
    \label{fig:mnist_attack_train_size_l2_kl}
\end{subfigure}\begin{subfigure}[t]{.2\textwidth}
\centering
    \includegraphics[width=0.99\linewidth]{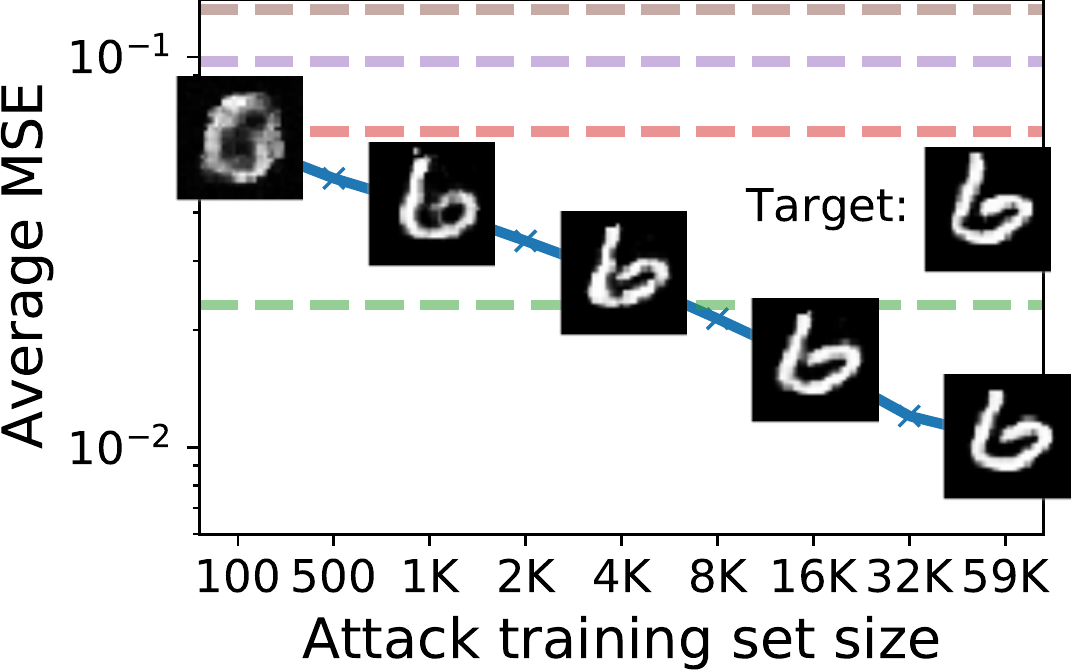}
    \caption{Average MSE against attack training set size.}
\label{fig:mnist_attack_train_size_l2_train_size}
\end{subfigure}\begin{subfigure}[t]{.2\textwidth}
\centering
    \includegraphics[width=0.99\linewidth]{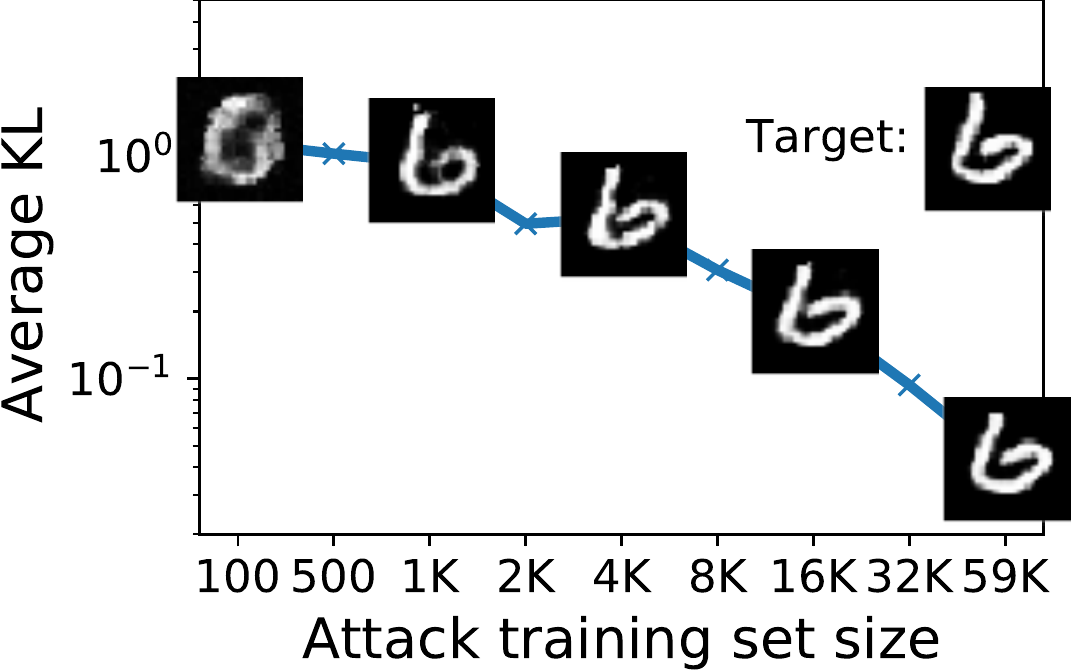}
\caption{Average KL against attack training set size.}
    \label{fig:mnist_attack_train_size_kl_train_size}
\end{subfigure}\captionsetup{width=0.99\textwidth}
\caption{
Correlation between metrics, and quality of reconstruction as a function of the number of shadow models.
}
\label{fig:mnist_attack_train_size}
\end{figure*}

Recall that the general reconstruction attack assumes the
attacker has access to $k$ shadow data points
$\Daux$
from the same distribution
as the target point.
From this knowledge, the attacker generates a collection of
shadow model-target pairs (the attack training data),
which is used to train the RecoNN.
Note that the size of the attack training data depends
both on the knowledge of the attacker (simply, the attacker may
not have access to many examples),
and on their
computational power:
they need to train one shadow model per data point
to create the attack training data.

We explore the fidelity of reconstruction on MNIST as
the amount of attack training data $k$ ranges from
$100$ to $59K$.
\Cref{fig:mnist_attack_train_size_l2_train_size} shows the average MSE between reconstructions over the $1K$ released model targets as $k$ varies.
Clearly, the attack becomes better as more training data is available. 
However, high fidelity reconstructions occur already with $1K$ shadow models; in our plots we include reconstructed examples at different values of $k$ illustrating this.
Reconstructions that are (on average) better than the NN oracle only require $8K$ shadow models.
Because the correlation between MSE and KL is not symmetric, we also plot the average KL against attack training set size and observe a similar monotonic decrease (\Cref{fig:mnist_attack_train_size_kl_train_size}).
We observe similar trends on CIFAR-10 when increasing the attack training set size; $5K$ shadow models is enough to generate reconstructions below the 1st percentile oracle MSE (\texttildelow0.05) and $10K$ shadow models will generate reconstructions below the NN oracle MSE (\texttildelow0.03).
See \Cref{app: transfer_learning} for full results on CIFAR-10.

\paragraph{Out-of-distribution (OOD) data on CIFAR-10}
\label{app:cifar10-ood}

The previous experiment indicates that reconstructions are poor when an adversary has relatively little side-information ($<1K$ points)
to create shadow models.
We now investigate if these additional points
must come from the same distribution as the fixed set and target sample.
If the attack succeeds even when $\Daux$ comes from a different distribution, they can potentially create a larger pool of shadow targets for the attack.
In addition, when reasonable OOD data is scarce or not available, the attacker could instead use $\Dfixed$ to train a generative model and use it to generate shadow targets from a similar distribution.

To relax the assumption that shadow targets
come from the same distribution
as the released model's training data
we use CIFAR-100, a standard OOD benchmark for CIFAR-10 \cite{DBLP:journals/corr/abs-2106-03004}, to construct the adversary's side knowledge.
\main{
In this experiment, the fixed data $\Dfixed$ and the 1K test targets are still selected from CIFAR-10,
but the shadow targets in $\Daux$ are OOD points corresponding to images sampled from CIFAR-100 annotated with a random CIFAR-10 label.
We measure attack success on the 1K released models with in-distribution CIFAR-10 targets.
As we observe a negligible difference in MSE between the two cases, we conclude that the success of the attack does not require having access to the correct prior distribution.
We exploit OOD data in later experiments when evaluating how the size of the fixed set affects reconstruction.
}

\arxiv{
\begin{figure}[H]
\captionsetup{width=0.5\textwidth}
  \centering
\begin{subfigure}[t]{.16\textwidth}
\centering
    \includegraphics[width=0.99\linewidth]{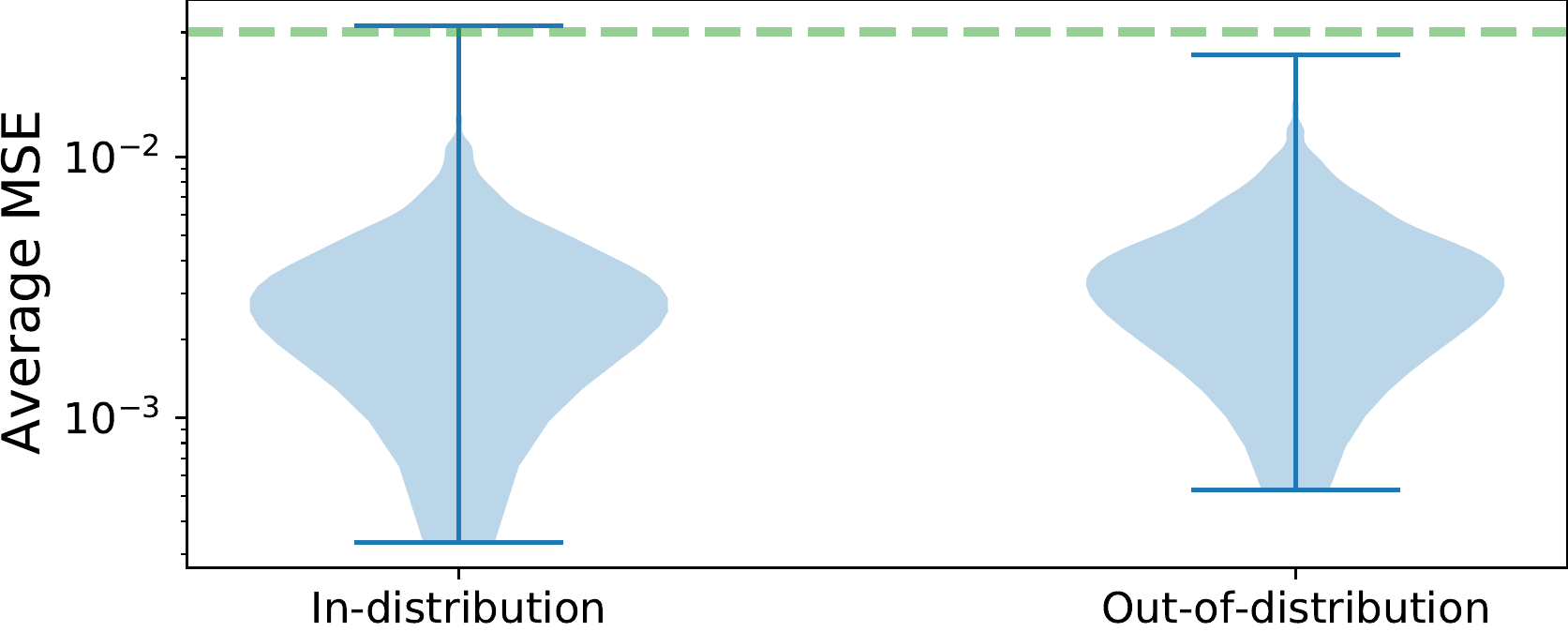}
\label{fig:}
\end{subfigure}\begin{subfigure}[t]{.16\textwidth}
\centering
    \includegraphics[width=0.99\linewidth]{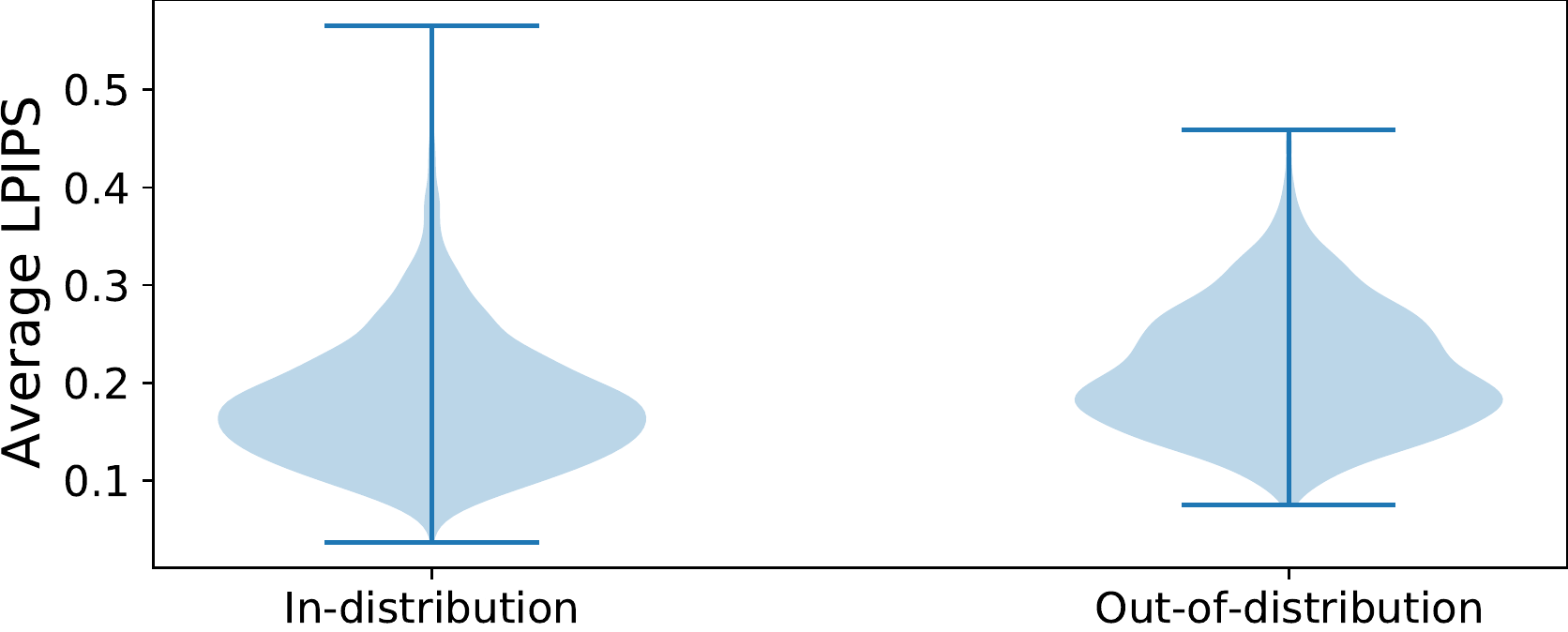}
\label{fig:}
\end{subfigure}\begin{subfigure}[t]{.16\textwidth}
\centering
    \includegraphics[width=0.99\linewidth]{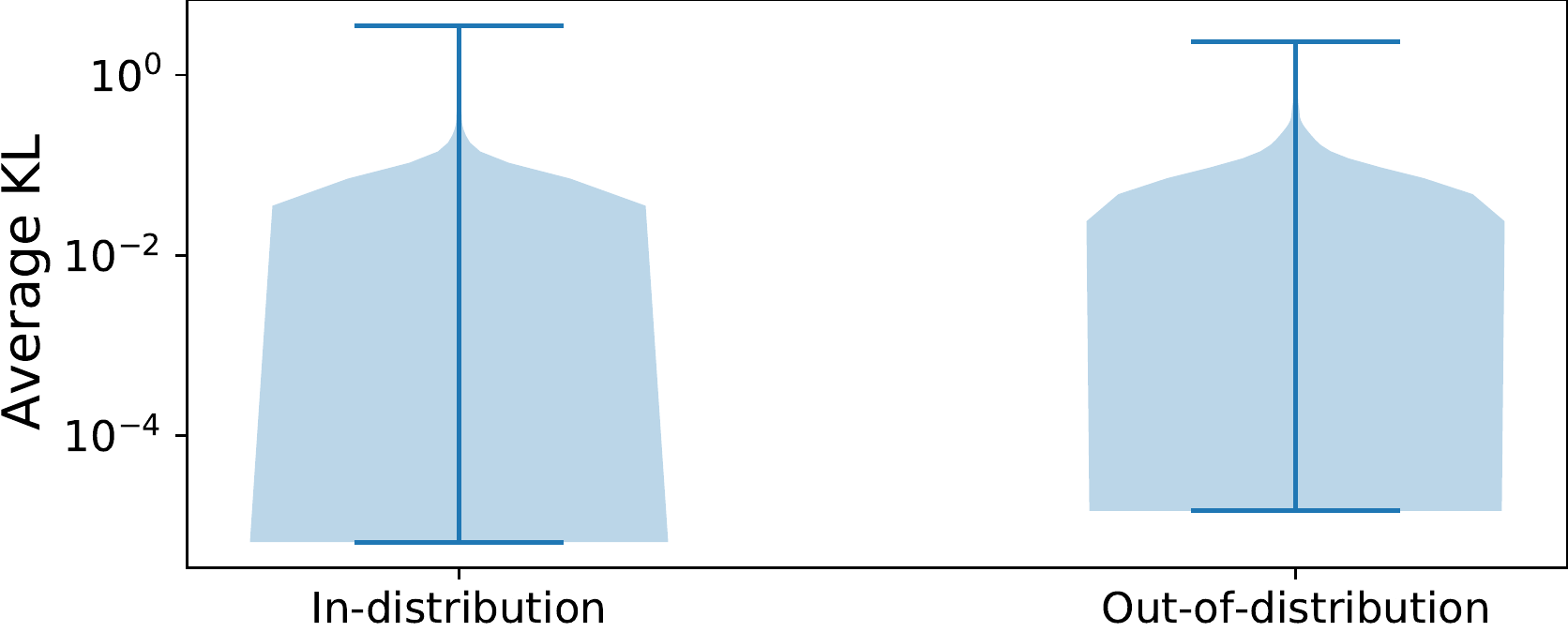}
\label{fig:}
\end{subfigure} \caption{Out-of-distribution experiments. Shadow models are created by either sampling the target from in-distribution (CIFAR-10) or out-of-distribution (CIFAR-100) data.}
\label{fig:cifar10_ood}
\end{figure}
\Cref{fig:cifar10_ood} shows
the difference between creating an attack training set with a fixed dataset $\Dfixed$ sampled from in-distribution data (CIFAR-10) and the additional shadow target points sampled from either in-distribution data (CIFAR-10) or out-of-distribution data (CIFAR-100).
We measure attack success on the 1$K$ released models with in-distribution targets (CIFAR-10).
We observe a negligible difference between the two, and so conclude the success of the attack is not predicated on access to the correct prior distribution.
We further exploit OOD data in the next part, when
we evaluate how the size of the fixed set
affects reconstruction.
}

\paragraph{Influence of training hyper-parameters}
\Cref{tab:summary} summarizes
what factors in training affect reconstruction.
The appendix expands upon these and gives empirical insights.

\noindent \textit{Fixed set size.} We measure the role of the fixed set size
by reducing from $10K$ to $1K$ (MNIST) and increasing from $5K$ to $50K$ (CIFAR-10).
We observe almost no difference in MSE in both cases; e.g.\ CIFAR-10 target points can be reconstructed even if there are $50K$ other points in the training set.

\noindent \textit{Model size and architecture.} We assess whether the size and architecture of the released model affect reconstruction. 
For MNIST, we increase the size of the hidden layer from $10$ to $100$; this increases the number of trainable parameters tenfold.
For CIFAR-10, we double the size from $50K$ to $100K$ by increasing the width of the first linear layer.
The rest of the architecture is kept to the defaults (\Cref{tab: cifar10_released_model}).
We observe almost no difference in reconstruction success
when attacking these larger released models.
Nevertheless, this attack has a bigger
computational cost: the size
of the RecoNN for CIFAR-10 increases
from $226M$ to over $400M$ parameters.

\noindent \textit{Layers.} Instead of allowing the RecoNN to process all parameters from a released model, we restrict to only the second layer for MNIST and convolutional layers for CIFAR-10.
This significantly reduces the input size to the reconstructor network, by $98\%$ on MNIST and $84\%$ on CIFAR-10.
We observe that this does not substantially affect the reconstruction fidelity, demonstrating that memorization of training points is not localized to a specific layer or small group of neurons.

\noindent \textit{Epochs.} The number of epochs
has a small impact on reconstruction.
For both MNIST and CIFAR-10 there is a slight increase in MSE if we more than double the number of training epochs, although targets are still successfully reconstructed. 
\arxiv{We investigate this relationship in more detail in \Cref{app: cifar10_factors} and  \Cref{app: finegrained_cifar10}.}
\main{We investigate this relationship in more detail in \Cref{app: finegrained_cifar10}.}

\noindent \textit{Activation.}
One may wonder why we used ELU activations in the released model instead of the more common ReLUs. 
We noticed that released models with ReLU activations tend to be harder to attack in comparison to other activation functions,
resulting in poor quality reconstructions on CIFAR-10 (i.e. MSE larger than the NN oracle).
It is well known that ReLUs induce sparse gradients; we observed that $>60\%$ of weights are not updated during training when the loss is computed with respect to the target. 
We suspect this is why RecoNN is less effective against ReLU activated models:
there is less mutual information between the model parameters and the target in comparison to models trained with other activations.
We discuss this in further detail in \Cref{app: relu_act}.

\noindent \textit{Learning rate.} Decreasing the learning rate of the released model did not affect the attack in the deterministic training setting. 
If randomness is introduced via mini-batch sampling, we will see that the learning rate impacts reconstruction.
\arxiv{In \Cref{app: mnist_batch_size_lr_epoch_fixed_size}, we better investigate the role of learning rate;
we find that a larger rate can harm the success of the attack in settings where the released is trained with mini-batches.}

\paragraph{Randomness from data sub-sampling}

\begin{figure}[t]
\captionsetup{width=0.5\textwidth}
  \centering
\begin{subfigure}[t]{.24\textwidth}
\centering
    \includegraphics[width=0.99\linewidth]{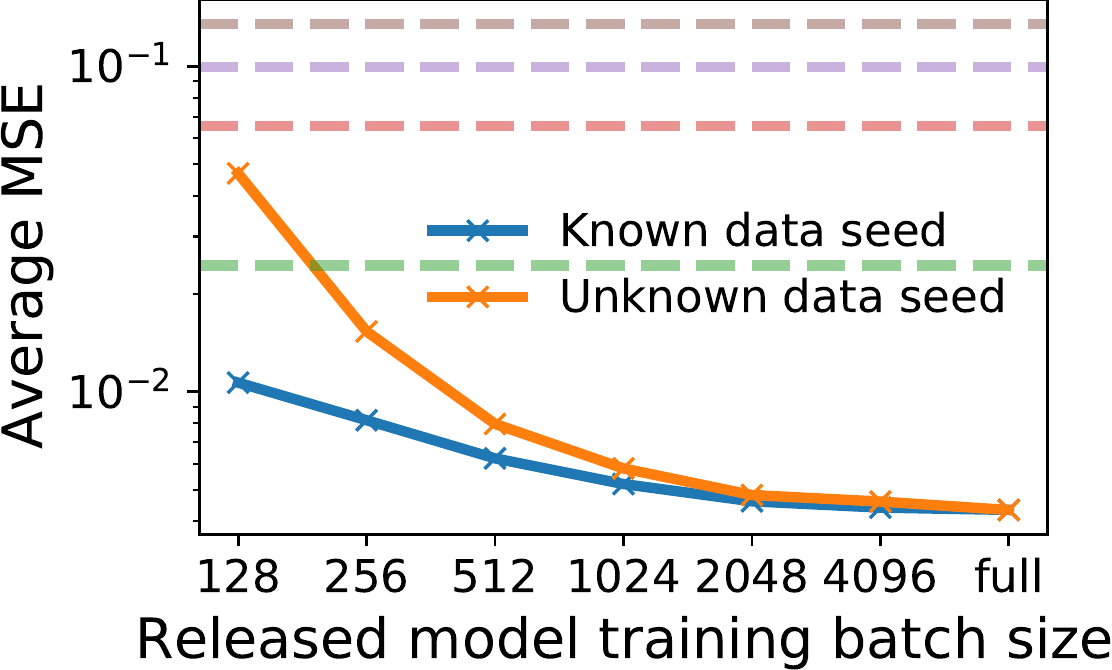}
\end{subfigure}\begin{subfigure}[t]{.24\textwidth}
\centering
    \includegraphics[width=0.99\linewidth]{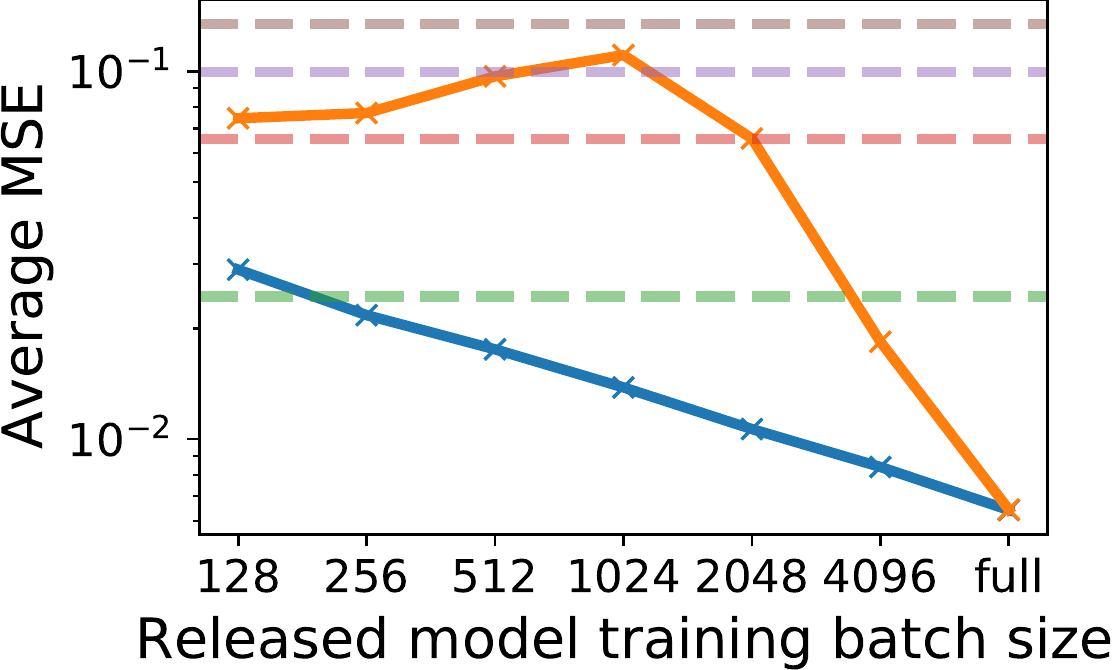}
\end{subfigure}\caption{MSE and released model training batch size when the adversary knows/does not know the data sub-sampling random seed. MSE is sensitive to the learning rate and momentum.
Learning rate: 0.01 (left), 0.2 (right).
Momentum: 0 (both).
} 
\label{fig:mnist_batch_size_main}
\end{figure}

\begin{figure}[t]
\captionsetup{width=0.5\textwidth}
  \centering
\begin{subfigure}[t]{.12\textwidth}
\centering
    \includegraphics[width=0.99\linewidth]{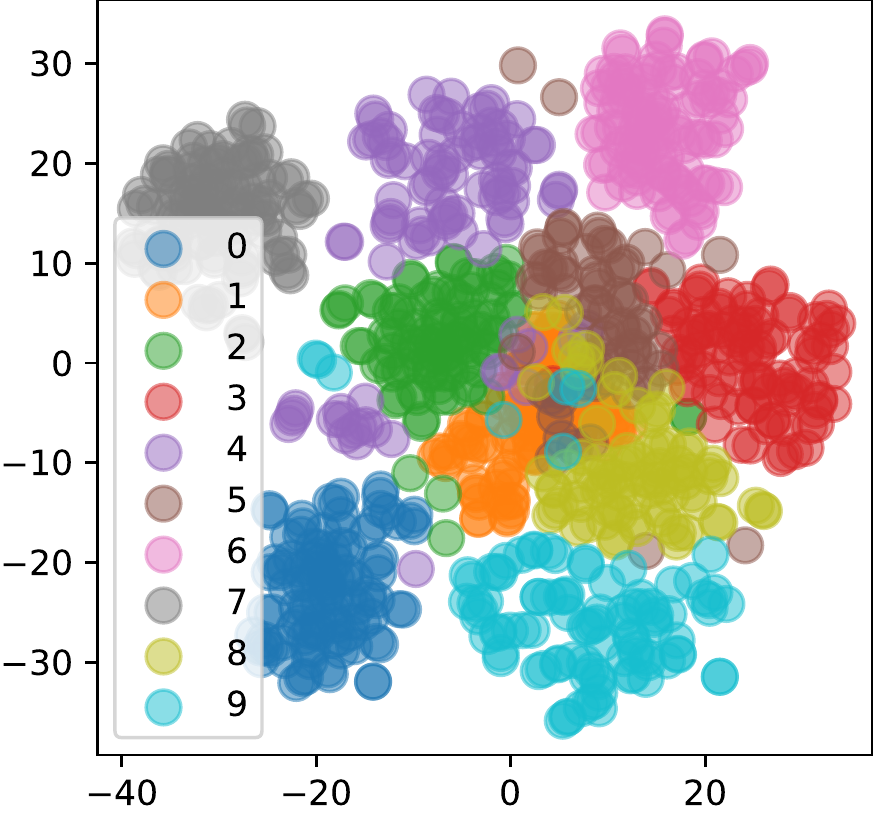}
\end{subfigure}\begin{subfigure}[t]{.12\textwidth}
\centering
    \includegraphics[width=0.99\linewidth]{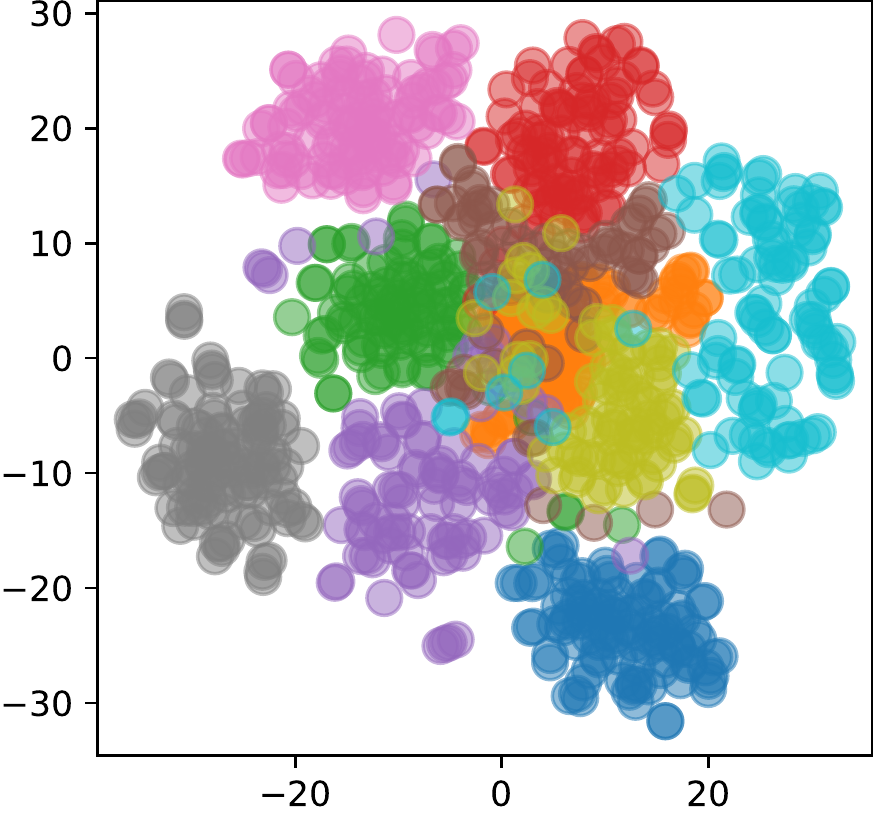}
\end{subfigure}\begin{subfigure}[t]{.12\textwidth}
\centering
    \includegraphics[width=0.99\linewidth]{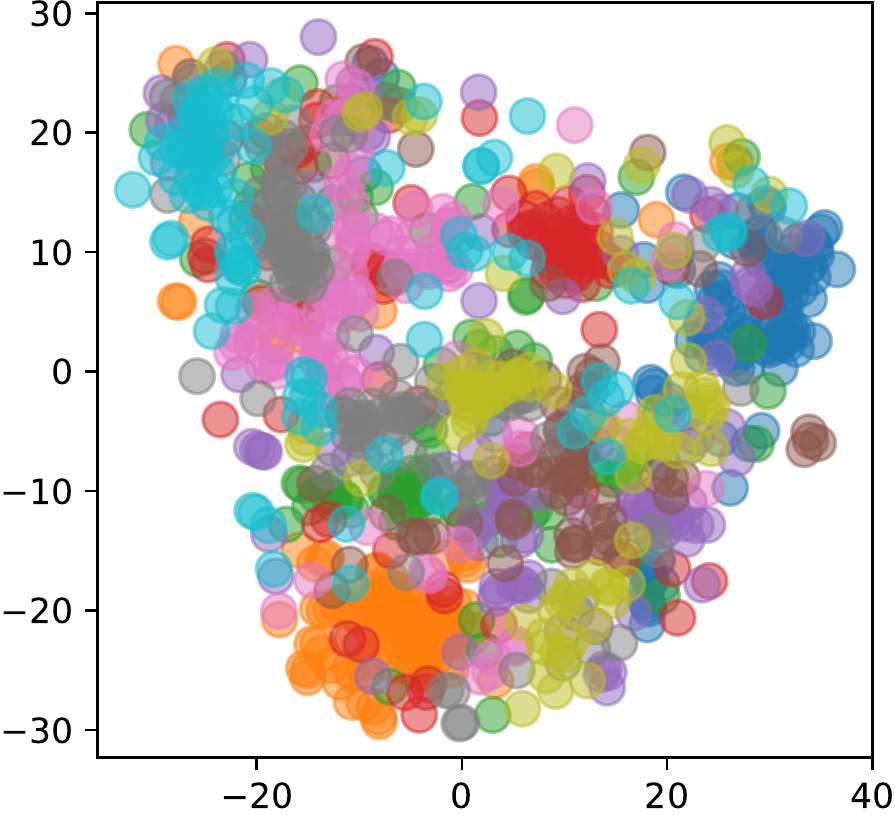}
\end{subfigure}\begin{subfigure}[t]{.12\textwidth}
\centering
    \includegraphics[width=0.99\linewidth]{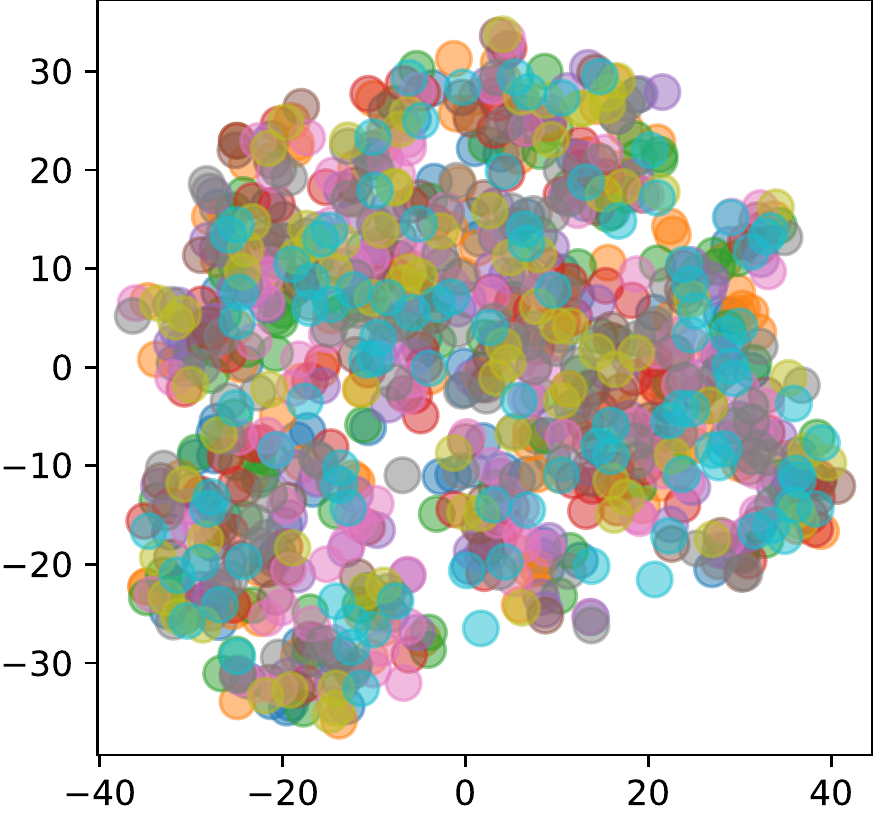}
\end{subfigure}\caption{TSNE embeddings of 1K released models trained with 1024 batch-size.
From left to right (known sub-sampling seed, learning rate):
(Yes, 0.01), (No, 0.01), (Yes, 0.2), (No, 0.2).
}
\label{fig:mnist_tsne}
\end{figure}

We explore how randomness stemming from data sub-sampling affects the attack on MNIST, by removing the assumption that the released model is trained with full batch gradient descent. 
We consider settings where the adversary knows the random seed used to shuffle the data
(this corresponds to SGD but with no randomness),
and settings where the adversary does not know the random seed.
Results in \Cref{fig:mnist_batch_size_main}
indicate that when the adversary knows the data shuffling seed, reconstruction attacks are successful even for small batch sizes.
Without knowing the seed, attack success depends on the training hyper-parameters, such as the choice of the learning rate.
It appears that attacking models with randomness from sub-sampling is more difficult than determinstically trained released models, and that larger learning rates also increase the hardness of the reconstruction task.
Loss landscapes of neural networks are extremely non-convex and contain many local optima~\cite{DBLP:conf/nips/Li0TSG1}; if more randomness is introduced, this will increase the opportunity for different shadow models to reach different optima.
This increases the difficulty of reconstruction as these shadow models will not be representative of the optima attained by the released model, and training with a larger learning rate will exacerbate this issue.
In \Cref{fig:mnist_tsne}, we show plot TSNE embeddings of parameters for all $1K$ released models for each of the two learning rates given in \Cref{fig:mnist_batch_size_main} and the two randomness settings (known and unknown seed) for a batch size of $1024$.
We represent each released model with a color depending on the label of the respective target. 
For a small learning rate, labels are grouped together in both known and unknown seed settings, implying the local optima these models realize are similar;
this makes it easier for the RecoNN to learn and subsequently generalize to the released model.
Conversely, in the large learning rate setting there is a stark difference between known and unknown seed settings:
if the seed is known,
groupings of labels still happen,
and a successful attack is possible;
however, if the seed is unknown,
the local optima reached by each released model has less structure that
the reconstructor network can learn on.

\arxiv{In \Cref{app: mnist_batch_size_lr_epoch_fixed_size} we
show comprehensive results with more learning
rates and evaluated on more metrics.}

\paragraph{Randomness from model initialization}

We explore how initialization randomness can affect the attack on MNIST. 
Firstly, we remove the assumption that the adversary knows the initial parameters of the released model; in practice, this means training each released and shadow model with a new random seed controlling the model's initial parameters.
By default, each linear and convolutional layer is initialized with Lecun Normalization, which is the default in the Haiku library~\cite{haiku2020github}.
In our experiments, we evaluated
other common initialization procedures (e.g., Glorot, He),
which did not change any of our findings;
we omit these results.
We refer the reader to \Cref{fig:flagship-experiment} for visual inspection of reconstructions at the two error rates reported in \Cref{tab:summary}, and conclude that the attack is unable to successfully reconstruct without knowledge of initialization, as they are far larger than the NN oracle described in \Cref{ssec: metrics}.

One may conjecture that the current attack pipeline is not suitable for this setting:
we only train a single shadow model per shadow target, which may fail to capture the variance in shadow model parameters over different initializations for the same shadow target.
For this reason, we further created an attack training set of $5M$ shadow model-target pairs, consisting of $10K$ shadow targets, where each target has $500$ shadow models all differing in initial parameters. 
Even so, this approach did not improve the MSE reported in \Cref{tab:summary}.
In \Cref{app: init_randomness},
we discuss evidence suggesting that reconstruction may not be possible without knowing the initial released model parameters.
A similar observation was made by Jagielski et al. \cite{jagielski2020auditing}, who run attacks to find lower bounds of the privacy budget $\epsilon$ in DP-SGD.
They observed that the bounds become tighter with less randomness from model initialization.

\subsection{Black-box Access to Released Model}
\label{ssec: blackbox_exp}
\arxiv{
The attack assumes white-box access to the released model parameters, and so a natural question arises: can we construct an attack that achieves a similar MSE distance without white-box access?
Our attack is constructed by learning the relation between the released model parameters and the unknown target point; the white-box attack uses these parameters directly by flattening released model parameters, concatenating each layer, normalizing, and passing this to the reconstructor network.
However, we could instead use other representations that contain information about the released model.
}
We design a black-box attack by limiting the adversary's access to only the logits predicted by the released model. 
For each \attackin{}, using a set of 200 (500) images from $\Daux$ for MNIST (CIFAR-10), the adversary collects the logit outputs of each image, concatenates them together, and uses this as the feature representation of the model, instead of the flattened weights.
This reduces the dimensionality of the feature vector from 8K to 2K for MNIST and 55K to 5K for CIFAR-10.
The average MSE using this logit representation approach is $0.011$ for MNIST and $0.0198$ for CIFAR-10, which is only marginally worse than the MSE of white-box attacks with default settings, and still much better that the NN oracle. 
We conclude that black-box reconstruction attacks
are feasible and have comparable performance
to white-box ones.

\begin{figure}[t]
\captionsetup{width=1.\linewidth}
\centering
  \centering
  \includegraphics[width=0.7\linewidth]{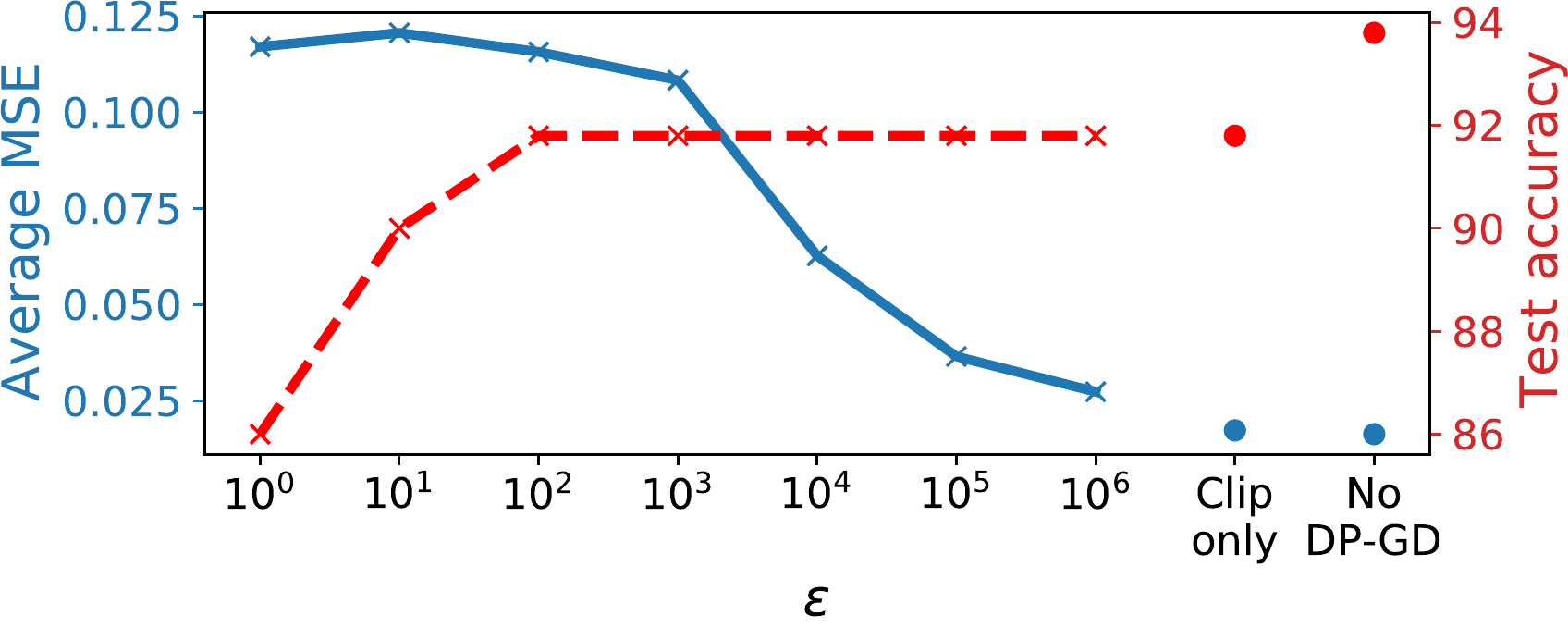}
  \caption{Average MSE of reconstructions and test accuracy of released model using $(\epsilon,\delta)$-DP on the MNIST dataset.}
  \label{fig:dpsgd_mnist}
\end{figure}

\begin{figure}[t]
\centering
  \centering
  \includegraphics[width=0.99\linewidth]{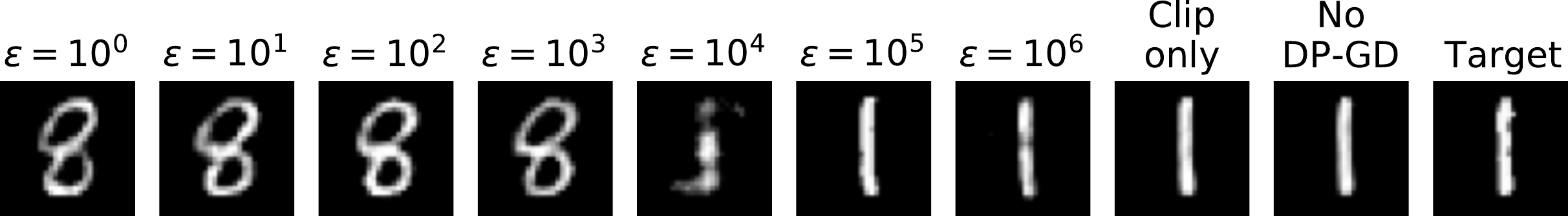}
    \caption{Example of MNIST reconstructions under DP.}
    \label{fig:dpgd_example}
\end{figure}

\subsection{Released Model Trained with Differential Privacy}
\label{ssec: mnist_dp_exp}

Having discussed what factors help and
hinder reconstruction,
we now evaluate on MNIST the resilience
of models trained with DP.
The released model training set-up is identical to
before (\Cref{ssec: attack_desc}), except we train with full batch DP gradient descent (DP-GD) with clipped gradients \citep{DBLP:conf/ccs/AbadiCGMMT016}.
Gradients are clipped to have a maximum $\ell_2$ norm of $1$, and Gaussian noise (unknown to the adversary) is added to make the model $(\epsilon, \delta)$-DP with $\delta = 10^{-5}$.
\Cref{fig:dpsgd_mnist} shows that even a large $\epsilon$
successfully mitigates reconstruction attacks, and that in these $\epsilon$ regimes the reduction in utility (measured by test accuracy) is negligible (\Cref{app: cifar10_dp} reports similar results on CIFAR-10).
Interestingly, for high levels of privacy, the reconstruction attack generates realistic but wildly incorrect reconstructions
(\Cref{fig:dpgd_example}).
These findings motivate our theoretical investigation into what level of DP is sufficient to protect against reconstruction attacks.

 \section{Towards Formal Guarantees Against Reconstruction Attacks}
\label{sec:defenses}

Mitigations that (provably) protect released models against reconstruction attacks
can (and should) be implemented within the training algorithm used by the model developer.
Protections that defend against effective reconstruction by informed adversaries will also protect against attacks by weaker, more realistic adversaries.
In this section, we propose a definition of \emph{reconstruction robustness} against informed adversaries, and compare it to the privacy guarantees afforded by DP.
As will soon become apparent, the strength of mitigations against reconstruction is necessarily going to be \emph{relative} to the strength of the prior information available to the adversary.

\subsection{Reconstruction Robustness}

Our main definition focuses on bounding the success probability of achieving accurate reconstruction by any (informed) adversary. The definition is parameterized by the side information available to the adversary, captured by a probabilistic prior $\prior$ from which the target $z$ is sampled, and by the adversary's goal expressed as a measure of reconstruction error $\lattackloss$.

\begin{definition}\label{def:rero}
Let $\pi$ by a prior over $\Zset$ and $\lattackloss : \Zset \times \Zset \to \R_{\geq 0}$ a reconstruction error function.
A randomized mechanism $M : \Zset^{n} \to \Theta$ is $(\eta, \gamma)$-ReRo (\emph{reconstruction robust}) with respect to $\pi$ and $\lattackloss$ if for any dataset $\Dfixed \in \Zset^{n-1}$ and any reconstruction attack $R : \Theta \to \Zset$ we have
\begin{equation}\label{eqn:rero-def}
    \Pr_{Z \sim \pi, \theta \sim M(\Dfixed \cup \{Z\})}[\lattackloss(Z, \reconstruct(\theta)) \leq \eta] \leq \gamma \enspace.
\end{equation}
\end{definition}

Suppose $M$ is an $(\eta, \gamma)$-ReRo mechanism.
The definition prevents any reconstruction attack with knowledge\footnote{Knowledge of $\pi$ and $D$ in the attack is implicit through the fact that \eqref{eqn:rero-def} has to hold for any reconstruction attack.} of the prior  $\pi$, the dataset $\Dfixed$ and the output $\theta = M(\Dfixed \cup \{Z\})$ to attain a reconstruction error lower than $\eta$ on an unknown target $Z \sim \pi$ with probability larger than $\gamma$.
A good ReRo mechanism is one with large $\eta$ and very small $\gamma$, i.e.\ one where even ``decent'' reconstructions are impossible with high probability.
In practice a tension between these two parameters is expected, at least for mechanisms providing some form of utility when computing a function depending on all the inputs.

Definition~\ref{def:rero} assumes the reconstruction attack is deterministic. We could consider randomized attacks instead, but note that determinism is not a limitation when trying to capture worst-case attacks: the $R$ that maximizes $\Pr[\lattackloss(Z, \reconstruct(\theta)) \leq \eta]$ is given by the (deterministic) \emph{maximum a posteriori} attack:
\begin{align*}
    R^*(\theta) = \argmax_{\hat{z} \in \Zset} \Pr_{Z \sim \pi}[\lattackloss(Z, \hat{z}) \leq \eta | M(\Dfixed \cup \{Z\}) = \theta] \enspace.
\end{align*}
Similarly, the definition protects against adversaries with full knowledge of the prior $\prior$. Since the optimal attack run by an adversary with a wrong prior is necessarily weaker than the optimal attack with a correct prior, assuming the adversary knows $\pi$ is preferable when designing mitigations.

Our main results provide two connections between reconstruction robustness and DP.
The first observation is that DP implies reconstruction robustness.
Quantitatively, we show that the ReRo parameters of a R{\'e}nyi DP (RDP) mechanism depend in a simple way on its privacy parameters and another quantity capturing the relation between $\prior$ and $\lattackloss$.
The second observation is that any mechanism that is robust against exact reconstruction with respect to a sufficiently rich family of priors supported on pairs of points must satisfy DP.
Together, both results stress the importance of correctly modelling an adversary's prior knowledge in effectively protecting against reconstruction attacks.
In particular, we show that very weak DP guarantees suffice to protect against reconstruction when the adversary has limited knowledge about the target point.

\subsection{From DP to ReRo}\label{sec:dp2rero}

We now show that differentially private mechanisms provide reconstruction robustness. Let us recall the definitions of approximate and R{\'e}nyi DP.

\begin{definition}[\cite{DBLP:conf/tcc/DworkMNS06,DBLP:conf/eurocrypt/DworkKMMN06,DBLP:conf/csfw/Mironov17}]
Let $M : \Zset^n \to \Theta$ be a randomized mechanism, $\epsilon > 0$, $\delta \in [0,1]$ and $\alpha > 1$. We say that:
\begin{enumerate}
    \item $M$ is $(\epsilon, \delta)$-DP if for any datasets $D, D' \in \Zset^n$ differing in a single record and any event $E \subseteq \Theta$ we have
    \begin{align*}
        \Pr[M(D) \in E] - e^{\epsilon} \Pr[M(D') \in E] \leq \delta \enspace.
    \end{align*}
    When $\delta = 0$ we simply say the mechanism is $\epsilon$-DP.
    \item $M$ is $(\alpha, \epsilon)$-RDP if for any datasets $D, D' \in \Zset^n$ differing in a single record we have
    \begin{align*}
        \E_{\theta \sim M(D')}\left[\left(\frac{\Pr[M(D) = \theta]}{\Pr[M(D') = \theta]}\right)^{\alpha}\right] \leq e^{(\alpha - 1) \epsilon} \enspace.
    \end{align*}
\end{enumerate}
\end{definition}

The effect of the prior on ReRo bounds obtained from DP is through an anti-concentration property. For prior $\pi$, error function $\lattackloss$ and error threshold $\eta$, define the \emph{baseline error} as
\begin{equation*}
  \kappa_{\pi, \lattackloss}(\eta) = \sup_{z_0 \in \Zset} \Pr_{Z \sim \pi}[\lattackloss(Z, z_0) \leq \eta] \enspace. 
\end{equation*}
When $\pi$, $\lattackloss$ or $\eta$ are clear from the context we may drop them to unclutter our notation.
Whenever $\lattackloss$ is a metric on $\Zset$, an upper bound on $\kappa$ provides a measure of \emph{anti-concentration} of the prior by guaranteeing that no single point has too much of probability mass concentrated around it;
bounds for $\kappa$ for some prior distributions are given in Section~\ref{sec:high-uncertainty}.
Another interpretation of $\kappa$ is as the success probability of the best \emph{oblivious} reconstruction attack that ignores the output of $M$.
By this interpretation, the next theorem says that if a mechanism is RDP, the best reconstruction attack cannot have success probability much larger than the best oblivious attack.

\begin{restatable}[]{theorem}{dprero}
\label{thm:dp2rero}
Fix $\pi$, $\lattackloss$ and $\eta > 0$, and let $\kappa = \kappa_{\pi,\ell}(\eta)$.
If a mechanism $M$ satisfies $(\alpha, \epsilon)$-RDP then it also satisfies $(\eta, \gamma)$-ReRo with respect to $\pi$ and $\lattackloss$ with $\gamma = \left(\kappa \cdot e^{\epsilon} \right)^{\frac{\alpha - 1}{\alpha}}$.
\end{restatable}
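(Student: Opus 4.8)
The plan is to deduce the ReRo bound from RDP through a pointwise ``probability-preimage'' inequality combined with Jensen's inequality. Since the discussion following Definition~\ref{def:rero} shows it suffices to consider a deterministic attack, I fix an arbitrary $\Dfixed \in \Zset^{n-1}$ and a deterministic $\reconstruct : \Theta \to \Zset$, and write the quantity to be bounded as $p = \E_{Z \sim \pi}\, \Pr_{\theta \sim M(\Dfixed \cup \{Z\})}[\ell(Z, \reconstruct(\theta)) \le \eta]$. The crux is to replace the data-dependent output distribution $M(\Dfixed \cup \{Z\})$ by a fixed reference distribution that does not depend on the target, so that the success event can be related to the oblivious baseline $\kappa$.

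First I would establish the preimage inequality: if $P, Q$ are distributions on $\Theta$ with R\'enyi divergence $D_\alpha(P \| Q) \le \epsilon$, then for every event $E \subseteq \Theta$ one has $P(E) \le (e^\epsilon Q(E))^{(\alpha-1)/\alpha}$. This follows from writing $P(E) = \E_{\theta \sim Q}[\mathbf{1}_E(\theta)\, P(\theta)/Q(\theta)]$ and applying H\"older's inequality with conjugate exponents $\alpha$ and $\alpha/(\alpha-1)$, using that the definition of RDP is exactly $\E_Q[(P/Q)^\alpha] = e^{(\alpha-1)D_\alpha(P\|Q)} \le e^{(\alpha-1)\epsilon}$ and that $\mathbf{1}_E$ is idempotent.

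Next I would fix any reference point $w \in \Zset$ and set $Q = M(\Dfixed \cup \{w\})$; since $\Dfixed \cup \{z\}$ and $\Dfixed \cup \{w\}$ differ in a single record, RDP gives $D_\alpha(M(\Dfixed \cup \{z\}) \| Q) \le \epsilon$ for every target $z$. Applying the preimage inequality to the target-dependent event $E_z = \{\theta : \ell(z, \reconstruct(\theta)) \le \eta\}$ yields, for each $z$, the bound $\Pr_{\theta \sim M(\Dfixed \cup \{z\})}[\ell(z,\reconstruct(\theta)) \le \eta] \le (e^\epsilon\, Q(E_z))^{(\alpha-1)/\alpha}$. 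Taking the expectation over $Z \sim \pi$ and using that $t \mapsto t^{(\alpha-1)/\alpha}$ is concave for $\alpha > 1$, Jensen's inequality lets me pull the exponent outside to obtain $p \le (e^\epsilon\, \E_{Z \sim \pi}[Q(E_Z)])^{(\alpha-1)/\alpha}$.

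It remains to identify the inner quantity with the baseline error. Since $Q$ is independent of $Z$, Fubini lets me swap the two expectations: $\E_{Z \sim \pi}[Q(E_Z)] = \E_{\theta \sim Q}\, \Pr_{Z \sim \pi}[\ell(Z, \reconstruct(\theta)) \le \eta]$, and for each fixed $\theta$ the inner probability is at most $\sup_{z_0} \Pr_{Z \sim \pi}[\ell(Z, z_0) \le \eta] = \kappa$ by taking $z_0 = \reconstruct(\theta)$. Hence $\E_Z[Q(E_Z)] \le \kappa$ and $p \le (e^\epsilon \kappa)^{(\alpha-1)/\alpha} = \gamma$, as claimed. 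I expect the main subtlety to be that the success event $E_z$ depends on the target $z$, which rules out a single-event application of DP; the resolution is precisely that the reference $Q$ is target-independent, so the preimage bound holds pointwise in $z$ and can then be averaged via the concavity (Jensen) step, after which swapping expectations exposes the oblivious baseline $\kappa$.
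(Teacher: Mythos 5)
Your proof is correct and achieves exactly the paper's bound, but it organizes the same ingredients in a genuinely different way --- the two arguments are dual to one another. Both use a fixed reference dataset, H\"older's inequality with conjugate exponents $\alpha$ and $\alpha/(\alpha-1)$, Jensen's inequality, and Fubini, but on opposite variables and in opposite order. You apply H\"older over the \emph{output space} $\Theta$ for each fixed target $z$, which packages the RDP assumption into a standalone preimage lemma $P(E) \leq \left(e^{\epsilon} Q(E)\right)^{(\alpha-1)/\alpha}$ (a known probability-preservation property of R\'enyi divergence); the prior then enters only afterwards, via Jensen over $Z \sim \pi$ and a Fubini swap that exposes $\kappa$. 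The paper does the reverse: it applies H\"older over the \emph{prior} (the integral in $z$, for each fixed $\theta$), so that $\kappa$ appears first, then uses Jensen over the output space, and only invokes the RDP bound at the very end after Fubini. What your route buys is modularity: the privacy hypothesis is consumed once, pointwise in $z$, through a reusable lemma, so every remaining step is elementary probability and the same skeleton extends immediately to any divergence satisfying such a preimage inequality. What the paper's ordering buys is a single self-contained chain of integral inequalities that never needs to name the intermediate events $E_z$ or the reference distribution $Q$. Two minor remarks: the reduction to deterministic attacks you invoke is not actually needed, since \Cref{def:rero} already quantifies only over deterministic maps $R : \Theta \to \Zset$; and, exactly as in the paper's own proof, writing the likelihood ratios presumes absolute continuity between the output distributions, which is implicit in the finiteness of the R\'enyi divergence.
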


Taking $\alpha \to \infty$ and recalling that $(\infty,\epsilon)$-RDP is equivalent to $\epsilon$-DP \cite{DBLP:conf/csfw/Mironov17} we obtain the following corollary.

\begin{restatable}[]{corollary}{reropuredp}
\label{cor:rero-pure-dp}
Fix $\pi$, $\lattackloss$ and $\eta > 0$, and let $\kappa = \kappa_{\pi,\ell}(\eta)$.
If a mechanism $M$ satisfies $\epsilon$-DP then it also satisfies $(\eta, \gamma)$-ReRo with respect to $\pi$ and $\lattackloss$ with $\gamma = \kappa \cdot e^{\epsilon}$.
\end{restatable}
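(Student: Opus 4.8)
The plan is to obtain \Cref{cor:rero-pure-dp} as the $\alpha \to \infty$ specialization of \Cref{thm:dp2rero}. First I would record the standard fact that pure $\epsilon$-DP is the strongest form of Rényi DP. Since $\epsilon$-DP forces the privacy loss $\log(\Pr[M(D)=\theta]/\Pr[M(D')=\theta])$ to lie in $[-\epsilon,\epsilon]$ pointwise for every pair of neighbouring datasets, bounding this ratio by $e^\epsilon$ inside the Rényi moment gives $\E_{\theta \sim M(D')}[(\Pr[M(D)=\theta]/\Pr[M(D')=\theta])^\alpha] \le e^{(\alpha-1)\epsilon}$ for every $\alpha > 1$. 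Hence an $\epsilon$-DP mechanism is $(\alpha,\epsilon)$-RDP simultaneously for all finite $\alpha$, consistent with the $(\infty,\epsilon)$-RDP characterization of \cite{DBLP:conf/csfw/Mironov17}.

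Second, I would feed this whole family of RDP guarantees into \Cref{thm:dp2rero}. For each $\alpha > 1$ the theorem yields that $M$ is $(\eta,\gamma_\alpha)$-ReRo with respect to $\prior$ and $\lattackloss$, where $\gamma_\alpha = (\kappa e^\epsilon)^{(\alpha-1)/\alpha}$ and $\kappa = \kappa_{\prior,\ell}(\eta)$. Because the ReRo inequality \eqref{eqn:rero-def} asserts that the very same success probability is bounded above by $\gamma_\alpha$ for every $\alpha$, that probability is in fact bounded by the infimum $\inf_{\alpha>1}\gamma_\alpha$. So the whole proof reduces to evaluating this infimum and showing it equals $\kappa e^\epsilon$.

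The only delicate point — and the step I expect to require the most care — is passing to the limit, because the behaviour of $x^{(\alpha-1)/\alpha}$ as $\alpha \to \infty$ depends on whether $x := \kappa e^\epsilon$ is below or above $1$. I would split into two cases. If $\kappa e^\epsilon \le 1$, the exponent $(\alpha-1)/\alpha$ increases to $1$, the map $\alpha \mapsto x^{(\alpha-1)/\alpha}$ decreases towards $x$, and thus $\inf_{\alpha>1}\gamma_\alpha = \lim_{\alpha\to\infty}\gamma_\alpha = \kappa e^\epsilon$, giving the bound $\gamma = \kappa e^\epsilon$. If instead $\kappa e^\epsilon > 1$, then $\gamma = \kappa e^\epsilon > 1$ trivially upper-bounds any probability and the claim holds vacuously. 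In both cases the left-hand side of \eqref{eqn:rero-def} is at most $\kappa e^\epsilon$, which is exactly the corollary.

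As an independent check that sidesteps the limiting argument entirely, I would also note a direct derivation from pure DP: write the success probability as $\E_{Z\sim\prior}\,\Pr_{\theta\sim M(\Dfixed\cup\{Z\})}[\ell(Z,\reconstruct(\theta))\le\eta]$, apply $\epsilon$-DP to replace the data-dependent mechanism $M(\Dfixed\cup\{Z\})$ by $M(\Dfixed\cup\{z_0\})$ for an arbitrary fixed reference $z_0$ (at a multiplicative cost of $e^\epsilon$), swap the order of the two expectations by Fubini, and finally bound $\Pr_{Z\sim\prior}[\ell(Z,\reconstruct(\theta))\le\eta]\le\kappa$ for each fixed $\theta$ straight from the definition of the baseline error. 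This recovers $\gamma = \kappa e^\epsilon$ and reassures that the case analysis in the limiting route introduces no slack.
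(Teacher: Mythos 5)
Your proposal is correct and takes essentially the same route as the paper: the paper also obtains this corollary by letting $\alpha \to \infty$ in Theorem~\ref{thm:dp2rero} and recalling that $\epsilon$-DP coincides with $(\infty,\epsilon)$-RDP, and your infimum-over-$\alpha$ argument (with the case split on whether $\kappa e^{\epsilon} \leq 1$) merely makes rigorous the limit the paper takes implicitly. Your closing direct derivation via the pointwise likelihood-ratio bound, Fubini, and the definition of $\kappa$ is also sound --- it is exactly what the proof of Theorem~\ref{thm:dp2rero} degenerates to when H\"older's inequality is replaced by its $L^{\infty}$--$L^{1}$ endpoint --- but it is not needed for the corollary.
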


Another way to interpret Theorem~\ref{thm:dp2rero} is through the lens of zero-concentrated DP (zCDP) \cite{DBLP:conf/tcc/BunS16}.
A mechanism is $\rho$-zCDP if it satisfies $(\alpha, \alpha \rho)$-RDP for every $\alpha > 1$.
This definition provides a natural and convenient way to express the privacy afforded by the ubiquitous Gaussian mechanism \cite{DBLP:conf/eurocrypt/DworkKMMN06}.
Applying Theorem~\ref{thm:dp2rero} to a $\rho$-zCDP mechanism and optimizing $\alpha$ to minimize the upper bound yields the following.

\begin{restatable}[]{corollary}{rerozcdp}
\label{cor:rero-zcdp}
Fix $\pi$, $\lattackloss$ and $\eta > 0$, and let $\kappa = \kappa_{\pi,\ell}(\eta)$.
If a mechanism $M$ satisfies $\rho$-zCDP with $\rho < \log(1/\kappa)$
then it also satisfies $(\eta, \gamma)$-ReRo with respect to $\pi$ and $\lattackloss$ with $\gamma = e^{-(\sqrt{\log(1/\kappa)} - \sqrt{\rho})^2}$.
\end{restatable}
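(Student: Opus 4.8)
The plan is to specialize Theorem~\ref{thm:dp2rero} to the $\rho$-zCDP regime and then optimize the free RDP order $\alpha$. Since $\rho$-zCDP means that $(\alpha, \alpha\rho)$-RDP holds \emph{simultaneously} for every $\alpha > 1$, I would substitute $\epsilon = \alpha\rho$ into the ReRo bound $\gamma = (\kappa \cdot e^{\epsilon})^{(\alpha-1)/\alpha}$ supplied by Theorem~\ref{thm:dp2rero}. This produces a one-parameter family of valid guarantees
\[
    \gamma(\alpha) = \left(\kappa \cdot e^{\alpha\rho}\right)^{\frac{\alpha-1}{\alpha}}, \qquad \alpha > 1 .
\]
Each member of this family is a genuine $(\eta,\gamma(\alpha))$-ReRo statement for $M$, so the mechanism is in fact $(\eta, \inf_{\alpha>1}\gamma(\alpha))$-ReRo and the entire task reduces to a one-dimensional minimization.

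To perform the minimization I would pass to logarithms. Writing $L = \log(1/\kappa)$, a short expansion of $\frac{\alpha-1}{\alpha}(\log\kappa + \alpha\rho)$ gives
\[
    \log\gamma(\alpha) = \frac{L}{\alpha} + \alpha\rho - L - \rho ,
\]
which is strictly convex in $\alpha$ since its second derivative $2L/\alpha^3$ is positive. Setting the first derivative $-L/\alpha^2 + \rho$ to zero yields the unique stationary point $\alpha^\star = \sqrt{L/\rho} = \sqrt{\log(1/\kappa)/\rho}$, which by convexity is the global minimizer.

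The only point demanding care is feasibility of the optimizer: $\alpha^\star$ lands in the admissible range $\alpha > 1$ exactly when $\rho < \log(1/\kappa)$, which is precisely the hypothesis of the corollary. Under this condition I would substitute $\alpha^\star$ back and observe that both $\alpha$-dependent terms collapse to the same value, $\tfrac{L}{\alpha^\star} = \alpha^\star\rho = \sqrt{L\rho}$, so that $\log\gamma(\alpha^\star) = 2\sqrt{L\rho} - L - \rho = -(\sqrt{L} - \sqrt{\rho})^2$. Exponentiating recovers $\gamma = e^{-(\sqrt{\log(1/\kappa)} - \sqrt{\rho})^2}$, as claimed. I do not anticipate a real obstacle here: once Theorem~\ref{thm:dp2rero} is available the argument is elementary calculus, and the sole subtlety is verifying the constraint $\alpha^\star > 1$, which the stated assumption $\rho < \log(1/\kappa)$ guarantees (when instead $\rho \geq \log(1/\kappa)$ the infimum is approached as $\alpha \to 1$ and degenerates to the trivial bound $\gamma = 1$, consistent with the corollary's hypothesis being necessary for a nontrivial conclusion).
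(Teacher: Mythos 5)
Your proposal is correct and follows essentially the same route as the paper's proof: instantiate Theorem~\ref{thm:dp2rero} with $\epsilon = \alpha\rho$, minimize $\left(\kappa \cdot e^{\alpha\rho}\right)^{\frac{\alpha-1}{\alpha}}$ over $\alpha > 1$, and plug in the optimizer $\alpha^\star = \sqrt{\log(1/\kappa)/\rho}$, whose admissibility is exactly the hypothesis $\rho < \log(1/\kappa)$. The paper states this optimization without detail, so your convexity verification and the observation about the degenerate case $\rho \geq \log(1/\kappa)$ are simply a more careful write-up of the same argument.
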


\subsection{From ReRo to DP}

Next we investigate the reverse implication: does a strong enough level of reconstruction robustness imply a standard definition of privacy protection like DP?
We show that this is indeed the case if one insists on protecting against \emph{exact} reconstruction simultaneously for a family of priors concentrated on pairs of data points.
From this lens, the result says that as soon as a mechanism exhibits strong enough reconstruction robustness to prevent membership inference it must necessarily satisfy DP.

Before stating the result we introduce the following notation.
Given $p \in (0,1)$ and $z, z' \in \Zset$, $z \neq z'$, let $\pi_{p,z,z'}$ denote the prior over $\Zset$ that assigns probability $p$ to $z$ and $1-p$ to $z'$.
We also let $\lattackloss_{0/1}(z, z') = \onesv[z \neq z']$.

\begin{restatable}[]{theorem}{rerotodp}
\label{thm:rero2dp}
Fix $\epsilon \geq 0$, $\eta \in (0,1)$ and $\gamma \in [0,1]$.
Let $\Pi_{\epsilon} = \{ \pi_{p, z, z'} : z, z' \in \Zset, z \neq z' \}$ be the class of all priors on $\Zset$ concentrated on pairs of points with $p = \frac{1}{e^{\epsilon} + 1}$.
If a mechanism $M : \Zset^n \to \Theta$ is $(\eta, \gamma)$-ReRo with respect to $\lattackloss_{0/1}$ and every prior $\pi \in \Pi_{\epsilon}$, then $M$ satisfies $(\epsilon,\delta)$-DP with $\delta = \max\{0, (e^{\epsilon} + 1) \gamma - e^{\epsilon}\}$.
\end{restatable}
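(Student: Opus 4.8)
The plan is to reduce the DP condition to a single well-chosen reconstruction attack, exploiting the fact that for $\eta \in (0,1)$ the event $\{\lattackloss_{0/1}(Z,\reconstruct(\model)) \le \eta\}$ is exactly the event of \emph{perfect} recovery $\{\reconstruct(\model) = Z\}$. First I would fix an arbitrary pair of adjacent datasets $D_0 = \Dfixed \cup \{z\}$ and $D_1 = \Dfixed \cup \{z'\}$ differing only in the record $z$ versus $z'$, together with an arbitrary event $E \subseteq \Theta$. The goal is to establish the one-sided inequality $\Pr[M(D_0) \in E] - e^{\epsilon}\Pr[M(D_1) \in E] \le \delta$, since this over all such triples is precisely $(\epsilon,\delta)$-DP.

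The key idea is to instantiate the ReRo hypothesis with the two-point prior $\pi_{p,z,z'} \in \Pi_\epsilon$ and with the binary attack $\reconstruct_E$ that outputs $z$ when $\model \in E$ and $z'$ when $\model \notin E$. Under $\pi_{p,z,z'}$ the target $Z$ equals $z$ with probability $p$, in which case $\model \sim M(D_0)$, and equals $z'$ with probability $1-p$, in which case $\model \sim M(D_1)$; since $\reconstruct_E$ succeeds exactly when $\model \in E$ in the first case and when $\model \in E^c$ in the second, its success probability is precisely $p\,\Pr[M(D_0) \in E] + (1-p)\,\Pr[M(D_1) \in E^c]$. The ReRo assumption (applied to this particular attack) bounds this quantity by $\gamma$.

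It then remains to rearrange. Writing $a = \Pr[M(D_0) \in E]$ and $b = \Pr[M(D_1) \in E]$, and using $\Pr[M(D_1) \in E^c] = 1-b$, the bound $p a + (1-p)(1-b) \le \gamma$ rewrites as $p a - (1-p) b \le \gamma - (1-p)$. Dividing through by $p$ and substituting the prescribed value $p = 1/(e^{\epsilon}+1)$, for which $(1-p)/p = e^{\epsilon}$ and $1/p = e^{\epsilon}+1$, yields $a - e^{\epsilon} b \le (e^{\epsilon}+1)\gamma - e^{\epsilon}$. As $E$ was arbitrary and $\delta = \max\{0,(e^{\epsilon}+1)\gamma - e^{\epsilon}\}$ dominates the right-hand side, the one-sided DP inequality follows; running the identical argument with the prior $\pi_{p,z',z}$, which also belongs to $\Pi_\epsilon$, swaps the roles of $D_0$ and $D_1$ and supplies the symmetric inequality, giving full $(\epsilon,\delta)$-DP.

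I do not expect a genuine analytic obstacle: the substance is entirely in the choice of attack and prior, after which the remainder is elementary algebra. The points deserving care are (i) checking that $\reconstruct_E$ is a legitimate deterministic reconstruction map $\Theta \to \Zset$ — it is, since it only ever emits the admissible candidates $z$ and $z'$ — and (ii) observing that the outer $\max\{0,\cdot\}$ in $\delta$ is exactly what absorbs the regime $(e^{\epsilon}+1)\gamma - e^{\epsilon} < 0$, where the ReRo guarantee is already strong enough to imply pure $\epsilon$-DP. A final bookkeeping check is that as $\Dfixed$, $z$, and $z'$ range freely the pairs $(D_0,D_1)$ exhaust all adjacent datasets, so the derived inequality indeed covers the full universal quantifier in the definition of DP.
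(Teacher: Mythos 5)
Your proposal is correct and follows essentially the same argument as the paper's proof: both instantiate the ReRo hypothesis with the two-point prior $\pi_{p,z,z'}$ and the binary attack $\reconstruct_E$ that outputs $z$ on $E$ and $z'$ on its complement, compute the attack's success probability as $p\,\Pr[M(D_z)\in E]+(1-p)\,\Pr[M(D_{z'})\notin E]$, and rearrange using $p = \frac{1}{e^{\epsilon}+1}$ to obtain the one-sided DP inequality. Your additional remarks on the symmetric prior and the role of the $\max\{0,\cdot\}$ are sound bookkeeping that the paper handles implicitly by letting $z,z'$ range over all ordered pairs.
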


\subsection{ReRo Against High-Dimensional, High-Uncertainty Priors}
\label{sec:high-uncertainty}

A standard ``rule of thumb'' says that DP only provides a meaningful protection when $\epsilon$ is a small constant.
On the other hand, our experiment on models trained with DP-SGD (\Cref{ssec: mnist_dp_exp}, \Cref{app: cifar10_dp}) shows that much larger values of $\epsilon$ are successful at mitigating our RecoNN-based attack.
This could be interpreted as a limitation of our attack in the presence of weak levels of DP protection.
An alternative explanation is that DP with large values of $\epsilon$ can protect against reconstruction attacks if the reconstruction target is high-dimensional and the adversary's prior knowledge contains a large degree of uncertainty.
We formalize this intuition by instantiating the bounds from~\Cref{sec:dp2rero} on two natural priors where $\kappa$ is easy to bound: uniform and Gaussian priors.
A similar analysis in the context of local DP was presented in \cite{DBLP:journals/corr/abs-1812-00984} (see \Cref{ssec:defense-related-work} for a detailed comparison).

\paragraph{Uniform priors}
Suppose training data points in $\Zset$ are represented by $d$-dimensional real vectors and all the adversary knows about the target point $z$ is a norm bound of the form $\norm{z}_2 \leq 1$. Then it makes sense for the adversary to take as prior the uniform distribution $\cU(B_1^d(0))$ over the Euclidean $d$-dimensional unit ball $B_1^d(0)$ centered at zero. For simplicity, suppose also that reconstruction error is measured in terms of the Euclidean distance $\ell_2$. Then we have the following.

\begin{restatable}[]{proposition}{uniprior}
\label{prop:uniprior}
Fix a constant $\eta \in (0,1)$.
Suppose $M$ is a mechanism satisfying $\epsilon$-DP with $\epsilon = o(d)$ or $\rho$-zCDP with $\rho = o(d)$.
Then $M$ is $(\eta, \gamma)$-ReRo with respect to $\cU(B_1^d(0))$ and $\ell_2$ with $\gamma = e^{-\Omega(d)}$.
\end{restatable}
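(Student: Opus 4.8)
The plan is to reduce the whole statement to a single anti-concentration estimate and then invoke the DP-to-ReRo conversions already proved. The only quantity that depends on the geometry of the prior is the baseline error $\kappa = \kappa_{\cU(B_1^d(0)),\ell_2}(\eta)$; once it is controlled, both branches follow by direct substitution into \Cref{cor:rero-pure-dp} (for pure DP) and \Cref{cor:rero-zcdp} (for zCDP).

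First I would bound $\kappa$ by a volume-ratio argument. By definition $\kappa = \sup_{z_0 \in \Zset} \Pr_{Z \sim \cU(B_1^d(0))}[\norm{Z - z_0}_2 \leq \eta]$. Since $Z$ is uniform over the unit ball, for any center $z_0$ the probability that $Z$ lands within distance $\eta$ of $z_0$ equals $\Vol(B_\eta^d(z_0) \cap B_1^d(0)) / \Vol(B_1^d(0))$, which is at most $\Vol(B_\eta^d(z_0))/\Vol(B_1^d(0)) = \eta^d$ because the volume of a $d$-dimensional ball scales as the $d$-th power of its radius. Hence $\kappa \leq \eta^d$, and as $\eta \in (0,1)$ is a fixed constant this yields $\log(1/\kappa) \geq d\log(1/\eta) = \Omega(d)$.

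For the pure-DP branch, \Cref{cor:rero-pure-dp} gives $(\eta,\gamma)$-ReRo with $\gamma = \kappa\, e^{\epsilon} \leq \eta^d e^{\epsilon} = e^{-d\log(1/\eta) + \epsilon}$; since $\epsilon = o(d)$ and $\log(1/\eta)$ is a positive constant, the exponent is $-\Omega(d)$, so $\gamma = e^{-\Omega(d)}$. For the zCDP branch I would apply \Cref{cor:rero-zcdp}, whose hypothesis $\rho < \log(1/\kappa)$ holds for all sufficiently large $d$ because $\rho = o(d)$ while $\log(1/\kappa) = \Omega(d)$; it returns $\gamma = e^{-(\sqrt{\log(1/\kappa)} - \sqrt{\rho})^2}$. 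Here $\sqrt{\log(1/\kappa)} \geq \sqrt{d\log(1/\eta)} = \Omega(\sqrt{d})$ whereas $\sqrt{\rho} = o(\sqrt{d})$, so the gap is $\Omega(\sqrt{d})$, its square is $\Omega(d)$, and again $\gamma = e^{-\Omega(d)}$.

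The technical heart is the estimate $\kappa \leq \eta^d$, i.e.\ the exponential anti-concentration of the uniform prior in high dimension; everything else is asymptotic bookkeeping confirming that the $o(d)$ privacy budgets are dominated by the $\Omega(d)$ gain coming from dimensionality. The one point to watch is that the precondition $\rho < \log(1/\kappa)$ of the zCDP corollary must be checked before it can be applied, which is precisely where the $\rho = o(d)$ assumption is consumed; the analogous pure-DP bound needs no such precondition, so that case is unconditional.
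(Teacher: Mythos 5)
Your proposal is correct and follows essentially the same route as the paper's proof: bound $\kappa_{\cU(B_1^d(0)),\ell_2}(\eta)$ by the volume ratio $\eta^d = e^{-\Omega(d)}$, then substitute into \Cref{cor:rero-pure-dp} and \Cref{cor:rero-zcdp}. Your version is in fact slightly more careful than the paper's, which leaves the zCDP branch as ``a similar claim follows''; you explicitly verify the precondition $\rho < \log(1/\kappa)$ and carry out the asymptotic bookkeeping for $\bigl(\sqrt{\log(1/\kappa)} - \sqrt{\rho}\bigr)^2 = \Omega(d)$.
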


This result shows that, in high-dimensional settings where an informed adversary's knowledge about the target datapoint is only in the form a syntactic constraint like $\norm{z}_2 \leq 1$, privacy parameters sub-linear in the dimension suffice to make the reconstruction success probability negligible. 

\paragraph{Gaussian priors}
Another natural prior to consider is a ($d$-dimensional, isotropic) Gaussian distribution $\cN(w,\sigma^2 I_d)$ specifying the adversary's prior knowledge about the location $w$ of the target point with some degree of uncertainty controlled by $\sigma$.
Taking again $\ell_2$ as the measure of reconstruction error, we obtain the following.

\begin{restatable}[]{proposition}{normalprior}
\label{prop:normalprior}
Fix a constant $\eta > 0$.
Suppose $M$ is a mechanism satisfying $\epsilon$-DP with $\epsilon = o(d)$ or $\rho$-zCDP with $\rho = o(d)$.
Then $M$ is $(\eta, \gamma)$-ReRo with respect to $\cN(w,\sigma^2 I_d)$ and $\ell_2$ with $\gamma = e^{-\Omega(d)}$ as long as $\sigma \geq \frac{2 \eta}{\sqrt{d}}$.
\end{restatable}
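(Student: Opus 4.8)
The plan is to reduce the claim to a single estimate on the baseline error $\kappa = \kappa_{\pi,\ell}(\eta)$ for the Gaussian prior, and then invoke the DP-to-ReRo transfer results already established (Corollary~\ref{cor:rero-pure-dp} and Corollary~\ref{cor:rero-zcdp}). The entire content of the proposition is the bound $\log(1/\kappa) = \Omega(d)$; everything else is bookkeeping with those corollaries.

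First I would pin down the worst-case center in the definition $\kappa = \sup_{z_0} \Pr_{Z \sim \cN(w,\sigma^2 I_d)}[\norm{Z - z_0}_2 \le \eta]$, i.e.\ the largest probability that $Z$ lands in a Euclidean ball of radius $\eta$. Because the isotropic Gaussian density is radially symmetric and unimodal about its mean, this supremum is attained at $z_0 = w$ (formally, by Anderson's inequality). Hence $\kappa = \Pr[\norm{Z - w}_2 \le \eta]$, and writing $Z - w = \sigma G$ with $G \sim \cN(0, I_d)$ turns this into $\kappa = \Pr[\norm{G}_2^2 \le \eta^2/\sigma^2]$, a one-sided lower-tail probability for a chi-squared random variable with $d$ degrees of freedom.

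Next I would use the hypothesis $\sigma \ge 2\eta/\sqrt{d}$, which is exactly what forces the target radius a constant factor below the chi-squared mean: it gives $\eta^2/\sigma^2 \le d/4$, so $\kappa \le \Pr[\norm{G}_2^2 \le d/4]$. Since $\E\norm{G}_2^2 = d$, this is a large-deviation event, and a standard chi-squared lower-tail bound (the Laurent--Massart inequality $\Pr[\norm{G}_2^2 \le d - 2\sqrt{dt}] \le e^{-t}$, with $t = 9d/64$ so that $d - 2\sqrt{dt} = d/4$) yields $\kappa \le e^{-cd}$ for an absolute constant $c > 0$, i.e.\ $\log(1/\kappa) = \Omega(d)$.

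Finally I would feed this estimate into the transfer results. Under $\epsilon$-DP with $\epsilon = o(d)$, Corollary~\ref{cor:rero-pure-dp} gives $\gamma = \kappa e^{\epsilon} = e^{-(\log(1/\kappa) - \epsilon)} = e^{-(\Omega(d) - o(d))} = e^{-\Omega(d)}$. Under $\rho$-zCDP with $\rho = o(d)$, the hypothesis $\rho < \log(1/\kappa)$ of Corollary~\ref{cor:rero-zcdp} holds for $d$ large, and $\gamma = e^{-(\sqrt{\log(1/\kappa)} - \sqrt{\rho})^2}$; since $\sqrt{\log(1/\kappa)} = \Omega(\sqrt{d})$ while $\sqrt{\rho} = o(\sqrt{d})$, the exponent is $\Omega(d)$, again giving $\gamma = e^{-\Omega(d)}$. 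The only delicate step is the identification of $z_0 = w$ as the maximizer and the clean reduction to a one-sided chi-squared tail; once the assumption on $\sigma$ has pinned the radius below $d/4$, the exponential decay is routine concentration.
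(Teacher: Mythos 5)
Your proposal is correct and follows essentially the same route as the paper: reduce everything to showing $\log(1/\kappa) = \Omega(d)$, identify the prior's mean as the worst-case center, recognize $\kappa$ as a chi-squared lower-tail probability, and feed the resulting bound into Corollary~\ref{cor:rero-pure-dp} and Corollary~\ref{cor:rero-zcdp}. The only differences are which standard facts you invoke for the two key steps --- Anderson's inequality where the paper gives a self-contained argument via an orthogonal decomposition and a one-dimensional stochastic-domination property of Gaussians, and the Laurent--Massart tail bound (your computation with $t = 9d/64$ checks out) where the paper uses the Dasgupta--Gupta inequality $\Pr[\chi^2_d \leq \beta d] \leq e^{\frac{d}{2}(1-\beta+\log\beta)}$ --- and these substitutions are interchangeable here.
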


The idea that large values of $\epsilon$ can protect against reconstruction when the adversary's prior contains significant uncertainty (i.e.\ it is diffused) was previously noticed in \cite{DBLP:journals/corr/abs-1812-00984} in the context of local DP (LDP) with priors close to uniform.
Inspired by FL applications where adversaries get access to LDP gradients, the authors propose a notion of protection against \emph{reconstruction breaches} that is more stringent than \Cref{def:rero}: it asks that the adversary cannot effectively reconstruct a particular feature of interest about the target point no matter what the output of the mechanism is -- in contrast, ReRo uses an \emph{average-case} requirement over the outputs of the mechanism.
Technically, \cite[Lemma 2.2]{DBLP:journals/corr/abs-1812-00984} shows that the bound in Corollary~\ref{cor:rero-pure-dp} also holds for this worst-case notion of protection against reconstruction.\footnote{Although the bound in \cite{DBLP:journals/corr/abs-1812-00984} is stated in terms of $\epsilon$-LDP, it is easy to see that the same holds for central $\epsilon$-DP in the presence of an informed adversary.}
Such worst-case guarantees, however, are not attainable under relaxations of $\epsilon$-DP like RDP because the latter does not enforce an almost sure bound on the privacy loss: instead, it just guarantees that the privacy loss will be small with high probability.
Thus, \Cref{thm:dp2rero} and \Cref{prop:uniprior} are natural generalizations of the results from \cite{DBLP:journals/corr/abs-1812-00984} to RDP, which is the default notion of privacy provided by DP-SGD and other popular private ML algorithms \cite{DBLP:conf/iclr/PapernotSMRTE18,DBLP:journals/corr/abs-1908-10530}.

\subsection{Is Reconstruction Robustness Useful in Practice?}
\label{sec:rero-useful}

To deploy the bounds from Theorem~\ref{thm:dp2rero} two things are necessary: the description of a criterion for reconstruction error $\ell$ with an associated threshold $\eta$, and an understanding of the success rate of $\eta$-approximate reconstruction by the adversary prior to the release.
Equipped with $\ell$ and $\eta$, one can then engage in a conversation with stakeholders and domain experts to determine what success rate of reconstruction is reasonable to adjudicate to a potential adversary before the release is made.
An interesting feature of \Cref{thm:dp2rero} is that it reduces adversarial modelling to a question about determining a \emph{single number} $\kappa_{\pi,\ell}(\eta)$.
Furthermore, it is possible that one does not need to be overly conservative in estimating this number.
After all, the theorem bounds the success probability of the wost-case adversary which, in particular, knows all the fixed dataset.
Realistic adversaries will often have less knowledge of the fixed dataset, so it might be possible to trade-off knowledge of the fixed dataset with the amount of diffusion required from the prior. We leave this question for future work.

\subsection{Further Related Work}\label{ssec:defense-related-work}

\paragraph{Threat modelling and privacy semantics}\label{sec:threat-modelling}
The use of informed adversaries in formal privacy analyses can be tracked back to the \emph{sub-linear queries} (SuLQ) framework \cite{DBLP:conf/pods/BlumDMN05}. SuLQ was later subsumed by DP \cite{DBLP:conf/tcc/DworkMNS06}, where mentions to a concrete adversary were expressly avoided in the definition that is widely used nowadays \cite{mcsherry_2021}.
Nonetheless, \cite[Appendix A]{DBLP:conf/tcc/DworkMNS06} provides a ``semantically flavored'' definition equivalent to DP which involves the likelihood ratio between the prior and posterior beliefs of an informed adversary about any property of the target data point.
The adversarial model put forward in Section~\ref{sec:threat-model} uses the same notion of informed adversary.

In other frameworks where the adversary is not (necessarily) informed (e.g.\ Pufferfish privacy \cite{DBLP:journals/tods/KiferM14} and inferential privacy \cite{DBLP:conf/innovations/GhoshK17}), side knowledge about the whole dataset is encoded in a probabilistic prior capturing information about the individual entries in the dataset as well as their statistical dependencies.
These frameworks extend the semantic approach to DP by replacing the prior-vs-posterior condition with an odds ratio condition -- such modification is motivated by the observation that prior-vs-posterior bounds cannot hold in general for uninformed adversaries unless the prior distribution over the dataset assumes the records are mutually independent.
Alternatively, \cite{DBLP:journals/jpc/Kasiviswanathan14} provides posterior-vs-posterior semantics for DP in the presence of an uninformed adversary with an arbitrary prior.
In the definition of reconstruction robustness, our use of an informed adversary with a prior over the target data point circumvents the complications arising from dependencies between points in the training data: the prior captures the adversary's \emph{residual} uncertainty about the target point after observing the fixed dataset.
On the opposite direction, several authors have proposed approaches where the adversary's uncertainty with respect to the input data of a mechanism is leveraged to increase the privacy provided to individuals \cite{DBLP:conf/cikm/Duan09,DBLP:conf/asiacrypt/BhaskarBGLT11,DBLP:conf/focs/BassilyGKS13,DBLP:journals/corr/abs-1905-00650}.
Implicitly, these works assume a less powerful adversary than the one considered in this paper.

Most of the semantic definitions we discussed formalize the privacy protection goal without assuming the adversary is interested in a particular inference task; that is, protection applies simultaneously to all possible inferences about the target point(s).
In contrast, the use of an explicit reconstruction error $\lattackloss$ makes the definition of reconstruction robustness syntactic in nature.
\Cref{ssec:attack-protocol} briefly discusses how the problem of designing an appropriate error function for each application can be approached.
A similar dilemma arises in location privacy, where distortion-based notions include an explicit measure of reconstruction error \cite{DBLP:conf/wpes/ShokriFJH09,DBLP:conf/sp/ShokriTBH11}.
Nonetheless, as Theorem~\ref{thm:rero2dp} shows, by considering a very stringent reconstruction goal and a set of sufficiently informative priors one can recover semantic privacy notions from reconstruction robustness.

The connection between DP and protection against membership inference is perhaps best understood via its hypothesis testing interpretation \cite{wasserman2010statistical,DBLP:conf/icml/KairouzOV15}.
A comprehensive discussion of the adversary \emph{implicit} in the definition of DP from the hypothesis testing standpoint can be found in \cite{DBLP:conf/sp/NasrSTPC21}.
Interestingly, \cite{DBLP:conf/aistats/BalleBGHS20} shows that, unlike standard DP, RDP does not admit a hypothesis testing interpretation.
A semantic (Bayesian) interpretation of RDP in terms of moment bounds on the odds ratio is presented in \cite{DBLP:conf/csfw/Mironov17}.
Theorem~\ref{thm:dp2rero} provides an alternative characterization of the privacy protection afforded by RDP in terms of resilience to reconstruction attacks.

\paragraph{DP and protection against reconstruction}

How standard DP offers concrete protection against reconstruction attacks has been studied in other contexts.
Indeed, one of the original motivations for the definition of DP was to defeat database reconstruction attacks in the context of interactive query mechanisms \cite{DBLP:conf/pods/DinurN03,DBLP:conf/stoc/DworkMT07,dwork2017exposed,DBLP:journals/jpc/CohenN20,dpblogreconstruction}.
In such attacks, the adversary receives (noisy) answers to a sequence of specially crafted queries against a database and, if the noise is small enough, uses the answers to (partially) reconstruct every record in the database.
The success of these attacks is contingent on the adversary's ability to control these queries; in contrast, in ML applications like the ones we consider the computation performed by the mechanism is completely under the model developer's control.

The quantitative information flow literature seeks to provide information-theoretic bounds on data leakage in information processing systems \cite{DBLP:conf/fossacs/Smith09,alvim2020science}.
When applied to differentially private mechanisms, these ideas yield bounds on the protection against \emph{exact} reconstruction when $\Zset$ is finite.
In particular, when specialized to informed adversaries and translated into our terminology, \cite[Theorem 3]{DBLP:conf/birthday/ElSalamounyCP14} shows that any $\epsilon$-DP mechanism is $(\eta, \gamma)$-ReRo with $\eta \in (0,1)$ with respect to $\lattackloss_{0/1}$ and any prior $\pi$ with
$\gamma \leq \frac{|\Zset| \kappa e^{\epsilon}}{|\Zset| + e^{\epsilon} - 1}$.
Taking $|\Zset| \to \infty$ recovers the bound from Corollary~\ref{cor:rero-pure-dp} in the case of $\lattackloss_{0/1}$.
Our results can thus be interpreted as a generalization of this line of work where no assumptions about $\Zset$ are necessary.

 \section{Conclusions}

Our work provides compelling evidence that standard ML models can memorize enough information about their training data to enable high-fidelity reconstructions in a very stringent threat model.
By instantiating an informed adversary that
learns to map
model parameters to training images, we successfully attacked standard MNIST and CIFAR-10 classifiers with up to $100K$ parameters, and showed the attack is significantly robust to
changes in the training hyper-parameters.
Two aspects of our attack we would like to improve in future work are its data and computational efficiency, and its scalability to larger, more performant released models.
This would not lead to real-world adversaries mounting practical attacks due to the nature of our threat model, but it would enable model developers to assess potential privacy leakage in models before deployment.
Extending our attacks to reconstruct $N > 1$ targets simultaneously would also be interesting,
but we expect this to be substantially harder.
For example, in this setting our attacks against convex models lead to a problem with more unknowns than equations.
On the defenses side, we empirically showed that DP training with large values of $\epsilon$
can effectively mitigate our reconstruction attacks.
Our theoretical discussion, stemming from a new definition
of reconstruction robustness and a study of its connection to (R)DP,
shows this is a general phenomenon:
informed reconstruction attacks can be prevented
with large values of $\epsilon$
under some assumptions on the adversary. Validating such assumptions in particular applications would open the door to practical models which are accurate and resilient against reconstruction attacks.

\section*{Acknowledgment}
The authors would like to thank: Leonard Berrada, Adri{\`a} Gasc{\'o}n and Shakir Mohamed for feedback on an earlier version of this manuscript; Brendan McMahan for suggesting the idea that random initialization in SGD might make privacy attacks harder which inspired some of our experiments; and Olivia Wiles for discussions on how to improve reconstructor network training on CIFAR-10.
This work was done while G.C. was at the Alan Turing Institute.

\bibliographystyle{IEEEtran}
\bibliography{short_biblio}

\appendix[Proofs]

\main{
We provide proof sketches for the main theoretical results of the paper. Full proofs can be found on the arXiv version \cite{DBLP:journals/corr/abs-2201-04845}.
}

\main{
\begin{proof}[Proof sketch of Theorem~\ref{thm:convex-attack}]
For a GLM model $\model$ trained to convergence, \eqref{eq:convex-attack-strategy} takes the form
\begin{equation*}
x(g^{-1}(\ip{x}{\model})-y) = - \Xfixed^\top(g^{-1}(\Xfixed\model)-\Yfixed) - \lambda\model \enspace.
\end{equation*}
When the model contains an intercept parameter, this yields $d$ equations with $d$ unknowns ($x_2, \ldots, x_d, y$) because $x_1 = 1$.
From the equation corresponding to this coordinate we obtain $g^{-1}(\ip{x}{\theta})-y = \Xfixed_1^\top B + \lambda \model_1$, which we can plug in the rest of equations to obtain the desired expression for $x$.
Once we have $x$ we plug it back into the first equation to recover $y$.
\end{proof}
}

\arxiv{
\todo{Shorten for submission.}

\begin{proof}[Proof of Theorem~\ref{thm:convex-attack}]
A GLM model $\model$ is trained as the solution to:
\begin{align}\label{eqn:glm_optimality}
    -\sum_{i=1}^n x_i (g^{-1}(\ip{x_i}{\model}) - y_i) + \lambda \model = 0 \enspace.
\end{align}

We solve for target point $x$ to obtain a reconstruction attack:
\begin{equation}
\label{eq:glm-reconstruction}
x(g(x\theta)-y) = - \Xfixed^\top(g(\Xfixed\theta)-\Yfixed) - \lambda\theta \,.
\end{equation}
$(\Xfixed, \Yfixed) = \Dfixed$ are resp. the
fixed objects and labels, which are known to the
attacker.
We indicate with $X_j$ the $j$-th column of the data points.
This attack assumes the model is fitted with the intercept
coefficient, i.e. $X_1 = (1, ..., 1)$.
\Cref{eq:glm-reconstruction} defines a system of $\dimpoint$ equations;
consider the $j$-th of them:

$$g(x\model) = y - \frac{\Xfixed_j^\top(g(\Xfixed\model)-\Xfixed) + \lambda\model_j}{x_j}$$

Now observe that $g(x\model)$ can be
determined by the attacker
for the intercept column, $j = 1$; (note that also $x_1 = 1$).

By plugging this back into \Cref{eq:glm-reconstruction} we get the desired form.
The target label $y$ is reconstructed similarly,
by plugging the solution for $x$ into the same system.
\end{proof}
}

\arxiv{
\begin{theorem}[Alternative attack to linear regression without intercept]
\label{thm:lr-reconstruction-no-intercept}
Consider a linear regression model $\model$ trained on dataset
$(X, Y) = (\Xfixed \cup \{x\}, \Yfixed \cup \{y\})$;
that is, $\model$ is the unique minimizer of $\frac{1}{2}\| X\model - Y \|^2$.
(The model may or may not have been fitted with an intercept term.)
Then $x$ can be retrieved given $\model$, $\Xfixed$, $\Yfixed$, and $y$, as follows:
$$\hat{x} = \Xfixed^\top (\Xfixed \model - \Yfixed) \cdot \frac{y \pm \sqrt{y^2 - 4 (\Xfixed \model - \Yfixed)^\top \Xfixed \model}}{2 (\Xfixed \model - \Yfixed)^\top \Xfixed \model} \,.$$
Note that this expression returns two candidate guesses for $x$.
\end{theorem}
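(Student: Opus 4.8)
The plan is to exploit the first-order optimality (normal) equations of the unregularized least-squares objective and reduce the $d$-dimensional reconstruction to a single scalar quadratic, using the fact that here the label $y$ of the target point is known. Since $\model$ minimizes $\frac{1}{2}\norm{X\model - Y}^2$, it satisfies the stationarity condition $X^\top(X\model - Y) = 0$. Writing the rows of $X$ as those of $\Xfixed$ together with the target feature vector $x$, and splitting $Y$ into $\Yfixed$ and the scalar $y$ accordingly, this normal equation separates into the contributions of the fixed data and of the target point, giving
\[
    \Xfixed^\top(\Xfixed\model - \Yfixed) + x(\ip{x}{\model} - y) = 0 \enspace.
\]
This is exactly the specialization of the general convex attack equation~\eqref{eq:convex-attack-strategy} to linear regression, where $g^{-1}$ is the identity and $\lambda = 0$.

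First I would isolate $x$. The displayed identity says that $x(\ip{x}{\model}-y)$ equals the \emph{known} vector $-\Xfixed^\top(\Xfixed\model - \Yfixed)$. Since the factor $\ip{x}{\model}-y$ is an unknown scalar, $x$ must be a scalar multiple of the known direction $\Xfixed^\top(\Xfixed\model - \Yfixed)$. Introducing the scalar unknown $t = -1/(\ip{x}{\model}-y)$, I would write $x = t\,\Xfixed^\top(\Xfixed\model - \Yfixed)$, so the entire vector reconstruction collapses to determining the single number $t$.

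Next I would pin down $t$ by substituting this parametrization back into its own defining relation. From $t = -1/(\ip{x}{\model}-y)$ we get $\ip{x}{\model} = y - 1/t$, while substituting $x = t\,\Xfixed^\top(\Xfixed\model - \Yfixed)$ into $\ip{x}{\model}$ gives $\ip{x}{\model} = t\,(\Xfixed\model - \Yfixed)^\top \Xfixed\model$. Equating the two expressions and clearing the denominator yields the quadratic
\[
    \big((\Xfixed\model - \Yfixed)^\top \Xfixed\model\big)\,t^2 - y\,t + 1 = 0 \enspace.
\]
The quadratic formula then produces two roots for $t$, and substituting each back into $x = t\,\Xfixed^\top(\Xfixed\model - \Yfixed)$ recovers exactly the two candidate reconstructions in the statement, with the stated discriminant $y^2 - 4(\Xfixed\model - \Yfixed)^\top\Xfixed\model$.

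The substitution algebra and the quadratic formula are routine; the main thing to handle carefully is the degenerate cases. If the target residual $\ip{x}{\model}-y$ vanishes then $\Xfixed^\top(\Xfixed\model-\Yfixed)=0$ and the scalar rescaling is ill-defined, and if $(\Xfixed\model-\Yfixed)^\top\Xfixed\model = 0$ the quadratic degenerates to a linear equation; both should be flagged as exceptional. I would also observe that, because the true $x$ is by construction a solution of the separated normal equation, a real root always exists, so the discriminant is automatically nonnegative; the appearance of a spurious second root is the price of squaring the relation, and it can be eliminated using additional side knowledge or by a direct consistency check against the original (unsquared) equation.
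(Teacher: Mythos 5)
Your proof is correct and follows essentially the same route as the paper's: both reduce the separated normal equation $\Xfixed^\top(\Xfixed\model-\Yfixed)+x(\ip{x}{\model}-y)=0$ to a single scalar unknown by observing that $x$ must be collinear with the known vector $\Xfixed^\top(\Xfixed\model-\Yfixed)$, and both arrive at the identical quadratic $t^2(\Xfixed\model-\Yfixed)^\top\Xfixed\model - ty + 1 = 0$ before applying the quadratic formula. The only cosmetic difference is that the paper substitutes $x=\alpha\,\Xfixed^\top(\Xfixed\model-\Yfixed)$ directly into the vector equation and factors out the common vector, whereas you equate two expressions for $\ip{x}{\model}$; your explicit treatment of the degenerate cases (vanishing residual and vanishing quadratic coefficient) is in fact somewhat more careful than the paper's, which only flags the assumption $\Xfixed\model-\Yfixed\neq 0$.
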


\begin{proof}
Write the objective as
    $$
    \frac{1}{2}\| X \model - Y \|^2
    =
    \frac{1}{2}\| \Xfixed \model - \Yfixed \|^2 + \frac{1}{2} (x^\top \model - y)^2 \,.
    $$

Since $\model$ is the unique minimizer we can take the gradient of the objective above and set it to zero to find the system of equations
    $$
    \Xfixed^\top (\Xfixed \model - \Yfixed) + x (x^\top \theta - y) = 0
    $$
Reorganizing the terms we have
$x (x^\top \model - y) = - \Xfixed^\top (\Xfixed \model - \Yfixed)$.
Since $x^\top \model - y$ is a scalar we see that for $x$ to satisfy this equation it must be a multiple of the vector $\Xfixed^\top (\Xfixed \model - \Yfixed)$;
(here we assume that $\Xfixed \model - \Yfixed \neq 0$). \gnote{Analyze the corner case} So we only need to consider solutions of the form $x = \alpha \Xfixed^\top (\Xfixed \model - \Yfixed)$ for some $\alpha \in \R$.

Plugging this expression for $x$ in the system we obtain an equation for $\alpha$:
    $$
    \Xfixed^\top (\Xfixed \model - \Yfixed) (\alpha^2 (\Xfixed \model - \Yfixed)^\top \Xfixed \model - \alpha y + 1) = 0
    $$
    so
    $$
    \alpha^2 (\Xfixed \model - \Yfixed)^\top \Xfixed \model - \alpha y + 1 = 0
    $$
Solving for $\alpha$ we get the desired expression.
\end{proof}
}

\main{
\begin{proof}[Proof sketch of Theorem~\ref{thm:dp2rero}]
Fix $\reconstruct : \Theta \to \Zset$ and $\Dfixed \in \Zset^{n-1}$,
and let $Z \sim \pi$, $D_Z = \Dfixed \cup \{Z\}$ and $\theta \sim M(D_Z)$.
We write $p_M(\theta|z) = \Pr[M(D_z) = \theta]$ for the output density of $M$ on input $D_z$.
First we take an arbitrary $z_0 \in \Zset$ and show the probability $\Pr[\lattackloss(Z, \reconstruct(\theta)) \leq \eta]$ equals
\begin{align*}
\int_{\Theta} \left(\int_{\Zset}  \onesv[\lattackloss(z, \reconstruct(\theta)) \leq \eta]  \frac{p_M(\theta | z)}{p_M(\theta | z_0)} \pi(dz) \right) p_M(d\theta | z_0)  \enspace.
\end{align*}
Next we take $\alpha' = \frac{\alpha}{\alpha - 1}$ and through a standard application of H{\"o}lder's inequality bound the inner integral above by:
\begin{align*}
\kappa^{1/\alpha'}
    \cdot
    \left(\int_{\Zset} \left(\frac{p_M(\theta | z)}{p_M(\theta | z_0)}\right)^\alpha \pi(dz) \right)^{1/\alpha} \enspace.
\end{align*}
Plugging this bound into the expression for $\Pr[\lattackloss(Z, \reconstruct(\theta)) \leq \eta]$ and re-arranging terms, we use Jensen's inequality and the RDP assumption on $M$ to obtain:
\begin{align*}
    &
    \left(\frac{\Pr[\lattackloss(Z, \reconstruct(\theta)) \leq \eta]}{\kappa_{\pi}(\eta)^{1/\alpha'}}\right)^\alpha \leq
    \\
&\leq
    \int_{\Zset} \left(\int_{\Theta}  \left(\frac{p_M(\theta | z)}{p_M(\theta | z_0)}\right)^\alpha p_M(d\theta | z_0) \right) \pi(dz)
    \\
    &\leq
    \sup_z \int_{\Theta} \left(\frac{p_M(\theta | z)}{p_M(\theta | z_0)}\right)^\alpha p_M(d\theta | z_0)
    \\
    &\leq
    e^{(\alpha - 1) \epsilon}
    \enspace.
    \qedhere
\end{align*}
\end{proof}
}
\arxiv{
\dprero*
\begin{proof}[Proof of Theorem~\ref{thm:dp2rero}]
Fix arbitrary $\reconstruct : \Theta \to \Zset$, $\Dfixed \in \Zset^{n-1}$ and $z_0 \in \Zset$.
Let $Z \sim \pi$, $D_Z = \Dfixed \cup \{Z\}$ and $\theta \sim M(D_Z)$.
We write $p_M(\theta|z) = \Pr[M(D_z) = \theta]$ to denote the output density of $M$ on input $D_z$.
Then the probability $\Pr[\lattackloss(Z, \reconstruct(\theta)) \leq \eta]$ equals
\begin{align*}
    &\int_{\Zset} \int_{\Theta} \onesv[\lattackloss(z, \reconstruct(\theta)) \leq \eta] p_M(d\theta | z) \pi(dz)
    \\
    &=
    \int_{\Zset} \int_{\Theta} \onesv[\lattackloss(z, \reconstruct(\theta)) \leq \eta] p_M(d\theta | z_0) \frac{p_M(\theta | z)}{p_M(\theta | z_0)} \pi(dz)
    \\
    &=
    \int_{\Theta} \left(\int_{\Zset}  \onesv[\lattackloss(z, \reconstruct(\theta)) \leq \eta]  \frac{p_M(\theta | z)}{p_M(\theta | z_0)} \pi(dz) \right) p_M(d\theta | z_0)  \enspace.
\end{align*}
Taking $\alpha' = \frac{\alpha}{\alpha - 1}$ and applying H{\"o}lder's inequality to the inner integral we get:
\begin{align*}
    &\int_{\Zset} \onesv[\lattackloss(z, \reconstruct(\theta)) \leq \eta]  \frac{p_M(\theta | z)}{p_M(\theta | z_0)} \pi(dz)
    \\
    &\leq
    \left( \int_{\Zset} \onesv[\lattackloss(z, \reconstruct(\theta)) \leq \eta] \pi(dz) \right)^{1/\alpha'}
    \\
    &\;\; \times
    \left(\int_{\Zset} \left(\frac{p_M(\theta | z)}{p_M(\theta | z_0)}\right)^\alpha \pi(dz) \right)^{1/\alpha}
    \\
    &\leq
    \kappa^{1/\alpha'}
    \cdot
    \left(\int_{\Zset} \left(\frac{p_M(\theta | z)}{p_M(\theta | z_0)}\right)^\alpha \pi(dz) \right)^{1/\alpha}
    \enspace.
\end{align*}
After plugging the bound above into the expression for $\Pr[\lattackloss(Z, \reconstruct(\theta)) \leq \eta]$ and re-arranging terms, we use Jensen's inequality and the RDP assumption on $M$ to obtain:
\begin{align*}
    &
    \left(\frac{\Pr[\lattackloss(Z, \reconstruct(\theta)) \leq \eta]}{\kappa_{\pi}(\eta)^{1/\alpha'}}\right)^\alpha \leq
    \\
    &\leq
    \left(\int_{\Theta}  \left(\int_{\Zset} \left(\frac{p_M(\theta | z)}{p_M(\theta | z_0)}\right)^\alpha \pi(dz) \right)^{1/\alpha} p_M(d\theta | z_0) \right)^{\alpha}
    \\
    &\leq
    \int_{\Zset} \left(\int_{\Theta}  \left(\frac{p_M(\theta | z)}{p_M(\theta | z_0)}\right)^\alpha p_M(d\theta | z_0) \right) \pi(dz)
    \\
    &\leq
    \sup_z \int_{\Theta} \left(\frac{p_M(\theta | z)}{p_M(\theta | z_0)}\right)^\alpha p_M(d\theta | z_0)
    \\
    &\leq
    e^{(\alpha - 1) \epsilon}
    \enspace.
\end{align*}
The result follows from re-arranging this inequality.
\end{proof}
}

\arxiv{
\rerozcdp*

\begin{proof}[Proof of Corollary~\ref{cor:rero-zcdp}]
Theorem~\ref{thm:dp2rero} yields the bound $\gamma = \left(\kappa \cdot e^{\alpha \rho} \right)^{\frac{\alpha - 1}{\alpha}}$ for any $\alpha > 1$. This is minimized by taking $\alpha = \sqrt{\frac{\log(1/\kappa)}{\rho}}$, which is greater than $1$ by assumption.
Plugging this value of $\alpha$ into $\gamma$ and re-organizing the terms completes the proof.
\end{proof}
}

\main{
\begin{proof}[Proof sketch of Theorem~\ref{thm:rero2dp}]
Fix arbitrary $\Dfixed \in \Zset^{n-1}$, $z, z' \in \Zset$, $z \neq z'$, and $E \subseteq \Theta$, and let $\pi = \pi_{p, z, z'}$.
Define the reconstruction mapping $\reconstruct_E$ mapping $\theta$ to $z$ is $\theta \in E$ and to $z'$ otherwise.
By the ReRo assumptions on $M$ we have $\Pr_{Z \sim \pi, \theta \sim M(D_Z)}[\reconstruct_E(\theta) = Z] \leq \gamma$.
On the other hand, by definition of $\pi$ and $\reconstruct_E$, $\Pr_{Z \sim \pi, \theta \sim M(D_Z)}[\reconstruct_E(\theta) = Z]$ equals
\begin{align*}
\frac{\Pr[M(D_z) \in E] - e^{\epsilon} \Pr[M(D_{z'}) \in E] + e^{\epsilon}}{e^{\epsilon} + 1}
    \enspace.
\end{align*}
Upper bounding by $\gamma$ and re-arranging completes the proof.
\end{proof}
}

\arxiv{
\rerotodp*
\begin{proof}[Proof of Theorem~\ref{thm:rero2dp}]
Fix arbitrary $\Dfixed \in \Zset^{n-1}$, $z, z' \in \Zset$, $z \neq z'$, and $E \subseteq \Theta$.
Define the reconstruction mapping $\reconstruct_E$ given by
\begin{align*}
    \reconstruct_E(\theta)
    =
    \begin{cases}
    z & \text{if $\theta \in E$} \enspace,\\
    z' & \text{if $\theta \notin E$} \enspace.
    \end{cases}
\end{align*}
By the ReRo assumptions on $M$ we have
\begin{align*}
    \Pr_{Z \sim \pi_{p, z, z'}, \theta \sim M(D_Z)}[\reconstruct_E(\theta) = Z] \leq \gamma \enspace.
\end{align*}
On the other hand, by definition of $\pi_{p,z,z'}$ and $\reconstruct_E$ we have
\begin{align*}
    &\Pr_{Z \sim \pi_{p, z, z'}, \theta \sim M(D_Z)}[\reconstruct_E(\theta) = Z] =
    \\
    &=
    \frac{1}{e^{\epsilon} + 1} \Pr[M(D_z) \in E] + \frac{e^{\epsilon}}{e^{\epsilon} + 1} \Pr[M(D_{z'}) \notin E]
    \\
    &=
    \frac{\Pr[M(D_z) \in E] - e^{\epsilon} \Pr[M(D_{z'}) \in E] + e^{\epsilon}}{e^{\epsilon} + 1}
    \enspace.
\end{align*}
Thus, we get
\begin{align*}
    &\Pr[M(D_z) \in E] - e^{\epsilon} \Pr[M(D_{z'}) \in E]
    \\
    &\leq (e^{\epsilon} + 1) \gamma - e^{\epsilon}
    \leq
    \max\{0, (e^{\epsilon} + 1) \gamma - e^{\epsilon}\}
    \enspace.
    \qedhere
\end{align*}
\end{proof}
}

\begin{proof}[Proof of Proposition~\ref{prop:uniprior}]
Let $\pi = \cU(B_1^d(0))$ and write $\Vol(A)$ to denote the Eucliean volume of a set $A \subset \R^d$. By definition of the baseline error, for $\eta \in (0,1)$ we have
\begin{align*}
    \kappa_{\pi,\ell_2}(\eta)
    &=
    \sup_{z_0} \frac{\Vol(B_1^d(0) \cap B_\eta^d(z_0))}{\Vol(B_1^d(0))} = \eta^d = e^{-\Omega(d)} \enspace,
\end{align*}
where the calculation follows by the standard volume formula for $d$-dimensional Euclidean balls.
Plugging this expression in Corollary~\ref{cor:rero-pure-dp} shows that any $\epsilon$-DP mechanism with $\epsilon = o(d)$ provides $(\eta,\gamma)$-ReRo with respect to $\pi$ and $\ell$ with $\gamma = e^{-\Omega(d)}$.
A similar claim follows from Corollary~\ref{cor:rero-zcdp} applied to $\rho$-zCDP mechanisms with $\rho = o(d)$.
\end{proof}

\main{
\begin{proof}[Proof sketch of Proposition~\ref{prop:normalprior}]
Let $Z \sim \cN(0, I)$ and $F_{\eta}(z_0) = \Pr[\norm{Z + z_0}^2 \leq \eta^2]$.
First we show that $\argmax_{z_0} F_{\eta}(z_0) = 0$.
The proof of this intuitive fact relies on extending a $1$-dimensional stochastic domination property of Gaussian random variables \cite[Example 1.A.27]{shaked2007stochastic} to $d$ dimensions using an orthogonal decomposition of $Z$ along the space spanned by $z_0$ and its orthogonal complement.
Then we show that for $\nu = \cN(w,\sigma^2 I)$ this claim implies $\kappa_{\nu,\ell_2}(\eta) = F_{\eta/\sigma}(0)$.
Next we use a tail lower bound for chi-squared random variables \cite[Lemma 2.2]{DBLP:journals/rsa/DasguptaG03} to get
\begin{align*}
    \kappa_{\nu,\ell_2}(\eta)
    &\leq
    e^{\frac{d}{2}\left(1 - \frac{\eta^2}{\sigma^2 d} + \log \frac{\eta^2}{\sigma^2 d}\right)} \enspace.
\end{align*}
In particular, for $\sigma \geq \frac{2 \eta}{\sqrt{d}}$ we get $\kappa_{\nu,\ell_2}(\eta) \leq e^{-\Omega(d)}$.
The remaining of the proof follows the same argument as in Proposition~\ref{prop:uniprior}.
\end{proof}
}

\arxiv{
\normalprior*
\begin{proof}[Proof of Proposition~\ref{prop:normalprior}]
Let $Z \sim \cN(0, I)$ and $F_{\eta}(z_0) = \Pr[\norm{Z + z_0}^2 \leq \eta^2]$. First we claim that $\argmax_{z_0} F_{\eta}(z_0) = 0$. To see this, fix $z_0$ and let $\bar{z}_0 = z_0 / \norm{z_0}$. We can then write the orthogonal decomposition $Z = Z_{\parallel} + Z_{\bot}$ with $Z_{\parallel} = \ip{Z}{\bar{z}_0} \bar{z}_0$, where $Z_{\parallel}$ and $Z_{\bot}$ are independent multivariate Gaussians. By orthogonality we have $\norm{Z + z_0}^2 = \norm{Z_\bot}^2 + (\ip{Z}{\bar{z}_0} + \norm{z_0})^2$, which is a sum of independent chi-squared random variables. Note $\ip{Z}{\bar{z}_0} \sim \cN(0, 1)$, and let $f$ be the density function of $\norm{Z_\bot}^2$. Using that for a standard normal random variable $W \sim \cN(0, 1)$ we have $ \Pr[(W+a)^2 \leq t] \leq \Pr[W^2 \leq t]$ for all $a, t \in \R$ (see, e.g., \cite[Example 1.A.27]{shaked2007stochastic}), we get
\begin{align*}
    F_{\eta}(z_0)
    &=
    \int \Pr[(W + \norm{z_0})^2 \leq \eta^2 - t] f(t) dt
    \\
    &\leq
    \int \Pr[W^2 \leq \eta^2 - t] f(t) dt
    =
    F_{\eta}(0) \enspace.
\end{align*}
This proves the claim. For $\nu = \cN(w,\sigma^2 I)$ this implies
\begin{align*}
    \kappa_{\nu,\ell_2}(\eta)
    &=
    \sup_{z_0} \Pr_{Z \sim \nu}[\norm{Z - z_0} \leq \eta]
    \\
    &=
    \sup_{z_0} \Pr_{Z \sim \cN(0,I)}[\norm{Z + z_0}^2 \leq \eta^2/\sigma^2]
    =
    F_{\eta/\sigma}(0)
    \enspace.
\end{align*}
Therefore, using a tail lower bound for chi-squared random variables \cite[Lemma 2.2]{DBLP:journals/rsa/DasguptaG03} we get
\begin{align*}
    \kappa_{\nu,\ell_2}(\eta)
    &\leq
    e^{\frac{d}{2}\left(1 - \frac{\eta^2}{\sigma^2 d} + \log \frac{\eta^2}{\sigma^2 d}\right)} \enspace.
\end{align*}
In particular, for $\sigma \geq \frac{2 \eta}{\sqrt{d}}$ we get $\kappa_{\nu,\ell_2}(\eta) \leq e^{-\Omega(d)}$ -- this follows from a simple calculation using that $5/8 < \log 2$.
The remaining of the proof follows the same pattern as in Proposition~\ref{prop:uniprior}.
\end{proof}
}

\appendix[Additional Experimental Results]

\main{
We provide additional experiment results here. The interested reader can find a more expansive set of findings in \cite{DBLP:journals/corr/abs-2201-04845}, where we discuss in more detail the role of the released model hyperparameters, size, initialization, and gradient norm of the target on the reconstruction MSE.
}

\subsection{Randomness from Released Model Initialization}
\label{app: init_randomness}

\arxiv{
\begin{table}[H]
\centering
\caption{Reconstruction metrics with and without random released model initialization for different released model learning rate and momentum hyperparameters on MNIST. See \cref{ssec: metrics} for a description of each metric.}
\label{tab: mnist_random_init}
\resizebox{0.5\textwidth}{!}{
\begin{tabular}{cCCCCC}
\toprule
     
Randomly initialize     & \text{Released model}   & \text{Released model}  & \multirow{2}{*}{\text{LPIPS}}            & \multirow{2}{*}{\text{MSE}}                      & \multirow{2}{*}{\text{KL}}                \\
released models &  \text{learning rate}  &  \text{momentum} &           &                      &                   \\
\cmidrule{2-6}
\multirow{4}{*}{\cmark}  & 0.01 & 0.0        & 0.3342   & 0.0693     & 5.2477                              \\
                       & 0.01 & 0.9      & 0.3326    & 0.0691 & 5.2218                                                      \\
                       & 0.20  & 0.0       & 0.3326 & 0.0692    & 5.3599                                                   \\
                       & 0.20  & 0.9      & 0.3226  & 0.0695    & 5.5261                                                 \\
\cmidrule{2-6}
\multirow{4}{*}{\xmark} & 0.01 & 0.0       & 0.0197  & 0.0041  & 0.0140                                     \\
                       & 0.01 & 0.9      & 0.0225 & 0.0049 & 0.0179                                                    \\
                       & 0.20 & 0.0        & 0.0286  & 0.0063 & 0.0357                                                \\
                       & 0.20  & 0.9      & 0.0382  & 0.0089   & 0.0414                                          \\
\bottomrule
\end{tabular}
}
\end{table}
}

\begin{figure}[H]
\captionsetup{width=1.\linewidth}
  \centering
    \includegraphics[width=0.8\linewidth]{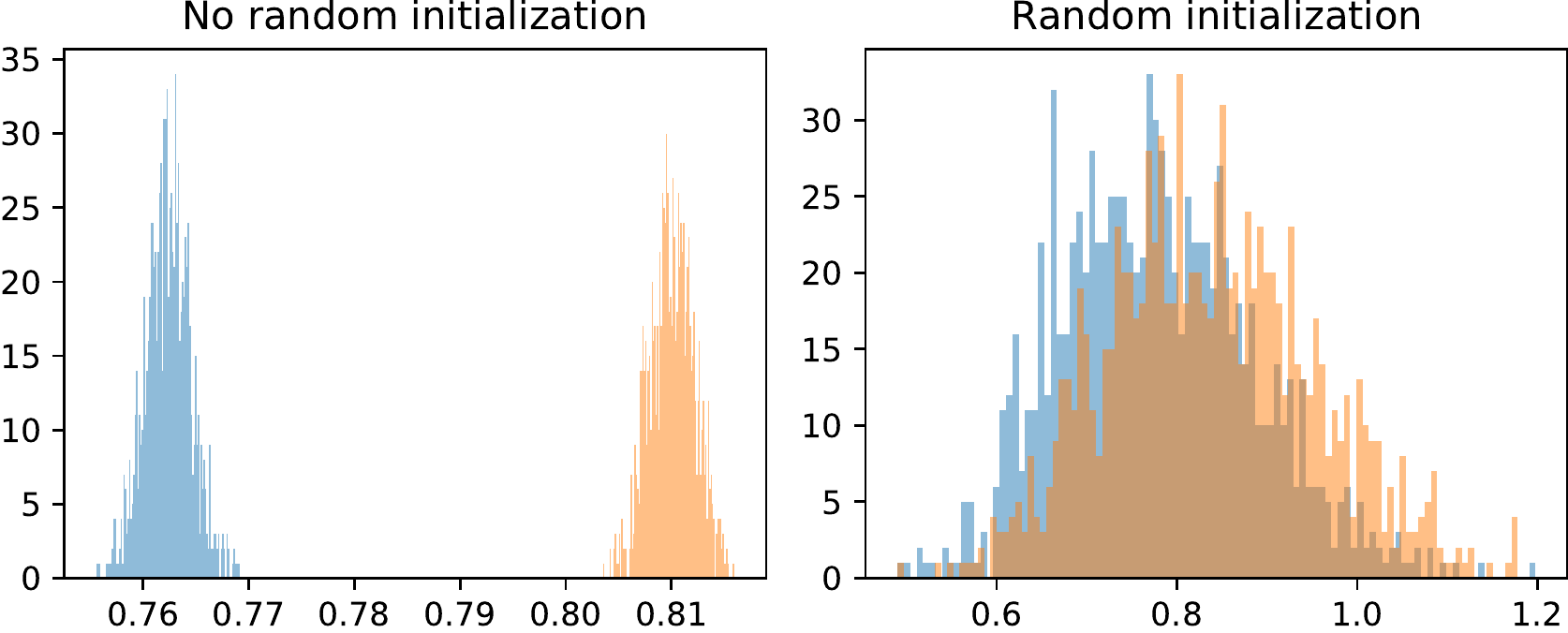}
\caption{On the MNIST dataset, given a target point $z$ we train $1K$ released models with (blue) and without (orange) this point included in two settings: when each model is initialized with a new random seed, and when each model has the same initialization. We plot the distribution of losses on this target point in these two settings. Clearly, when there is no model randomization the distributions are perfectly separable and so membership is easy to infer, while in the random setting, the distributions nearly perfectly overlap implying membership may be more difficult.
}
\label{fig:membership_diff}
\end{figure}

\arxiv{
In \Cref{tab: mnist_random_init}, we show results both with and without randomization from initialization for different released model learning rates on the MNIST dataset. 
The average MSE with and without random initialization was 0.0089 and 0.0695, respectively.
The choice of learning rate negatively impacts the attack in settings where initial parameters are known, but not when initial parameters are unknown, since the attack fails for any choice.
}

\main{
As observed in \Cref{ssec:reconctruction-factors}, the attack will fail when the adversary does not have knowledge of the initial parameters of the released model and so must instantiate each \attackin{} used to train the attack with a new seed that controls the selection of initial parameters.
}
\arxiv{
The attack fails when the adversary does not have knowledge of the initial parameters of the released model and so must instantiate each \attackin{} used to train the attack with a new seed that controls the selection of initial parameters.
}
We provide evidence that it may not be possible to perform a reconstruction attack in this setting by appealing to a simpler task of inferring membership, and demonstrating this problem is also difficult without knowledge of the initial parameters.
We instantiate an informed MIA as described in \Cref{sec:threat-model} on the MNIST dataset.
Specifically, given a target point $z$ we train $1K$ released models with and without this point included (but with the same fixed set) in two settings: when each model is initialized with a new seed (differing initial parameters), and when each model is initialized with the same seed (identical initial parameters).
In \Cref{fig:membership_diff}, we plot the distribution of losses on this target point in these two settings. 
Clearly, when there is no initial parameter randomization the distributions are perfectly separable and so membership is easy to infer, while in the random setting, the distributions nearly perfectly overlap implying membership may be more difficult, if not impossible.
Note that if released model training was fully deterministic, the distribution of losses on the target point in the setting with no random initialization would collapse to a point distribution.
However, all our models are trained with JAX on GPUs that compile with non-deterministic reductions, introducing a small source of randomness~\cite{johnson_2020}.

\subsection{Transfer Learning from a Reconstructor Network Trained on a Different Fixed Set}
\label{app: transfer_learning}

\begin{figure}[H]
  \centering
\begin{subfigure}[t]{.24\textwidth}
\captionsetup{width=0.75\textwidth}
\centering
    \includegraphics[width=0.99\linewidth]{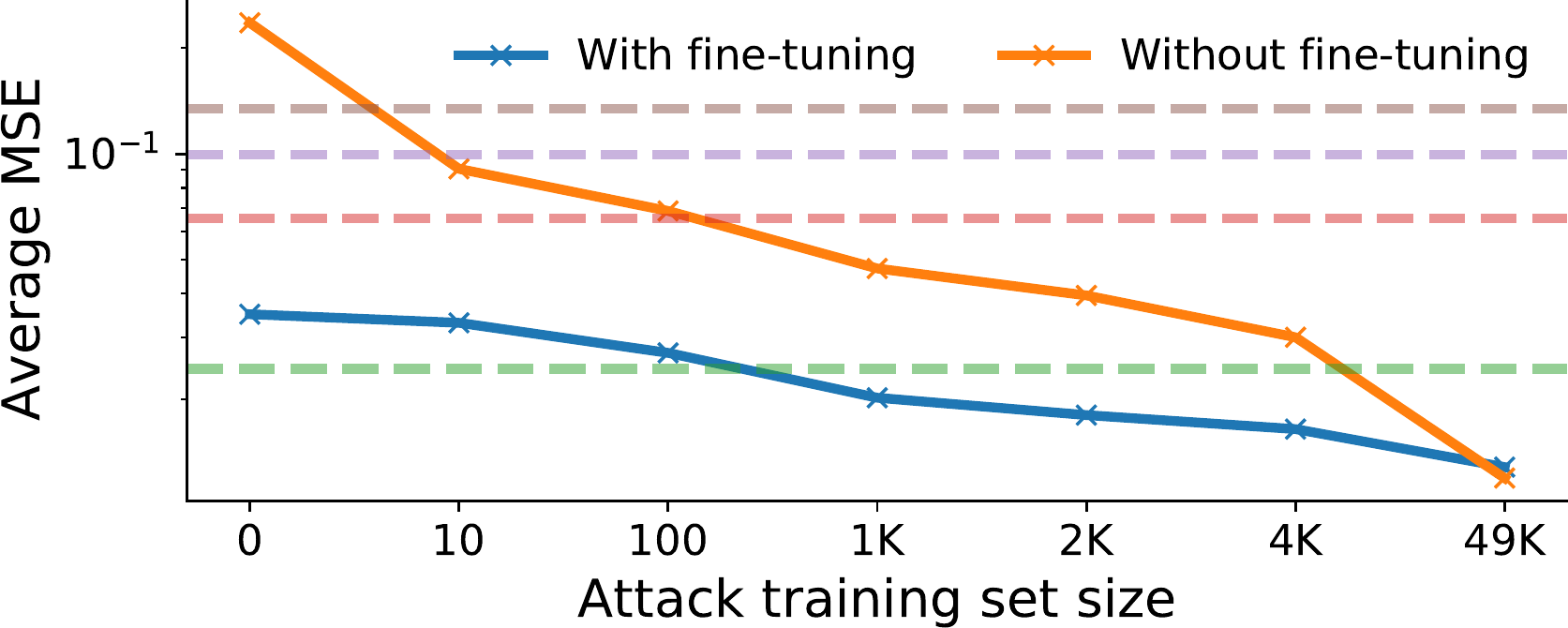}
\caption{MNIST, $|\Dfixed| = |\Dfixed'| = 10K$, leaving a maximum $49K$ shadow models.}
\label{fig:mnist_finetune_attack}
\end{subfigure}\begin{subfigure}[t]{.24\textwidth}
\captionsetup{width=0.75\textwidth}
\centering
    \includegraphics[width=0.99\linewidth]{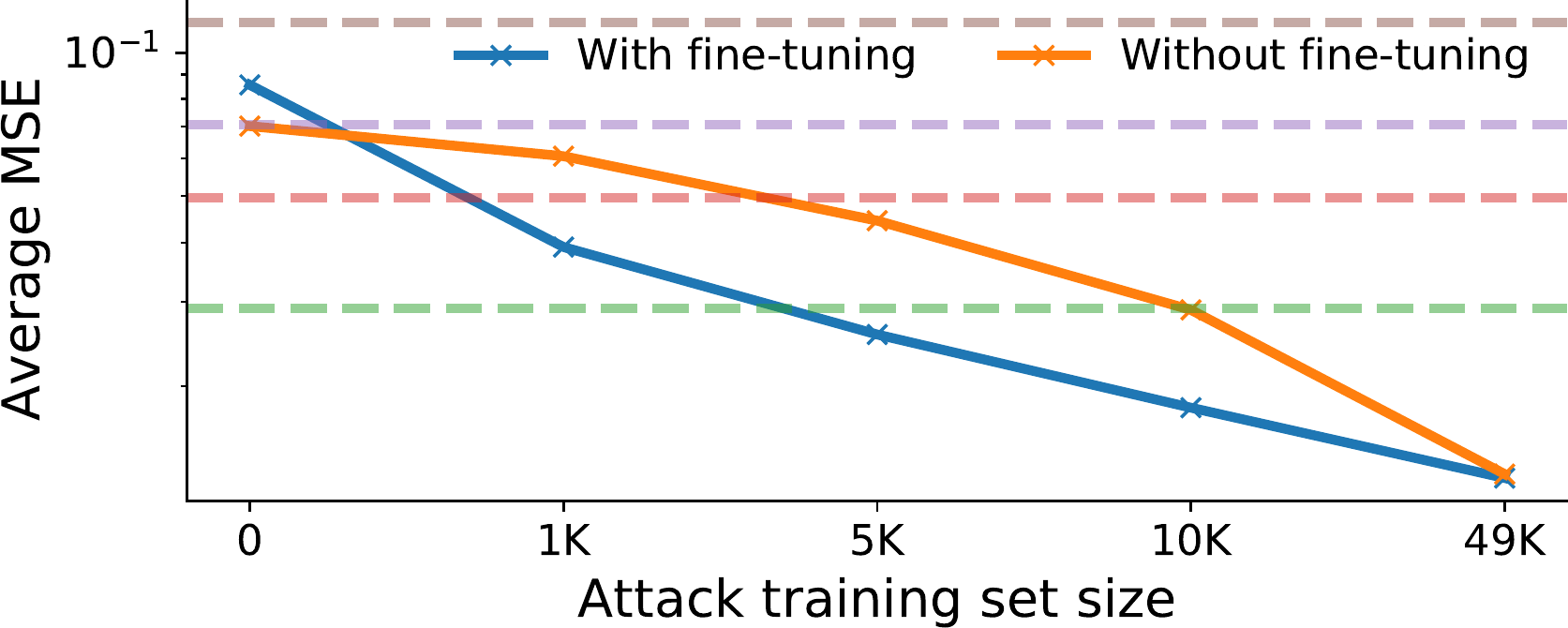}
    \caption{CIFAR-10, $|\Dfixed| = |\Dfixed'| = 5K$, leaving a maximum $49K$ shadow models.}
    \label{fig:cifar10_finetune_attack}
\end{subfigure}\caption{Fine-tuning the reconstruction network for a new target. The reconstruction network is initially trained to attack a released model trained with fixed dataset $\Dfixed$, and then fine-tuned for a new released model trained with fixed dataset $\Dfixed'$. Interestingly, the reconstructor network can do zero-shot learning on MNIST images, despite being trained on entirely separate data (i.e. $\Dfixed' \cap \Dfixed = \emptyset$).}
\label{fig:finetune_attack}
\end{figure}

Given a reconstructor network, $\attackmodel$, trained to attack released models of the form $\theta = \train_{\Dfixed}(z)$, can the adversary amortize the cost training a new $\attackmodel'$ that aims to attack a released model $\theta' = \train_{\Dfixed'}(z)$, where $\Dfixed' \cap \Dfixed = \emptyset$?
On both MNIST and CIFAR-10, in \Cref{fig:finetune_attack} we show that fine-tuning the reconstructor $\phi$ on only a small number of shadow models can reach comparative performance to a reconstructor trained from scratch on substantially more data.

\subsection{Adversary Knowledge of Starting Point: Initialization vs Near Convergence}
\label{app: pretrain_vs_init}

\begin{figure}[H]
  \centering
    \includegraphics[width=0.7\linewidth]{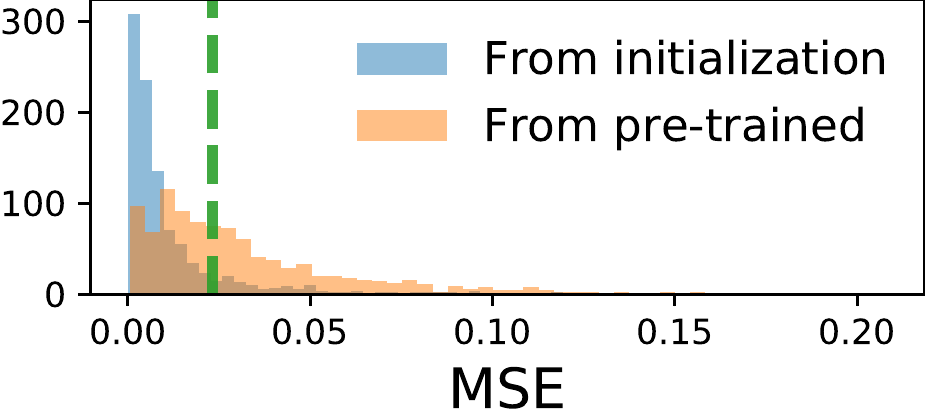}
\caption{A histogram of MSE for $1K$ released model targets for an adversary that observes the initial parameters compared to first observing a pre-trained released model near convergence. We also give the NN oracle for reference.} 
\label{fig:pretrain_vs_init}
\end{figure}

By default we assume the adversary knows the initial released model parameters, motivated by scenarios where the random seed used to generate initial parameters is made public or is leaked.
Another motivating example is that of federated learning, where an adversary participates in the learning protocol. 
However, in such a setting, it is not guaranteed the adversary will observe a model at it's initial state. 
If the adversary is only included in the protocol after a sufficient number of time steps, the state at which they first observe released model parameters may be close to convergence.
Here, we measure how reconstructions are affected by this subtle assumption.
We pre-train a released model on $10K$ MNIST images (this model already achieves $>92\%$ MNIST test set accuracy), and then following the experimental set-up reported in \Cref{ssec: exp_setup} on the remaining MNIST data, and compare to a released model in the standard setting where no pre-training occurs.
\Cref{fig:pretrain_vs_init} shows the MSE for each 1K released model target in both settings.
Clearly there is a difference in reconstruction fidelity that depends on the step at which the adversary first observes the released model parameters.
A model that has nearly converged may be less dependent / not memorize it's newly seen training data, making reconstructions more challenging.

\subsection{Visualization of Easy and Hard CIFAR-10 Reconstructions}
\label{app: viz}

\begin{figure}[H]
\captionsetup{width=0.5\textwidth}
  \centering
\begin{subfigure}[t]{.24\textwidth}
\centering
    \includegraphics[width=0.99\linewidth]{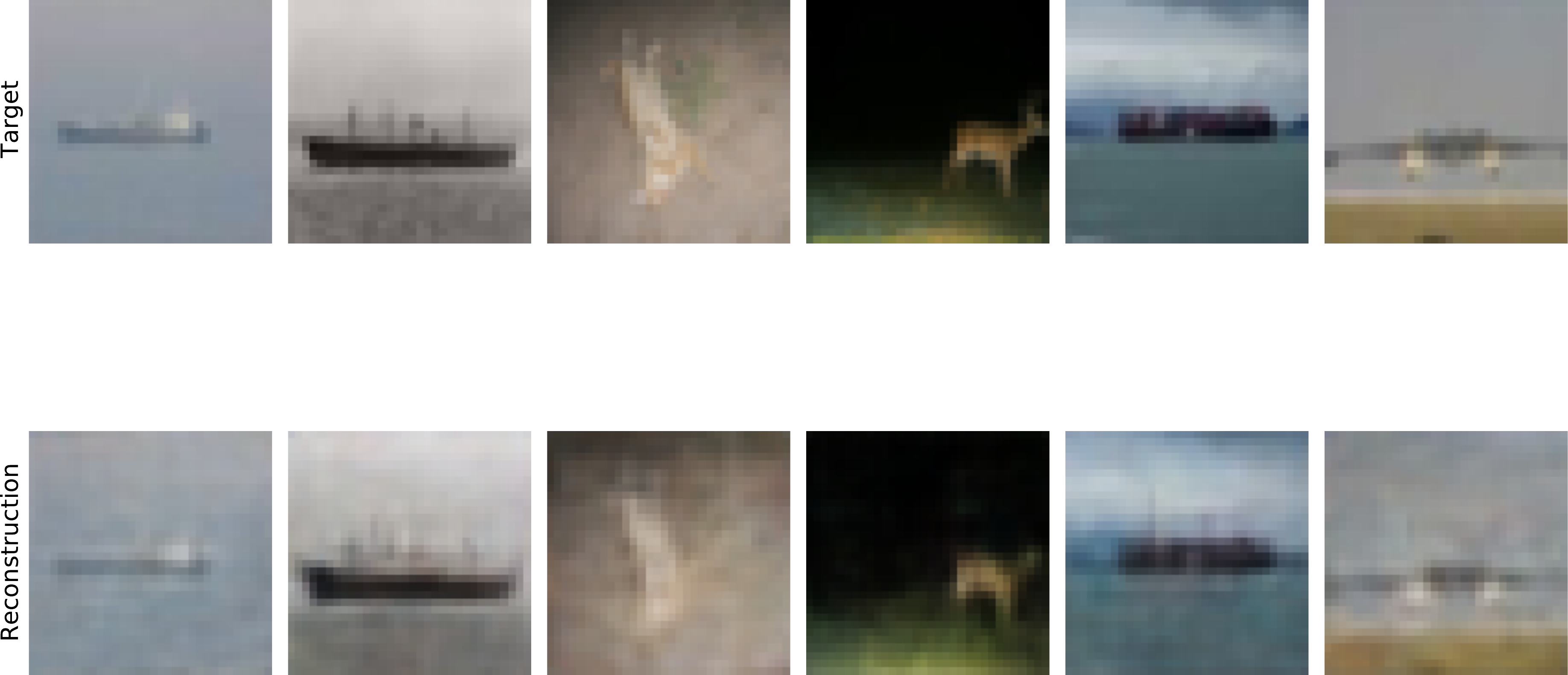}
    \caption{Smallest MSE}
    \label{fig:easy_viz}
\end{subfigure}\begin{subfigure}[t]{.24\textwidth}
\centering
    \includegraphics[width=0.99\linewidth]{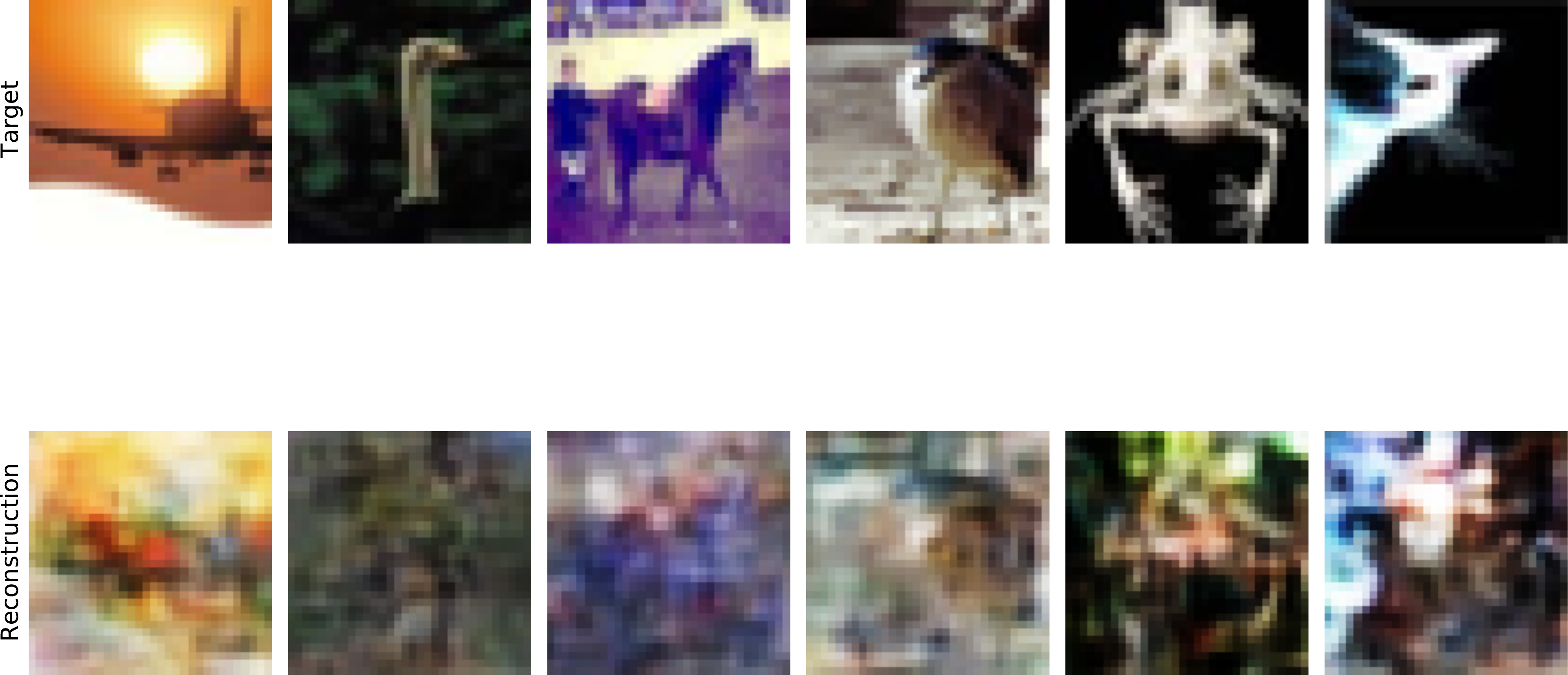}
    \caption{Largest MSE}
    \label{fig:hard_viz}
\end{subfigure} \caption{Example of the six smallest and size largest MSE reconstructions for CIFAR-10.}
\label{fig:easy_hard_viz}
\end{figure}

In \Cref{fig:easy_hard_viz} we show the six reconstructed CIFAR-10 examples with smallest MSE and six examples with largest MSE out the $1K$ targets used for evaluation. 
The easiest targets to reconstruct correspond to structurally simple images with a constant background, while the most difficult often have complex background and color schemes.

\subsection{Reconstructing Against a Released Model Trained with DP on CIFAR-10}
\label{app: cifar10_dp}

\begin{figure}[H]
\captionsetup{width=0.5\textwidth}
\centering
  \centering
  \includegraphics[width=0.76\linewidth]{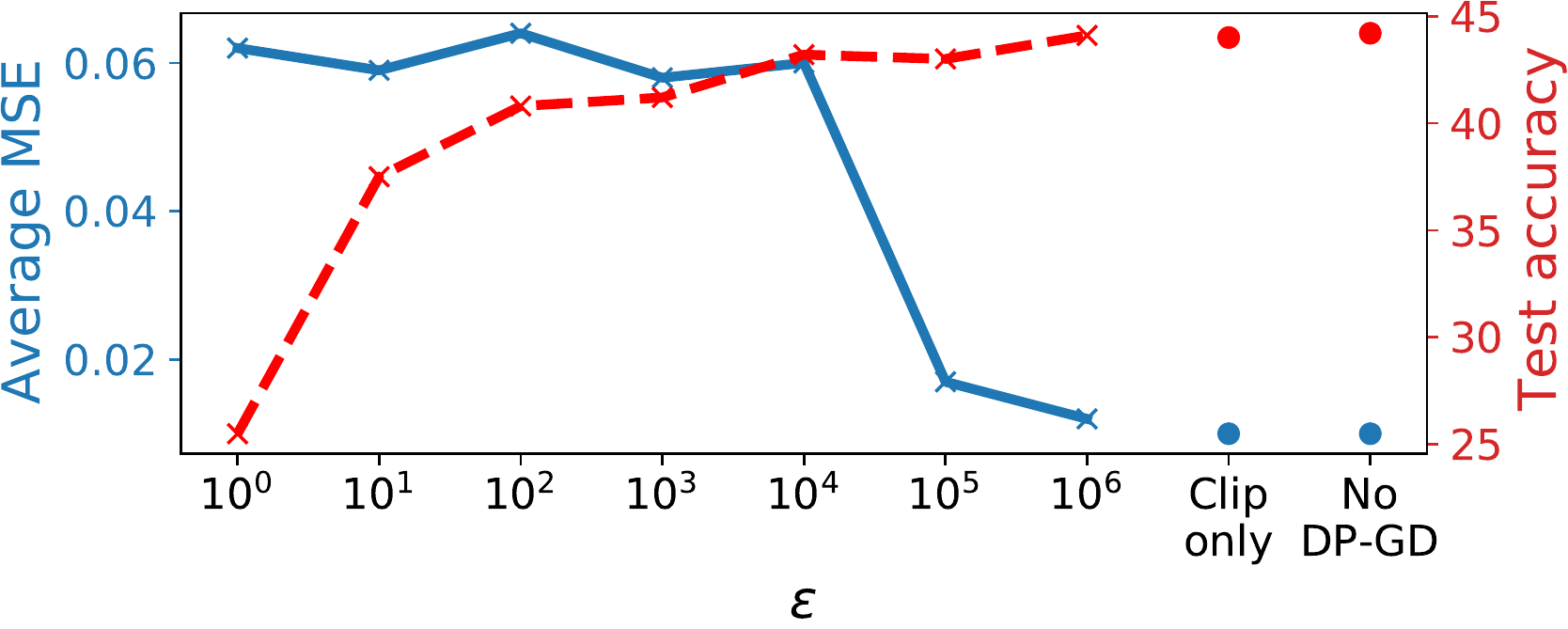}
  \caption{Average MSE of reconstructions and test accuracy of released model using $(\epsilon,\delta)$-DP on the CIFAR-10 dataset.}
  \label{fig:dpsgd_cifar10}
\end{figure}

We perform analogous DP experiments as in \Cref{ssec: mnist_dp_exp} for CIFAR-10.
Gradients are clipped to have a maximum $\ell_2$ norm of 10, and Gaussian noise is added such that the model is $(\epsilon, \delta=10^{-5})$-DP.
In \Cref{fig:dpsgd_cifar10} we see that again, a large $\epsilon$ in $(\epsilon, \delta)$-DP successfully mitigates against reconstruction attacks while preserving test accuracy in comparison to non-DP training.

\arxiv{
\subsection{Size of Released Model}
\label{app: released_model_size}

\begin{figure}[H]
\captionsetup{width=0.5\textwidth}
  \centering
    \includegraphics[width=0.75\linewidth]{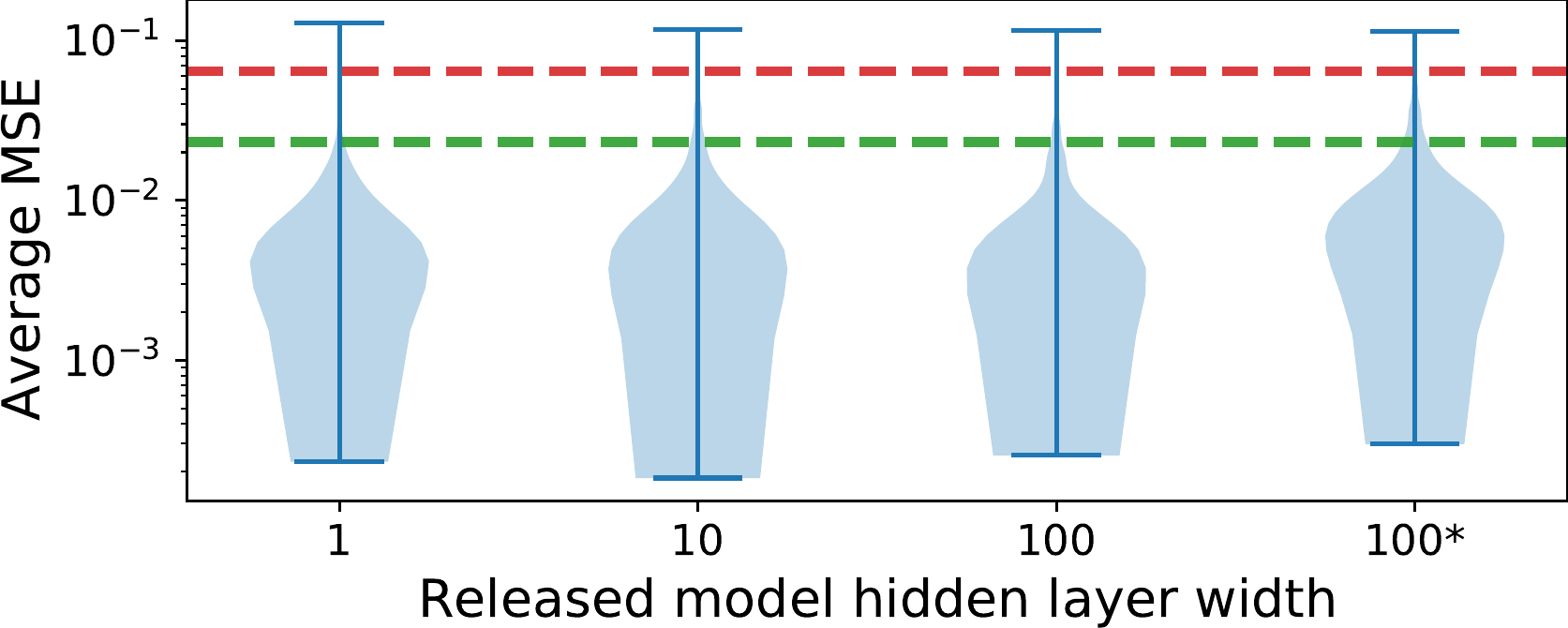}
\caption{Average MSE for different widths of a 1-hidden layer released model on MNIST. We denote the attack that only uses the second layer of the released model by *.}
\label{fig:mnist_mlp_arch}
\end{figure}

\begin{table}[H]
\centering
\caption{How the size of the released model impacts the reconstruction attack on MNIST. We denote the case where the attack only uses the second layer of the released model by *, resulting in a significant decrease in attack input dimensionality.}
\label{tab: mnist_mlp_arch_tab}
\resizebox{0.5\textwidth}{!}{
\begin{tabular}{lCCCCC}
\toprule
\multirow{2}{*}{\text{Hidden layer width}}                                                          & \text{Number of}  & \text{Dimensionality of input} & \multirow{2}{*}{\text{LPIPS}}          & \multirow{2}{*}{\text{MSE}}                    & \multirow{2}{*}{\text{KL}}           \\
 &    \text{trainable parameters}      & \text{to reconstructor network} &                &      &              \\
\midrule
$1$                           & 805 & 805        & 0.0325  & 0.0079 & 0.0448                                                \\
$10$                          & 7960 & 7960       & 0.0392    & 0.0089 & 0.0516                                                                                                         \\
$100$                         & 79420 & 79420     & 0.0338 & 0.0079 & 0.0258                                                                              \\
$100$*  & 79420 & 1010      & 0.0477  & 0.0124 & 0.0840                                                                             \\                          
\bottomrule
\end{tabular}
}
\end{table}

Here we expand on our investigation around the interplay between size of the released model and reconstruction success by varying the width of the released model hidden layer between 1, 10, and 100 for the MNIST dataset.
\Cref{fig:mnist_mlp_arch} and \Cref{tab: mnist_mlp_arch_tab}
show that the width of the hidden layer does
not significantly affect reconstruction, as all have an average MSE on the $1K$ targets far below the NN oracle of 0.0232.
We also investigate the case in which the attacker only
trains and evaluates the attack using the second layer of the \attackin{} (and released model) for the MNIST dataset.
Because the architecture of the 1-hidden layer released model has 7850 parameters in the first layer and 110 in the second, by only using the second layer as inputs to the attack, we reduce the dimensionality of attack inputs by 98\%.
Using only the second layer of released model for the input to the reconstructor network marginally increases average MSE, but substantially reduces the dimensionality of inputs, thereby improving efficiency during training the attack.

\subsection{Variance over Different Initializations}
\label{app: result_variance}

\begin{figure}[H]
\captionsetup{width=0.5\textwidth}
  \centering
\begin{subfigure}[t]{.24\textwidth}
\captionsetup{width=0.95\textwidth}
\centering
    \includegraphics[width=0.99\linewidth]{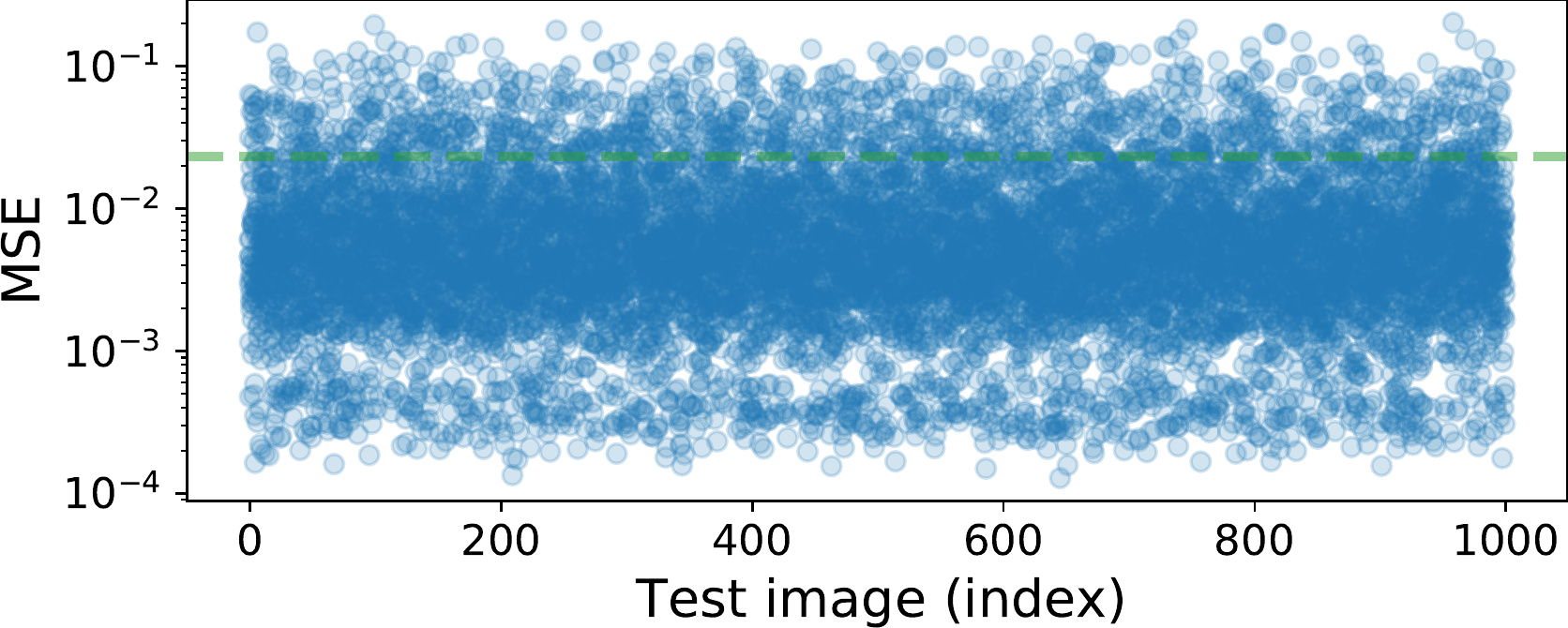}
    \caption{Variance in reconstructions over ten different experiment runs (ten different initial parameter configurations).}
    \label{fig:mnist_varying_seeds_width_10}
\end{subfigure}\begin{subfigure}[t]{.24\textwidth}
\captionsetup{width=0.95\textwidth}
\centering
    \includegraphics[width=0.99\linewidth]{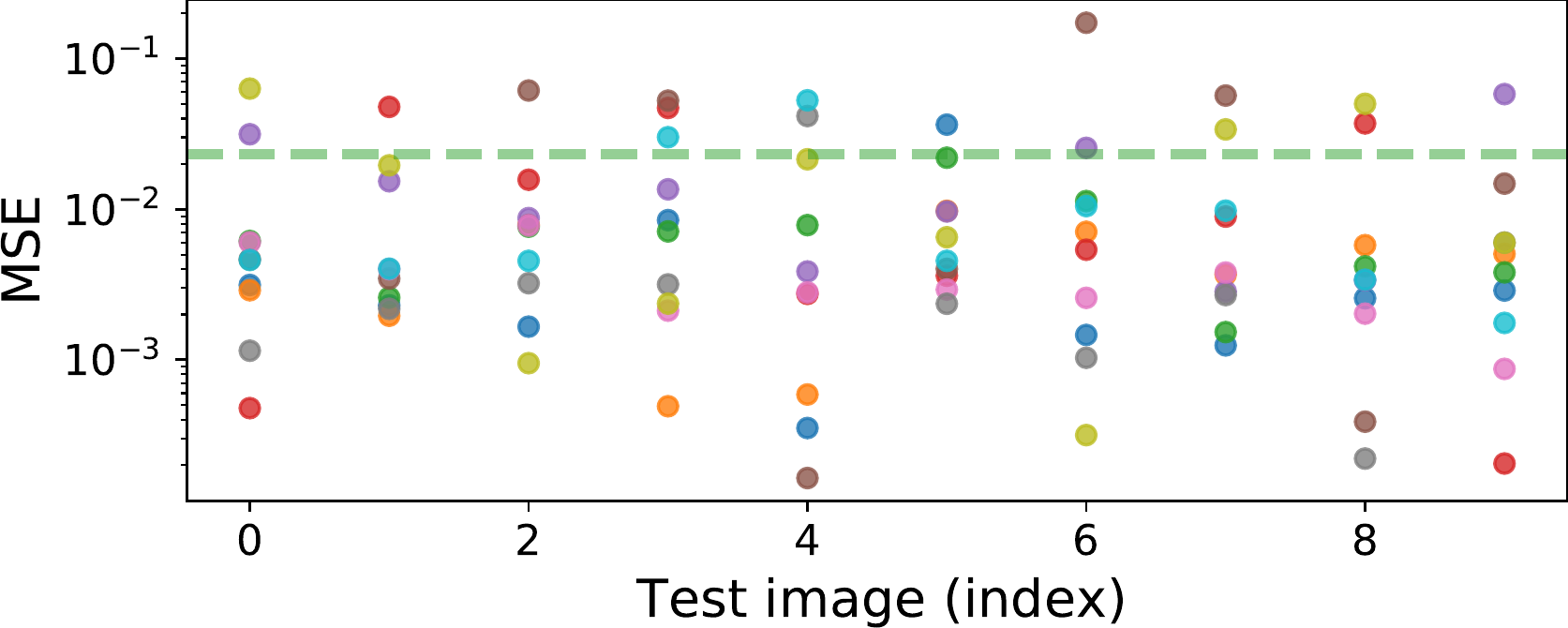}
    \caption{Each seed is a different colour (shown for targets out of 1K).}
    \label{fig:mnist_varying_seeds_colored}
\end{subfigure}\caption{Variance in MSE across ten different experimental runs each with a different initial parameter configuration for the release model, for the MNIST dataset.}
\label{fig:mnist_varying_seeds}
\end{figure}

We repeat our default reconstruction attack on MNIST ten times, where at each repetition we will sample a new seed controlling initial parameters over that experiment run.
We the measure how consistent our experimental results are for different choices of seeds. 
From \Cref{fig:mnist_varying_seeds} we can see that while there is variance in results, almost all reconstructions lie close or below the NN oracle distance in \Cref{fig:mnist_oracle_hist}. 
One may wonder if certain seeds are more amenable to attacks than others? 
That is, are there configurations of initial released model parameters that result in better reconstructions across all target points in comparison to other initializations?
We show this is not the case in \Cref{fig:mnist_varying_seeds_colored} by marking each seed in a different color and showing the MSE for ten test targets over the ten different seeds; this highlights that the ordering of seeds with respect to MSE changes for different images.

\subsection{Effect of Batch Size, Learning Rate, and Fixed Set Size on MNIST Reconstructions}
\label{app: mnist_batch_size_lr_epoch_fixed_size}

\begin{figure}[H]
\captionsetup{width=0.5\textwidth}
  \centering
\begin{subfigure}[t]{.16\textwidth}
\centering
    \includegraphics[width=0.99\linewidth]{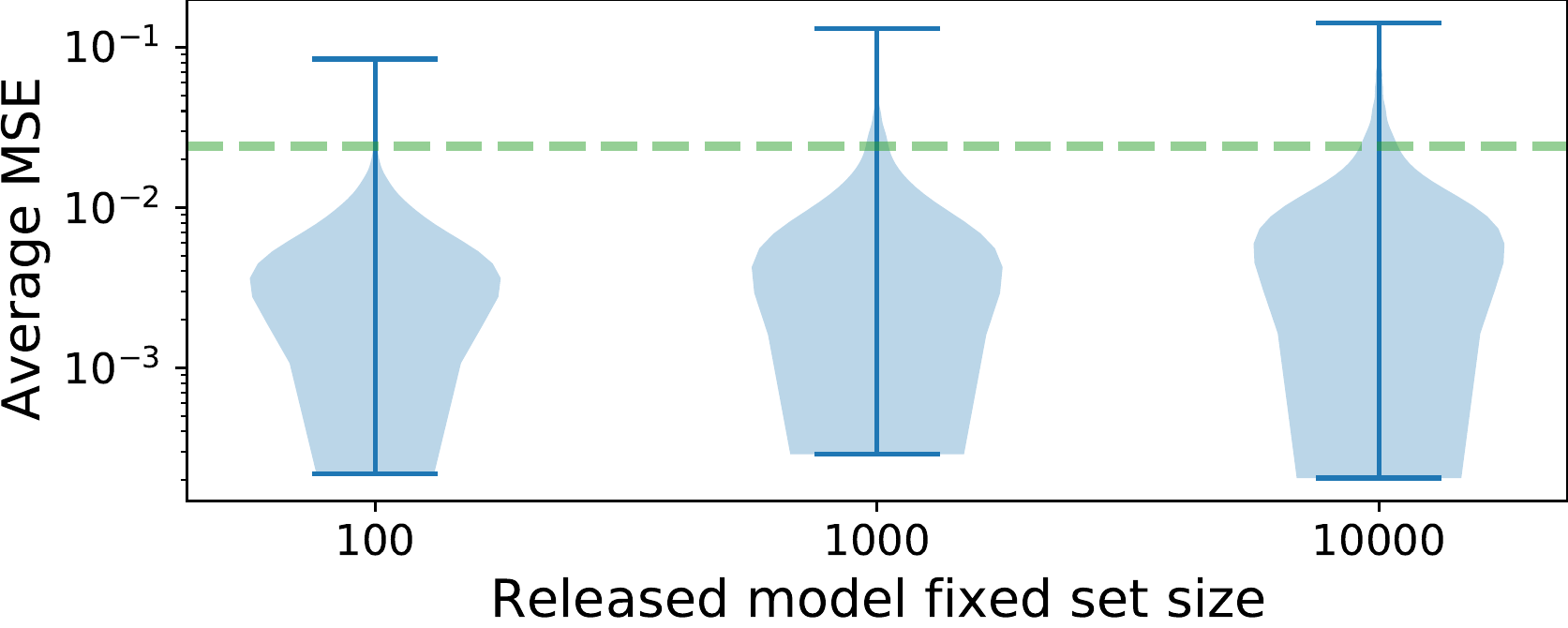}
\label{fig:}
\end{subfigure}\begin{subfigure}[t]{.16\textwidth}
\centering
    \includegraphics[width=0.99\linewidth]{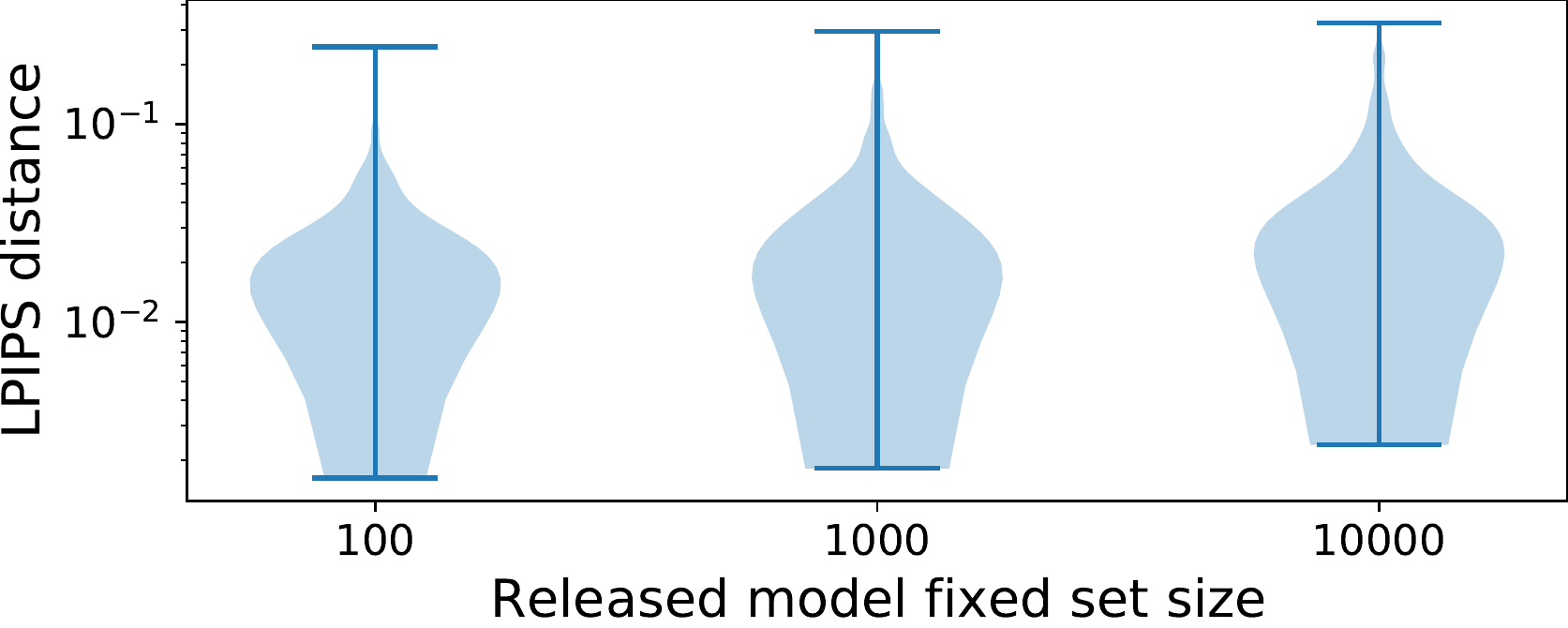}
\label{fig:}
\end{subfigure}\begin{subfigure}[t]{.16\textwidth}
\centering
    \includegraphics[width=0.99\linewidth]{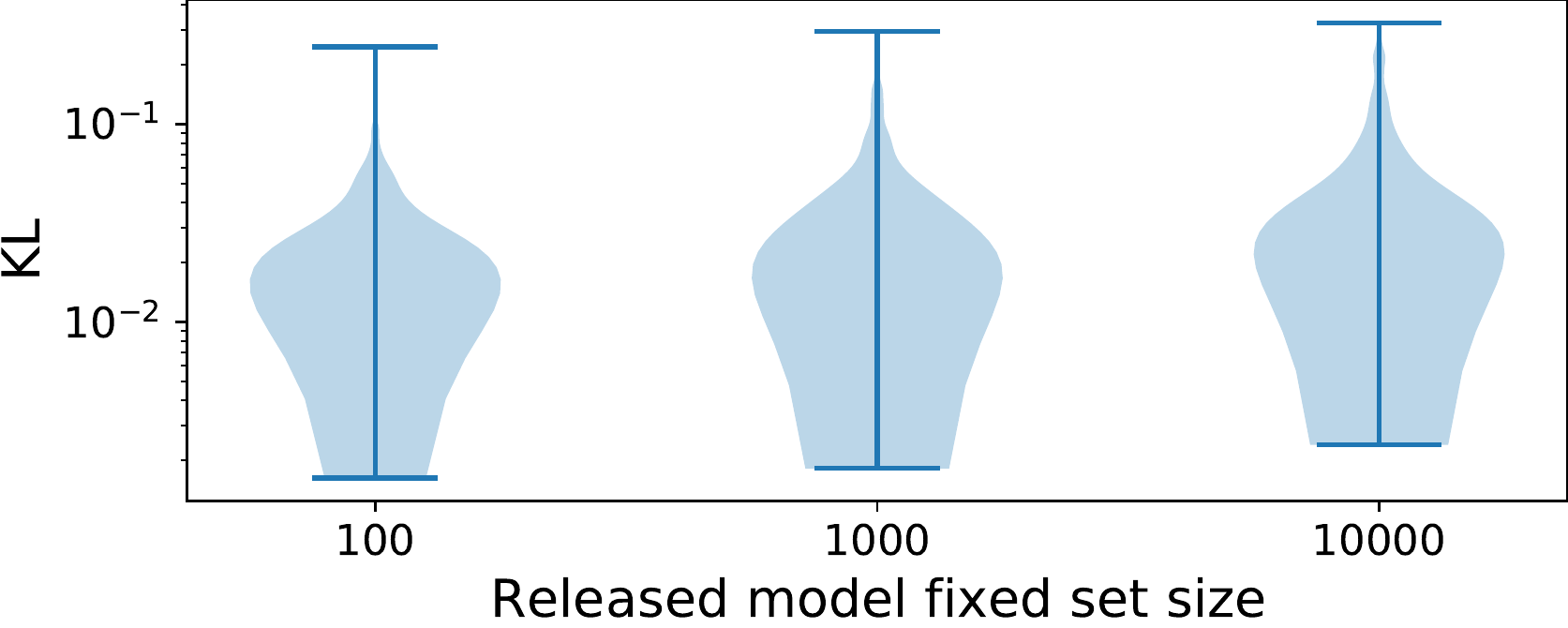}
\label{fig:}
\end{subfigure}\caption{Reconstruction metrics (on MNIST) for different fixed set sizes, $\Dfixed$. We observe a slight increase in in reconstruction error across all metrics when $|\Dfixed|$ grows.}
\label{fig:mnist_fixed_set_size}
\end{figure}

We summarize other aspects of the training procedure of the released model that impact the quality of reconstructions on the MNIST dataset.
Firstly, larger learning rates when training the released model can negatively influence the quality of reconstructions in SGD (i.e.\ using mini-batches), but in full-batch gradient descent we observed similar a MSE across different learning rates (cf.\ \Cref{fig:mnist_batch_size_appendix}).
Secondly, in \Cref{fig:mnist_fixed_set_size}, we measure the influence of the size of the fixed set,
by varying it between 100, 1$K$, and 10$K$ (which is the default in our MNIST experiments).
Interestingly, we only observe
a small decrease in MSE as the fixed set size decreases.

\subsection{Correlation between Metrics on CIFAR-10}
\label{app: cifar10_correlation}

\begin{figure}[H]
\captionsetup{width=0.5\textwidth}
  \centering
\begin{subfigure}[t]{.24\textwidth}
\centering
    \includegraphics[width=0.99\linewidth]{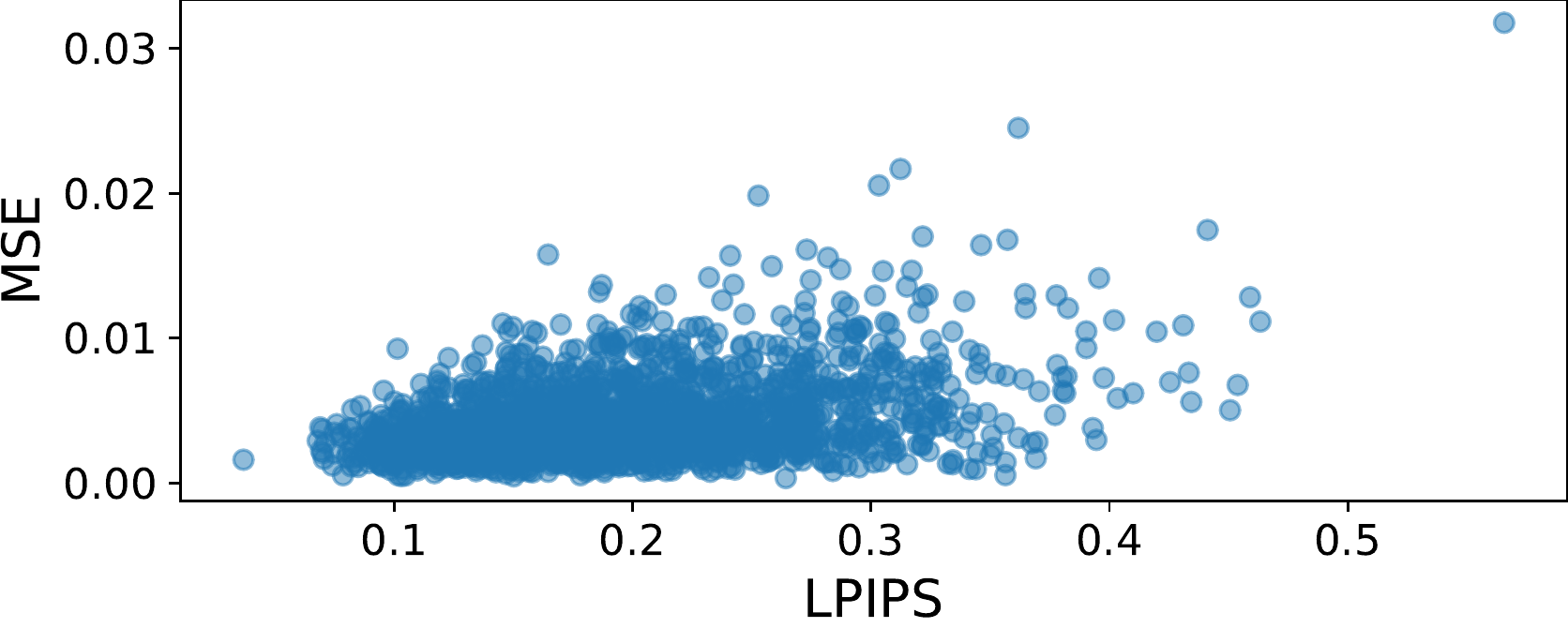}
\label{fig:}
\end{subfigure}\begin{subfigure}[t]{.24\textwidth}
\centering
    \includegraphics[width=0.99\linewidth]{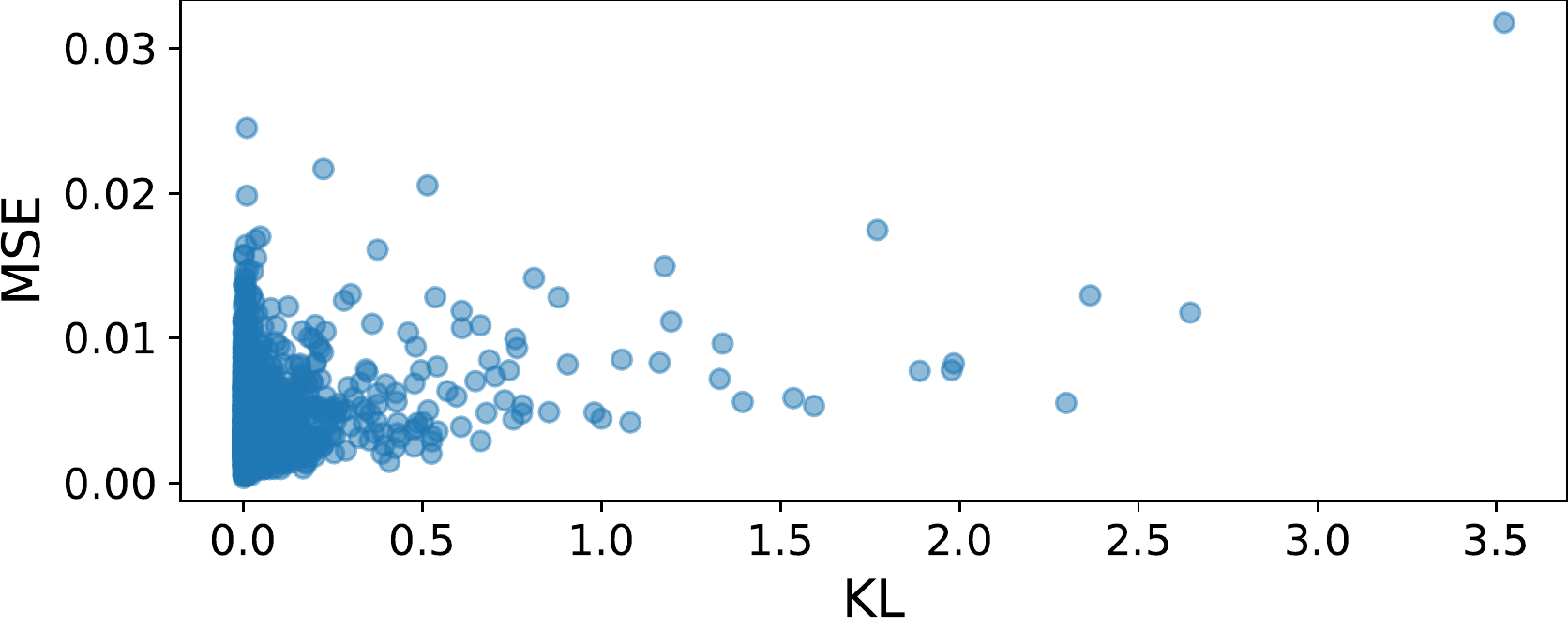}
\label{fig:}
\end{subfigure}\caption{Correlation between different metrics over 1$K$ released model targets for CIFAR-10.}
\label{fig:cifar10_correlation}
\end{figure}

We measure the correlation between MSE, LPIPS, and KL on the 1K test targets for CIFAR-10 in \Cref{fig:cifar10_correlation}, and observe similar relationships between the metrics as in \Cref{fig:mnist_attack_train_size_l2_lpips} and \Cref{fig:mnist_attack_train_size_l2_kl}.

\subsection{Expanded Investigation into Factors that Affect CIFAR-10 Reconstructions}
\label{app: cifar10_factors}

In \Cref{tab: cifar10_master_table}, we investigate reconstructions under different released model optimizers, fixed set sizes and number of training epochs.
We find that the dominant factor in reconstruction fidelity is the number of training epochs of the released model. 
A smaller number of training epochs and smaller fixed set size improves the quality of reconstructions.
Note also that the ability to reconstruct does not seem to be correlated with overfitting or (standard) membership inference success, and that the attack succeeds for different choices of optimizer. 

Due to the size of the released model, and other restrictions such as full-batch training with no regularizers, the test accuracy of the released model for which we can successfully perform attacks is approximately 35-50\%. 
Improving the efficiency of the attack such that it can scale to larger released models is a challenge to address in future work.

\subsection{Correlation between Gradient Norm and Ease of Reconstruction}
\label{app: grad_norm_exp}

\begin{figure}[H]
\captionsetup{width=0.99\linewidth}
\centering
  \centering
  \includegraphics[width=0.75\linewidth]{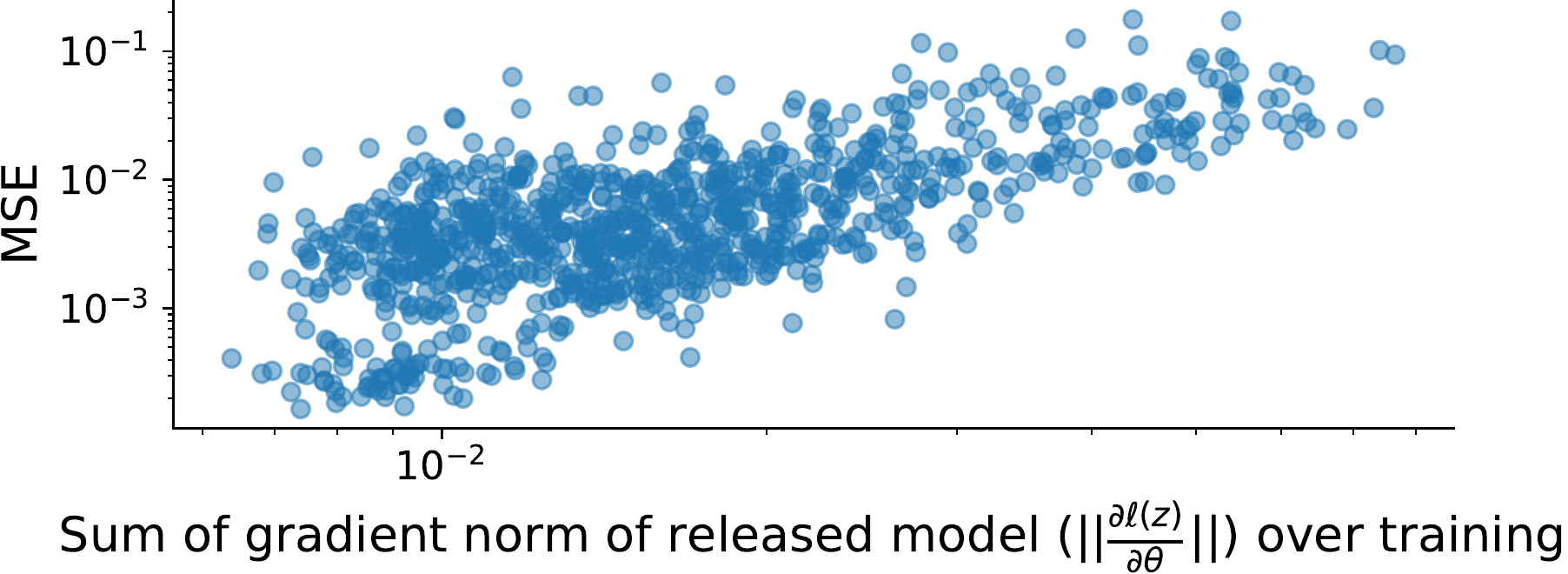}
  \caption{For each target point $z$ in the $1K$ released model target set, we plot MSE against gradient norm $\norm{\frac{\partial \ell(z)}{\partial \theta}}$ for MNIST.}
  \label{fig:l2_grad_norm}
\end{figure}

Following on from \Cref{app: pretrain_vs_init}, we investigate the relationship between reconstruction and the gradient norm of loss with respected to released model parameters computed on the target point through training.
Recent work on training data memorization~\cite{DBLP:conf/stoc/Feldman20} and individual privacy accounting in differential privacy \cite{DBLP:journals/corr/abs-2008-11193} have used the gradient norm of a model with respect to the loss induced by a training point as a measure of memorization or privacy leakage. 
In \Cref{fig:l2_grad_norm}, we evaluate the MSE between target and reconstructions for each 1K target point on MNIST, and also plot the sum of gradient norms over training. 
The two quantities are weakly correlated with one another, however one may expect that the two would be inversely correlated if examples that are strongly memorized are easier to reconstruct -- targets with a larger gradient norm throughout training are outliers that the released model must necessarily memorize to perform well on (\cite{DBLP:conf/stoc/Feldman20}).
We conjecture that the effect we are observing stems from ``outliers'' that are harder to reconstruct not because the released model memorizes them less, but because they’re also outliers for the RecoNN, and therefore target where the reconstructor also performs poorly.

}

\subsection{Fine-Grained Analysis of CIFAR-10 Reconstructions over Released Model Training Epochs}
\label{app: finegrained_cifar10}

\begin{figure}[H]
  \centering
    \includegraphics[width=0.75\linewidth]{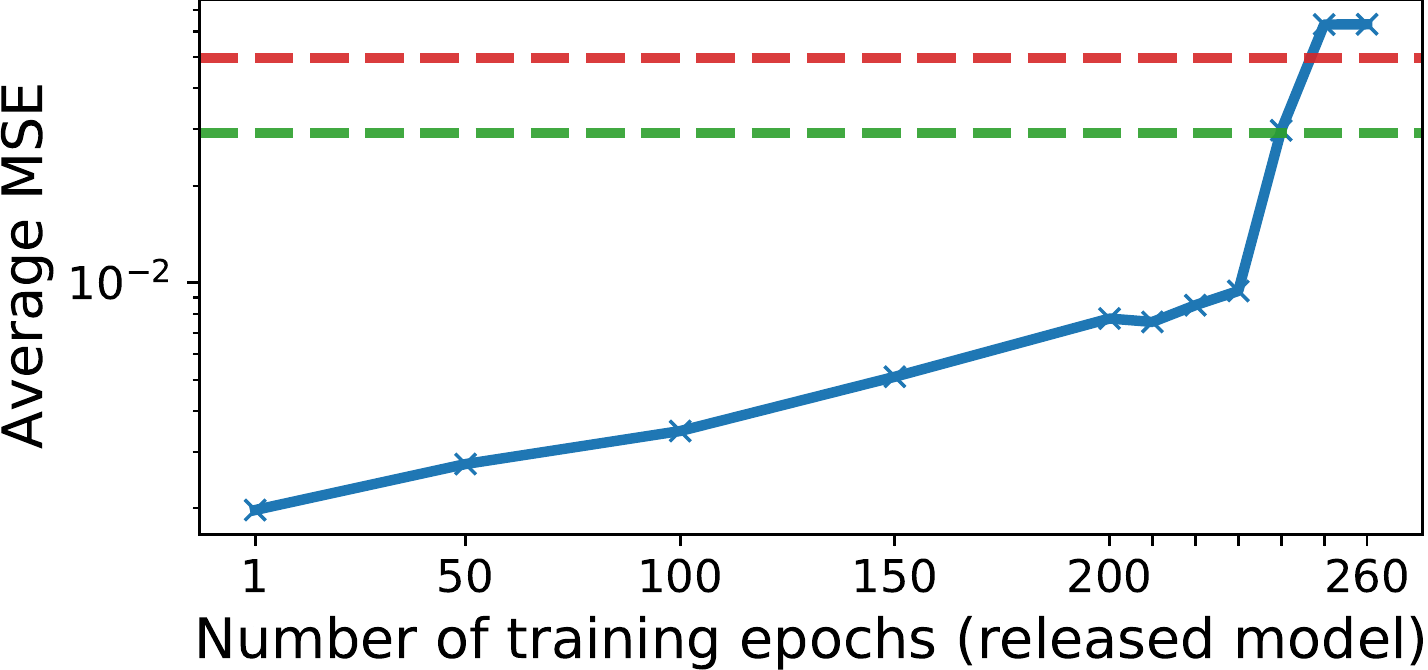}
\caption{How average MSE increases with the number of training epochs of the released model for CIFAR-10.} 
\label{fig:finegrained_epochs_cifar10}
\end{figure}

\arxiv{As we observed in \Cref{app: cifar10_factors}, reconstructing CIFAR-10 images is sensitive to the number of training epochs of the released model.}
\main{Reconstructing CIFAR-10 images is sensitive to the number of training epochs of the released model.}
We perform a fine-grained analysis to inspect at what epoch the attack becomes unsuccessful.
This can be seen in \Cref{fig:finegrained_epochs_cifar10}, where we plot average MSE over 1K released model targets as a function of the number of training epochs. 
MSE slowly increases with number of epochs up until approximately 240-250 epochs, at which point we observe that ``reconstructability'' undergoes a phase transition.
Initially, we conjectured this was due to non-determinism from GPU training increasing the variance of shadow model parameters for a larger number of training epochs. 
However, when we implemented shadow model training in a deterministic set-up (using TPUs) we observed no difference in experimental outcomes.
We leave a more in-depth investigation into the relationship between reconstruction success and number of training epochs for future work.

\subsection{ReLU Activations in Released Model}
\label{app: relu_act}

\begin{figure}[H]
\captionsetup{width=0.99\linewidth}
  \centering
    \includegraphics[width=0.75\linewidth]{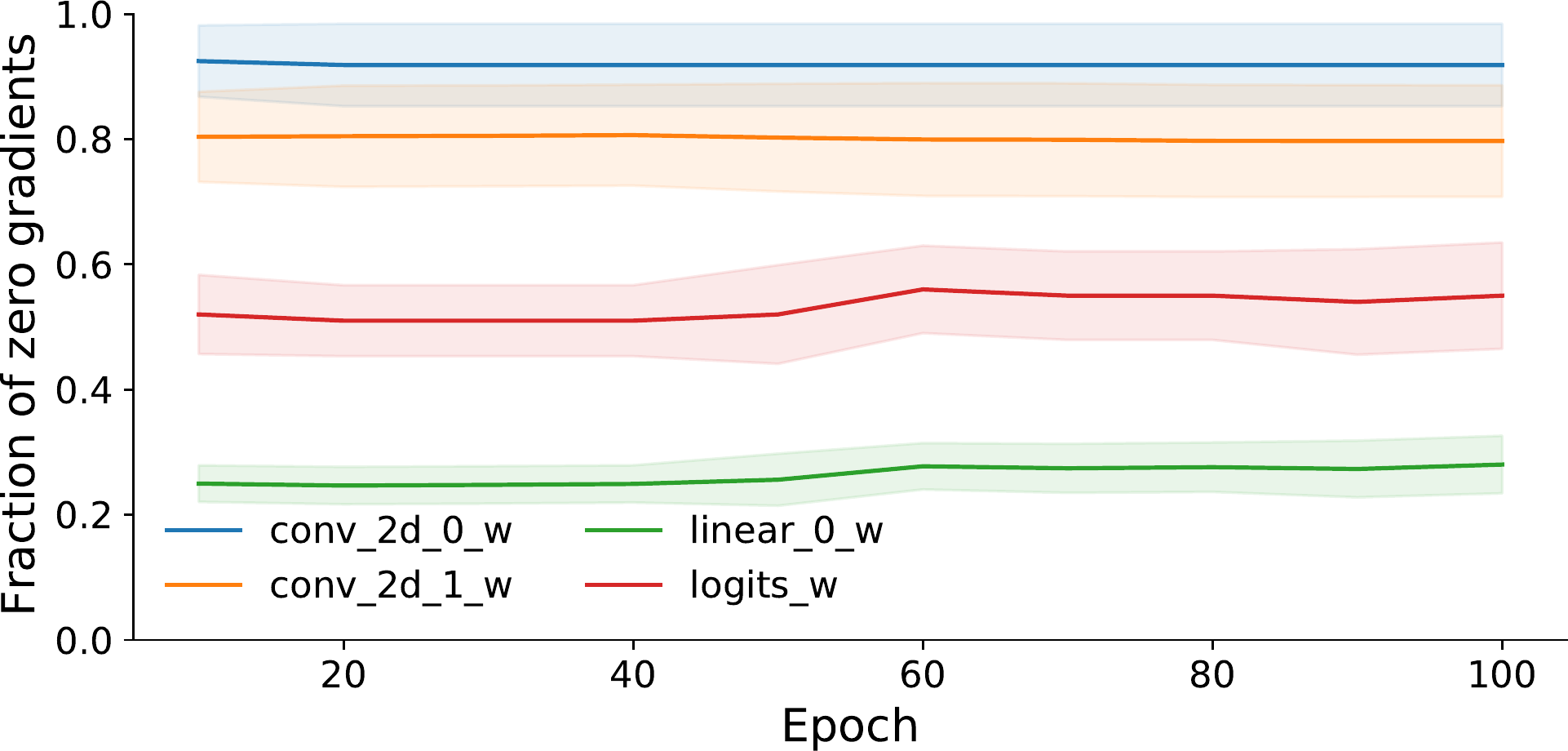}
\caption{Evidence on CIFAR-10 reconstruction task that ReLU activations make reconstruction attacks harder. For the target, $z$, we plot $\frac{\partial \ell(z)}{\partial \theta}$ for each layer in the released model $\theta$, throughout training. A large fraction of these gradients are zero, implying less influence of this additional point on the trained model, in comparison to other activations that have non-zero gradients everywhere.}
\label{fig:relu_act}
\end{figure}

\begin{table}[htp]
\centering
\caption{Comparison of reconstructions for different released model activations on MNIST. Please refer to \cite{jaxact} for a description of each activation function.}
\label{tab: act_comparison}
\begin{tabular}{lC}
\toprule
Activation   & \text{Average MSE over 1K test targets} \\
\midrule
ReLU         & 0.0182                           \\
$\max(-0.5, x)$ & 0.0096                           \\
ELU          & 0.0089                           \\
Sigmoid      & 0.0085                           \\
Softplus     & 0.0083                           \\
Swish        & 0.0091                           \\
Leaky ReLU   & 0.0092                           \\
Tanh         & 0.0086                           \\
CELU         & 0.0077                           \\
SELU         & 0.0083                           \\
GELU         & 0.0088                           \\
Identity     & 0.0085                        
   \\
\bottomrule
\end{tabular}
\end{table}

As we saw in \Cref{ssec:reconctruction-factors}, released models with ReLU activations tend to be harder to attack in comparison to other activation functions with non-zero gradients almost everywhere, and result in poor quality reconstructions (an MSE larger than the NN oracle distance). 
We conjecture that this is caused by a large fraction of parameters receiving zero gradients at each step of training, thereby diminishing the mutual information shared between model parameters and the unknown target training point.
In \Cref{fig:relu_act}, for each layer of the released model, we show the fraction of parameters that received zero gradient when computing the loss of the unknown training point.
Over 80\% of the parameters in the convolutional layers have zero gradients.
Additionally, in \Cref{tab: act_comparison} we compare reconstructions against released models that employ different activation functions, and find that ReLU remains the outlier. 
Note that we also reconstruct against a released model that uses a modified version of ReLU that has zero gradient for $x<-0.5$, and find that allowing a small negative signal is enough to reach parity with reconstruction MSE on smooth activations or activations that contain a non-zero signal almost everywhere.

\begin{table}[H]
\caption{Experimental setup.}
\label{tab:experimental-setup}
\begin{tabular}{@{}clrr@{}}
\toprule
                       &               & \multicolumn{1}{c}{MNIST} & \multicolumn{1}{c}{CIFAR10} \\ \midrule
\multirow{5}{*}{Data}  & Resolution    & $28\times28$ (grayscale)                     & $32\times32$ (RGB)                       \\
                       & Size          & $70K$                       & $60K$                         \\
                       & Fixed size    & $10K$                       & $5K$                         \\
                       & Shadow size   & $59K$                       & $54K$                         \\
                       & Test targets      & $1K$                        & $1K$                          \\ \midrule
\multirow{4}{*}{$\theta, \bar{\theta}$} & Type          & MLP                       & CNN                         \\
                       & Architecture  & 1-hidden layer, width $10$         & \Cref{tab: cifar10_released_model}                    \\
                       & Activations   & ELU                       & ELU                           \\
                       & Parameters    & $8K$                        & $55K$                         \\ \midrule
\multirow{4}{*}{$\phi$}   & Type          & MLP                       & Transposed CNN              \\
                       & Architecture  & 2-hidden layers, width $1K$        & \Cref{tab: cifar10_attack_model}                   \\
                       & Activations   & ReLU                      & ReLU                           \\
                       & Parameters    & $9.7M$                      & $226M$                        \\ \midrule
\multirow{5}{*}{$A$}    & Algorithm     & GD+Momentum                      & GD+Momentum                        \\
                       & Loss          & Cross-entropy     &  Cross-entropy       \\
                       & Learning rate & $0.2$                       & $0.01$                        \\
                       & Momentum      & $0.9$                       & $0.9$                         \\
                       & Epochs        & $100$                       & $100$                         \\ \midrule
\multirow{6}{*}{$R$}     & Algorithm     & RMSProp                   & Adam                        \\
                       & Loss          & MAE+MSE                   & +LPIPS+Discriminator \\
                       & Learning rate & $0.001$                  & $0.0001$                    \\
                       & Weight decay & $0$                  & $0.0001$                    \\
                       & Batch size    & $128$                       & $128$                         \\
                       & Epochs        & $100$                       & $1000$                        \\ \bottomrule
\end{tabular}
\end{table}

\begin{table}[H]
\centering
\caption{CIFAR-10 released model, $\theta$.}
\label{tab: cifar10_released_model}
\begin{tabular}{ll}
\toprule
Layer   & Parameters\\
\midrule
Convolution & $16$ filters of $4\times4$, strides $2$ \\
Convolution & $32$ filters of $4\times4$, strides $1$ \\
Fully connected & $10$ units \\
Softmax & $10$ units \\
\bottomrule
\end{tabular}
\end{table}

\begin{table}[H]
\centering
\caption{CIFAR-10 reconstructor network, $\phi$.}
\label{tab: cifar10_attack_model}
\begin{tabular}{ll}
\toprule
Layer   & Parameters\\
\midrule
Fully connected & $4096$ units \\
Reshape & $64\times64$ \\
Transposed convolution & $32$ filters of $5\times5$, strides $2$ \\ 
Transposed convolution & $3$ filters of $5\times5$, strides $2$ \\ 
\bottomrule
\end{tabular}
\end{table}

\begin{table}[H]
\centering
\caption{CIFAR-10 attack PatchGAN Discriminator model.}
\label{tab: cifar10_attack_discrim_model}
\begin{tabular}{ll}
\toprule
Layer   & Parameters\\
\cmidrule{1-2}
Convolution & $64$ filters of $4\times4$, stride $2$ \\
Convolution & $128$ filters of $4\times4$, stride $2$ \\
Convolution & $256$ filters of $4\times4$, stride $2$ \\
Convolution & $512$ filters of $4\times4$, stride $1$ \\
Convolution & $1$ filter of $4\times4$, stride $1$ \\
\bottomrule
\end{tabular}
\end{table}

\arxiv{
\begin{table*}[]
\centering
\caption{Reconstruction metrics for different released model learning hyperparameters on the CIFAR-10 dataset. We also include a column denoting the membership inference AUC for the released model using \cite{song_2020} and show that vulnerability to reconstruction attacks occurs even when standard membership inference attacks pose little risk. For reference the nearest neighbor oracle distance for the CIFAR-10 dataset is $0.0291$, and so we judge a reconstruction to be successful if it is below this value. The Adam optimizer is set with a learning rate of $0.002$.}
\label{tab: cifar10_master_table}
\resizebox{\textwidth}{!}{
\begin{tabular}{lcCCCCCCc}
\toprule
\multirow{2}{*}{Fixed set size}       & \multirow{2}{*}{Optimizer} & \multirow{2}{*}{Training epochs}                      & \multirow{2}{*}{MSE} & \multirow{2}{*}{LPIPS} & \text{Released model} & \text{Released model} & \multirow{2}{*}{Membership AUC} & \multirow{2}{*}{OOD shadow target}\\
       & & & & & \text{train accuracy} & \text{test accuracy} &  & \\
\cmidrule{1-9}
     \multirow{6}{*}{$1K$}  & \multirow{3}{*}{GD + momentum} & 100               &   0.0041       &  0.1861     & 0.451                         & 0.327                        &        0.61        & \multirow{18}{*}{\xmark}  \\
                          &              & 250                                &  0.0126        & 0.3117       & 0.759                         & 0.324                        &       0.80         &                                          \\
                          &              & 500                               &          0.0170 & 0.3594	       & 0.998                         & 0.328                        &       0.89         &                                        \\
                          \cmidrule{3-8}
                          & \multirow{3}{*}{Adam}             & 100          &          0.0052 & 0.2042	       & 0.952                          & 0.299                        &       0.87         &                                        \\
                          &              & 250                               &          0.0065 & 0.2355	       & 1.000                          & 0.300                        &      0.92          &                                        \\
                          &              & 500                                &          0.0068 & 0.2428	       & 1.000                          & 0.303                        &      0.91          &                                        \\
\cmidrule{2-8}
     \multirow{6}{*}{$5K$}  & \multirow{3}{*}{GD + momentum}             & 100 &          0.0049 & 0.2070       & 0.392                         & 0.388                        &        0.51        &                                         \\
                      &              &                250                &          0.0201 & 0.3863	       & 0.461                         & 0.401                        &     0.60           &                                         \\
         &              & 500                                &          0.0761 & 0.5272	       & 0.610                         & 0.451                        &          0.61      &                                         \\
         \cmidrule{3-8}
                          & \multirow{3}{*}{Adam}             & 100          &          0.0052 & 0.2179       & 0.546                         & 0.395                        &        0.63        &                                         \\
                          &              &   250                             &          0.0062 & 0.2148	       & 0.844                         & 0.418                        &     0.75           &                                         \\
                          &              & 500                                &          0.0143 & 0.3460       & 0.999                         & 0.421                        &       0.87         &                                         \\
\cmidrule{2-8}
     \multirow{6}{*}{$10K$} & \multirow{3}{*}{GD + momentum}             & 100 &          0.0209 & 0.4184	      & 0.401                         & 0.397                        &        0.52        &                                         \\
                          &              &            250                    &          0.0385 & 0.4851	       & 0.420                         & 0.410                        &        0.52        &                                          \\
                          &              & 500                                &          0.0761 & 0.5287	       & 0.514                         & 0.473                        &          0.54      &                                         \\
                          \cmidrule{3-8}
                          & \multirow{3}{*}{Adam}             & 100          &          0.0081 & 0.2501       & 0.477                         & 0.443                        &      0.56          &                                         \\
                          &              & 250                                &          0.0208 & 0.3950       & 0.662                         & 0.470                        &         0.66       &                                         \\
                          &              & 500                               &          0.0357 & 0.4930        & 0.937                           & 0.474	                         & 0.73                &                                         \\
\cmidrule{2-9}
     \multirow{6}{*}{$50K$} & \multirow{3}{*}{GD + momentum}             &  100 &          0.0039 & 0.1929	      & 0.419                          & 0.400                         & 0.55               & \multirow{6}{*}{CIFAR-100}              \\
                          &              & 500                                &          0.0563 & 0.5260       & 0.503                          & 0.501                         &   0.51             &                                         \\
                          &             & 2000                                &          0.0693 & 0.5350       & 0.726                          & 0.633                         & 0.55               &                                         \\
                          \cmidrule{3-8}
                          & \multirow{3}{*}{Adam}             & 100          &          0.0109 & 0.3042	      & 0.422                               &  0.436                            &       0.51         &                                         \\
                          &              &    500                            &          0.0653 & 0.5585	       & 0.546                               &           0.541                   & 0.52                &                                         \\
                          &             & 2000                                &          0.0721 & 0.5727	       &                              0.730 & 0.596	                              &    0.57           &                                         \\
\bottomrule
\end{tabular}
}
\end{table*}
}

\arxiv{
\begin{figure*}[t]
\captionsetup{width=0.99\textwidth}
  \centering
\begin{subfigure}[t]{.24\textwidth}
\centering
    \includegraphics[width=0.99\linewidth]{l2_vs_released_model_batch_size_lr_0.01_momentum_0.0.pdf}
    \caption{Learning rate $0.01$, momentum $0$.}
    \label{fig:l2_lr_001_mom_0}
\end{subfigure}\begin{subfigure}[t]{.24\textwidth}
\centering
    \includegraphics[width=0.99\linewidth]{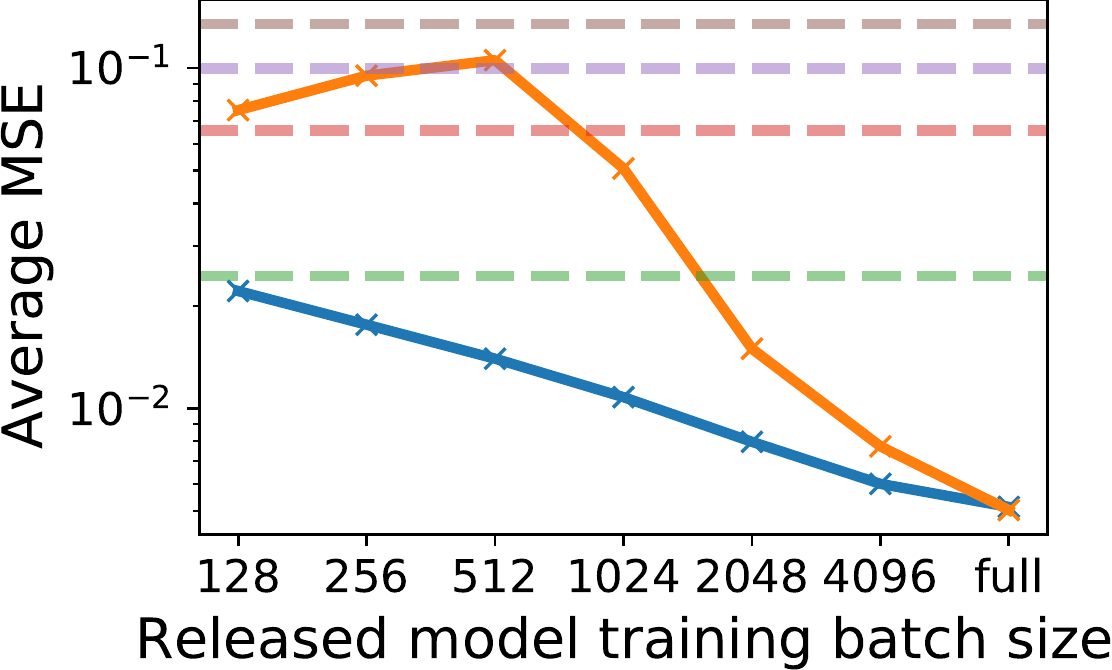}
    \caption{Learning rate $0.01$, momentum $0.9$.}
    \label{fig:l2_lr_001_mom_09}
\end{subfigure}\begin{subfigure}[t]{.24\textwidth}
\centering
    \includegraphics[width=0.99\linewidth]{l2_vs_released_model_batch_size_lr_0.2_momentum_0.0.pdf}
    \caption{Learning rate $0.2$, momentum $0$.}
    \label{fig:l2_lr_02_mom_0}
\end{subfigure}\begin{subfigure}[t]{.24\textwidth}
\centering
    \includegraphics[width=0.99\linewidth]{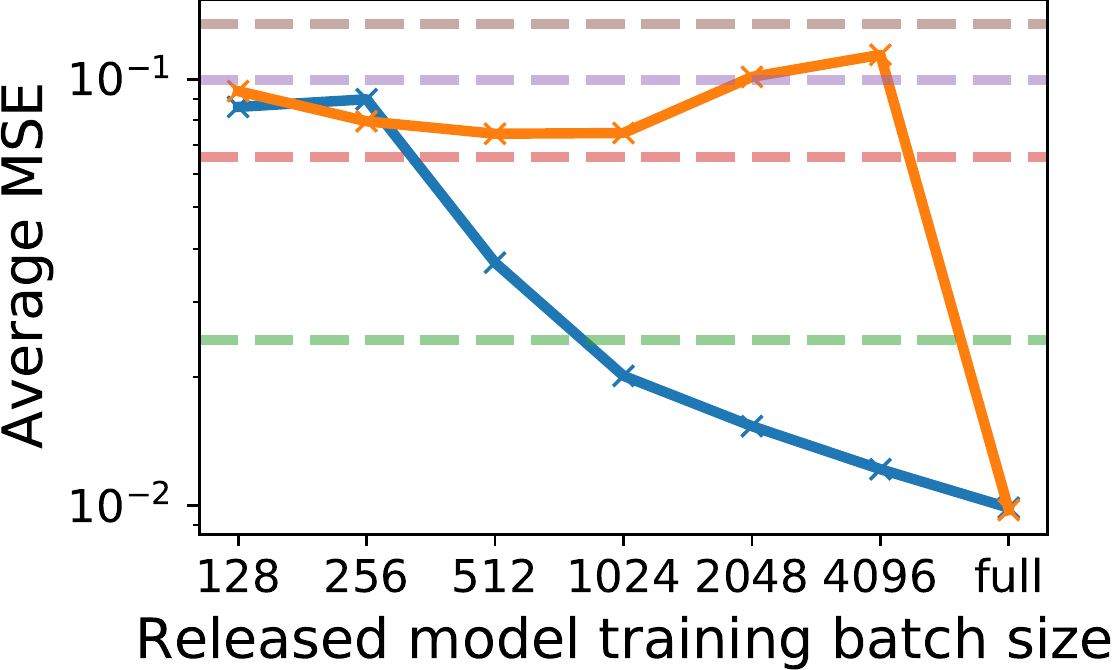}
    \caption{Learning rate $0.2$, momentum $0.9$.}
    \label{fig:l2_lr_02_mom_09}
\end{subfigure}

\begin{subfigure}[t]{.24\textwidth}
\centering
    \includegraphics[width=0.99\linewidth]{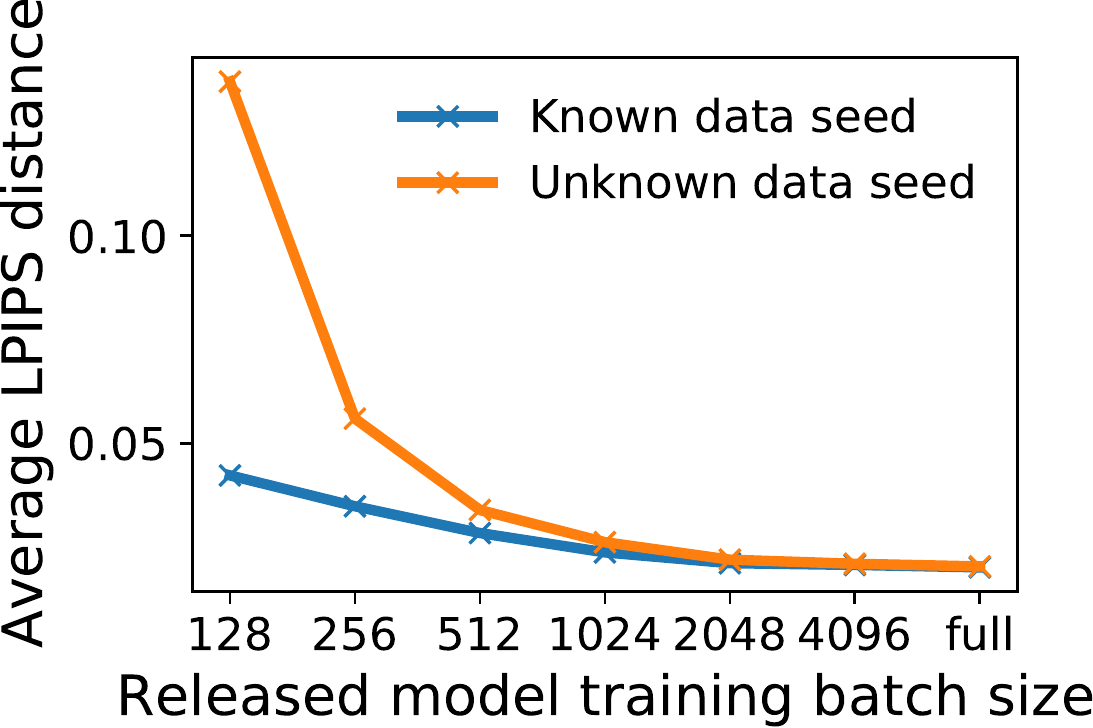}
    \caption{Learning rate $0.01$, momentum $0$.}
    \label{fig:lpips_lr_001_mom_0}
\end{subfigure}\begin{subfigure}[t]{.24\textwidth}
\centering
    \includegraphics[width=0.99\linewidth]{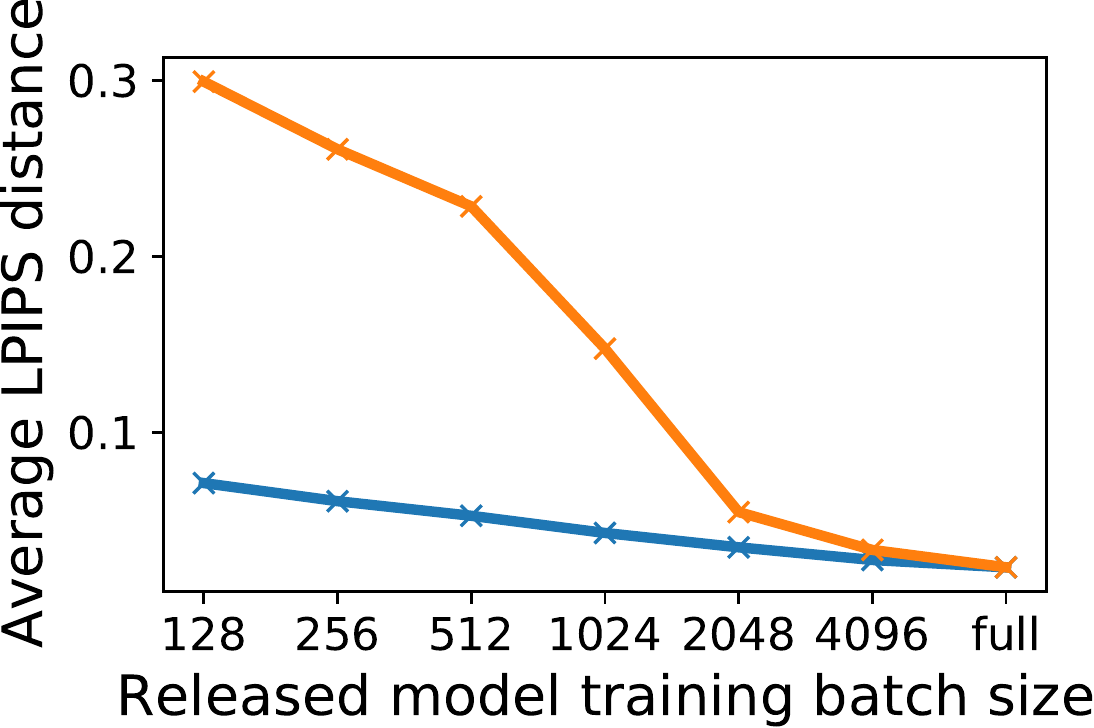}
    \caption{Learning rate $0.01$, momentum $0.9$.}
    \label{fig:lpips_lr_001_mom_09}
\end{subfigure}\begin{subfigure}[t]{.24\textwidth}
\centering
    \includegraphics[width=0.99\linewidth]{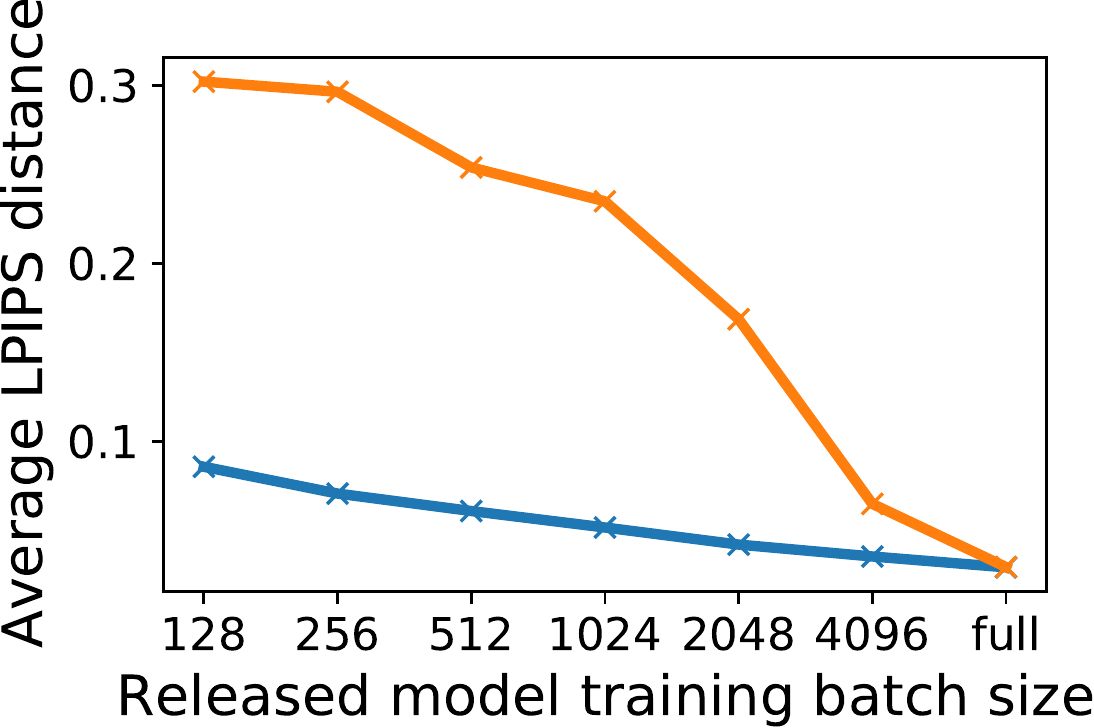}
    \caption{Learning rate $0.2$, momentum $0$.}
    \label{fig:lpips_lr_02_mom_0}
\end{subfigure}\begin{subfigure}[t]{.24\textwidth}
\centering
    \includegraphics[width=0.99\linewidth]{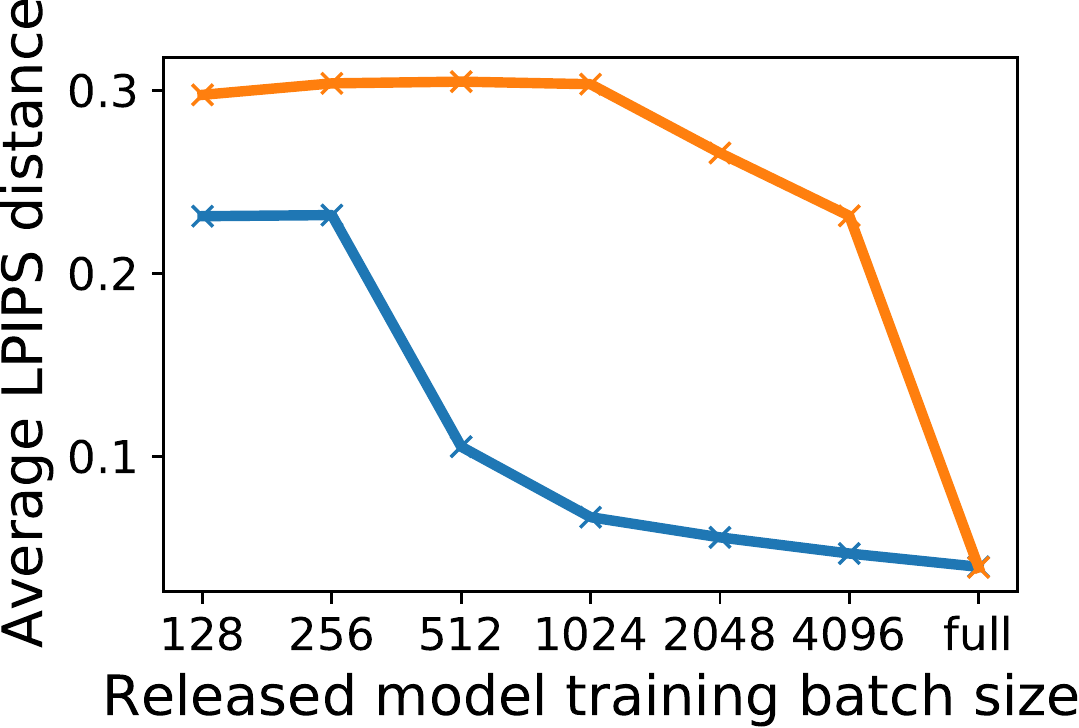}
    \caption{Learning rate $0.2$, momentum $0.9$.}
    \label{fig:lpips_lr_02_mom_09}
\end{subfigure}

\begin{subfigure}[t]{.24\textwidth}
\centering
    \includegraphics[width=0.99\linewidth]{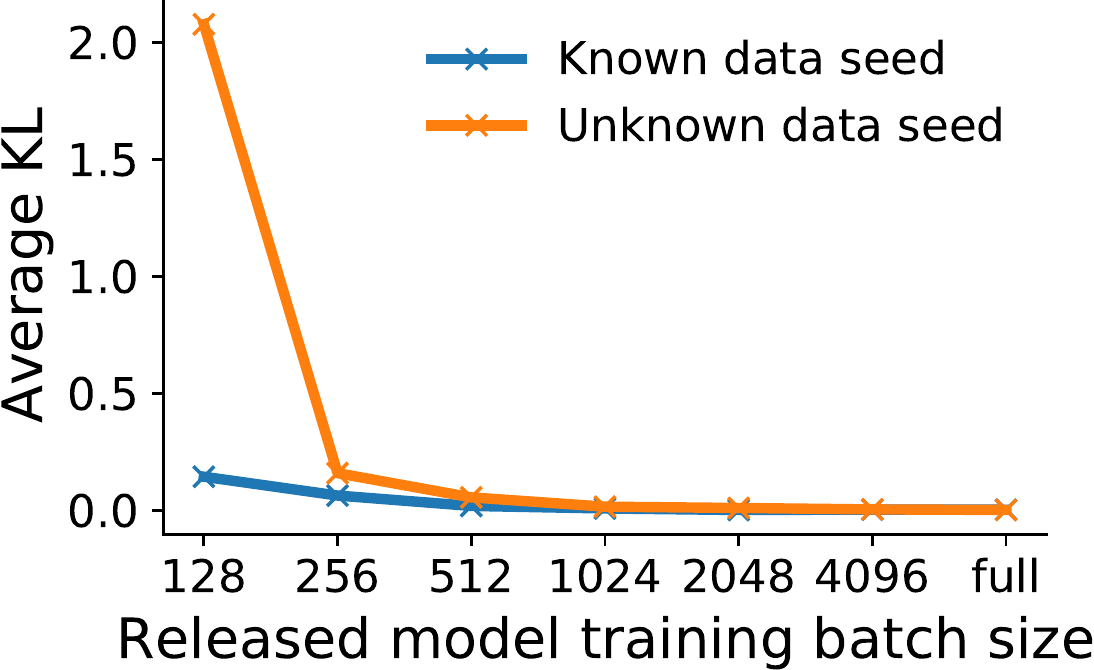}
    \caption{Learning rate $0.01$, momentum $0$.}
    \label{fig:kl_lr_001_mom_0}
\end{subfigure}\begin{subfigure}[t]{.24\textwidth}
\centering
    \includegraphics[width=0.99\linewidth]{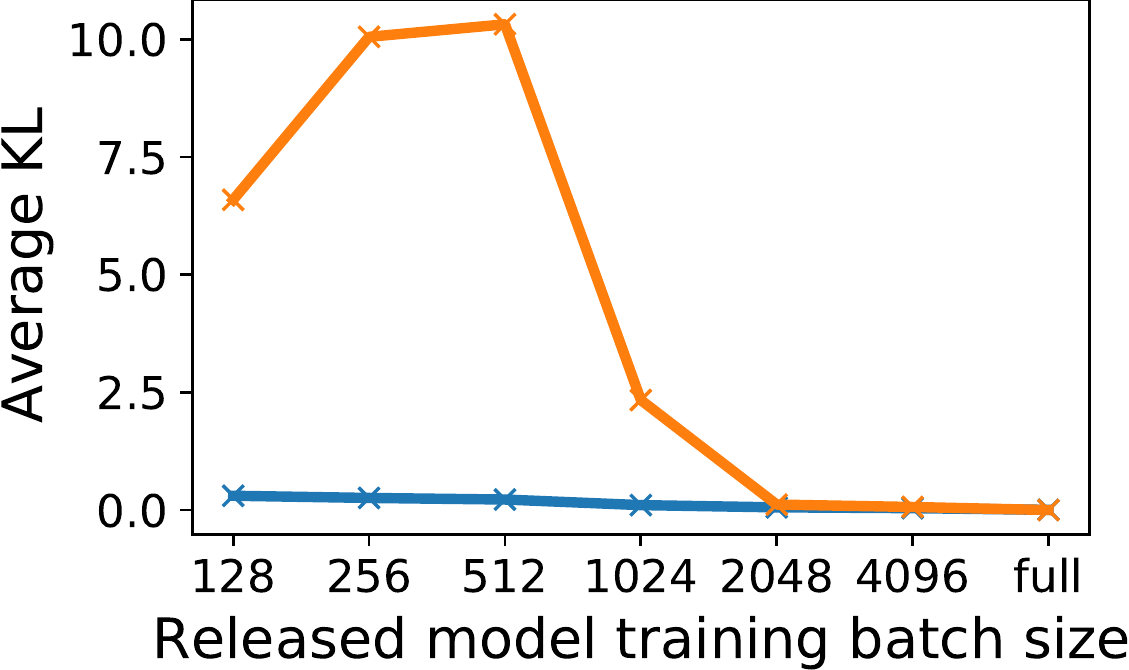}
    \caption{Learning rate $0.01$, momentum $0.9$.}
    \label{fig:kl_lr_001_mom_09}
\end{subfigure}\begin{subfigure}[t]{.24\textwidth}
\centering
    \includegraphics[width=0.99\linewidth]{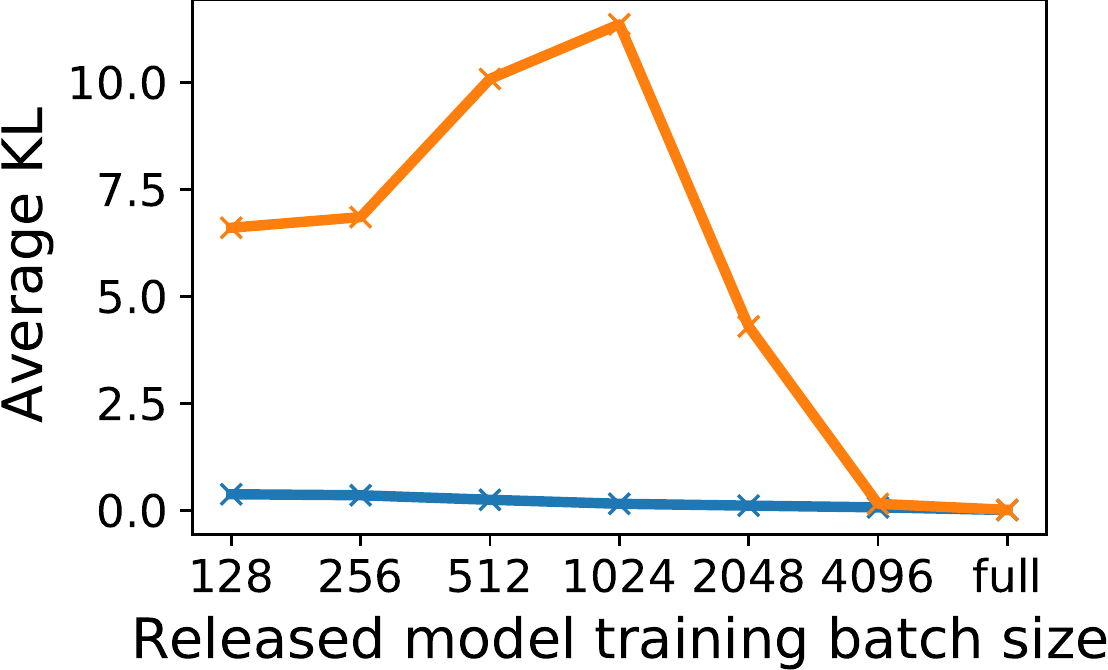}
    \caption{Learning rate $0.2$, momentum $0$.}
    \label{fig:kl_lr_02_mom_0}
\end{subfigure}\begin{subfigure}[t]{.24\textwidth}
\centering
    \includegraphics[width=0.99\linewidth]{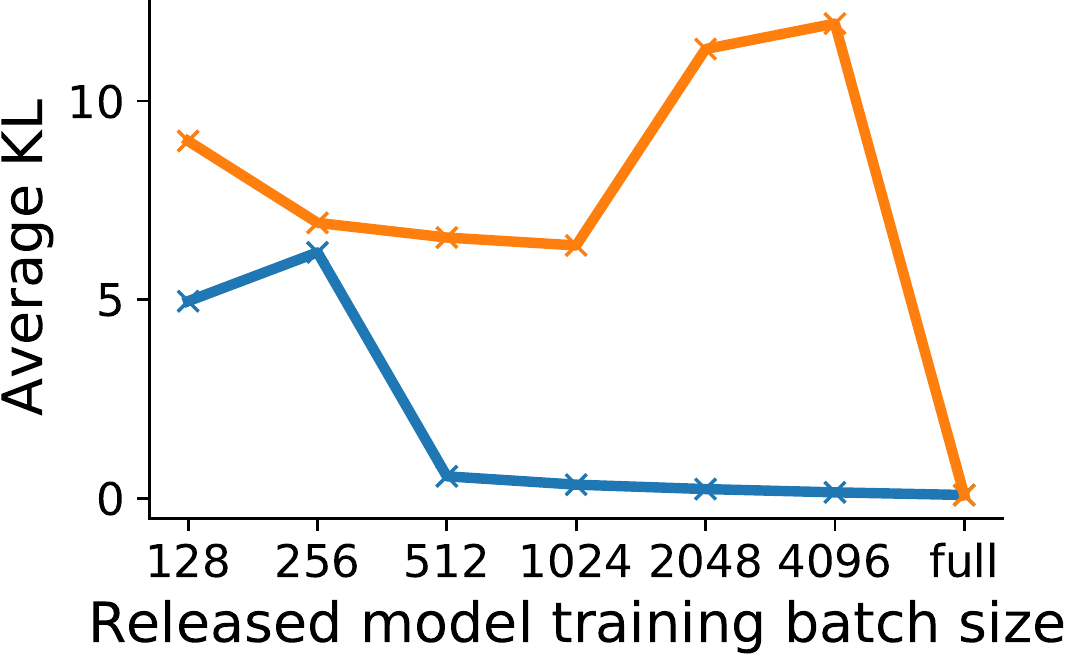}
    \caption{Learning rate $0.2$, momentum $0.9$.}
    \label{fig:kl_lr_02_mom_09}
\end{subfigure}

\caption{How randomness from data sub-sampling affects reconstructions on MNIST. Each row shows a different metric (MSE, LPIPS or KL) as a function of the batch size used in training the released model, for settings when the adversary does and does not know the seed from which data sub-sampling is initiated. Each column corresponds to a released model trained with a different learning rate and momentum setting.} 
\label{fig:mnist_batch_size_appendix}
\end{figure*}
}

\arxiv{
\begin{figure*}[t]
\captionsetup{width=1.\linewidth}
  \centering
    \includegraphics[width=0.8\linewidth]{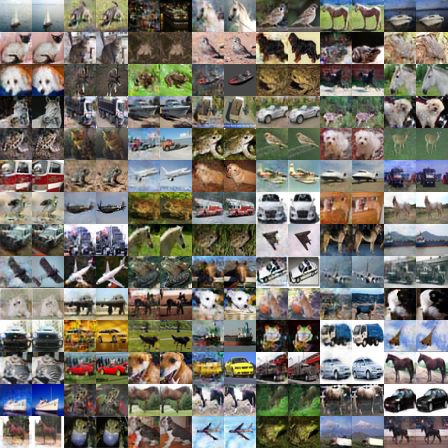}
\caption{More examples of CIFAR-10 reconstructions in the default attack setting. Odd columns are reconstructions and even columns are targets.}
\label{fig:cifar10_more_examples}
\end{figure*}
} 
\end{document}